\newtheorem*{theorem*}{Theorem}
\newtheorem{theorem}{Theorem}[section]
\newtheorem{lemma}[theorem]{Lemma}
\newtheorem*{proposition*}{Proposition}
\newtheorem{maintheorem}{Theorem}
\newtheorem{claim}[theorem]{Claim}
\newtheorem{corollary}[theorem]{Corollary}
\newtheorem{definition}[theorem]{Definition}
\newcommand{\ignore}[1]{}
\newcommand{\enote}[1]{} \newcommand{\knote}[1]{}
\newcommand{\rnote}[1]{}
\newcommand{\argmax}{\operatornamewithlimits{argmax}}
\newcommand{\cD}{{\mathcal{B}}}
\newcommand{\cB}{{\mathcal{B}}}
\newcommand{\cR}{{\mathcal{R}}}
\newcommand{\cT}{{\mathcal{T}}}
\newcommand{\cF}{{\mathcal{F}}}
\newcommand{\cG}{{\mathcal{G}}}
\newcommand{\cK}{{\mathcal{K}}}
\renewcommand{\P}[1]{{\mathbb{P}}\left[{#1}\right]}
\newcommand{\CondP}[2]{{\mathbb{P}}\left[{#1}\middle\vert{#2}\right]}
\newcommand{\E}[1]{{\mathbb{E}}\left[{#1}\right]}
\newcommand{\ind}[1]{{\bf 1}_{#1}}
\newcommand{\CondE}[2]{{\mathbb{E}}\left[{#1}\middle\vert{#2}\right]}
\newcommand{\eps}{\epsilon}
\newcommand{\N}{\mathbb N} \newcommand{\R}{\mathbb R}
\newcommand{\half}{{\textstyle \frac12}}
\renewcommand{\phi}{\varphi}
\newcommand{\belief}{I}
\newcommand{\psignal}{W} 
\newcommand{\action}{A}
\newcommand{\dtv}{d_{TV}}
\newcommand{\strat}{Q}
\newcommand{\stratp}{\bar{\strat}}
\newcommand{\stratx}{R}
\newcommand{\stratpx}{\bar{\stratx}}
\newcommand{\util}{u}
\newcommand{\neigh}[1]{N(#1)}
\newcommand{\hist}[2]{\action^{\neigh{#1}}_{[0,#2)}}
\newcommand{\disc}{\lambda}
\newcommand{\optset}{C}
\newcommand{\bestres}{B}
\newcommand{\resp}{r}
\newcommand{\dist}{\Delta}
\newcommand{\gs}{\mathcal{GS}}
\newcommand{\info}{\mathcal{F}}
\newcommand{\learnprob}{p}
\newcommand{\mapS}{\hat{S}}
\newcommand{\estest}{R}
\newcommand{\infdep}{\mathrm{dep}}
\newcommand{\pstar}{p^*(\mu_0,\mu_1)}
\newcommand{\responseSpace}{\mathscr{H}}
\newcommand{\responseSubset}{\mathcal{H}}
\newcommand{\cJ}{\mathcal{J}}
\newcommand{\cM}{\mathcal{M}}
\newcommand{\graphs}{\mathcal{E}}
\newcommand{\scg}{\mathcal{SCG}}
\newcommand{\eqs}{\mathcal{EQ}}
\begin{document}
\title{Strategic Learning and the Topology of Social Networks}

\author{Elchanan Mossel\footnote{University of Pennsylvania and
    University of California, Berkeley. E-mail:
    mossel@stat.berkeley.edu. Supported by NSF award DMS 1106999, by
    ONR award N000141110140 and by ISF grant 1300/08.}~, Allan
  Sly\footnote{University of California, Berkeley. Supported by a
    Sloan Research Fellowship in mathematics and by NSF award DMS
    1208339} and Omer Tamuz\footnote{Massachusetts Institute of
    Technology and Microsoft Research New England. This research was
    supported in part by a Google Europe Fellowship.}}

\maketitle
\begin{abstract} 
  We consider a group of strategic agents who must each repeatedly
  take one of two possible actions. They learn which of the two
  actions is preferable from initial private signals, and by observing
  the actions of their neighbors in a social network.


  We show that the question of whether or not the agents learn
  efficiently depends on the topology of the social network. In
  particular, we identify a geometric ``egalitarianism'' condition on
  the social network that guarantees learning in infinite networks, or
  learning with high probability in large finite networks, in any
  equilibrium. We also give examples of non-egalitarian networks with
  equilibria in which learning fails.
  
  \noindent{\bf Keywords:} Social learning, informational
  externalities, social networks, aggregation of information.

\end{abstract}

\section{Introduction}

Consider a group in which each agent faces a repeated choice between
two actions. Initially, the information available to each agent is a
private signal, which gives a noisy indication of which is the correct
action. As time progresses, the agents learn more by observing the
actions of their neighbors in a social network. They do not, however,
obtain any direct indication of the payoffs from their actions.  For
example, their choice could be one of lifestyle, where one can learn
by observing the actions of others, but where payoffs (e.g.,
longevity) are only revealed after a large amount of
time\footnote{Consider parents who, each night, decide whether to lay
  their baby to sleep on its back or on its stomach. They can learn by
  observing the actions of their peers, but presumably do not receive
  any direct feedback regarding the effect of their actions on the
  baby's health.}.

We are interested in the question of {\em learning}, or {\em
  aggregation of information}: When is it the case that, through
observing each other, the agents exchange enough information to
converge to the correct action? In particular, we are interested in
the role that the geometry of the social network plays in this
process, and in its effect on learning. Which social networks enable
the flow of information, and which impede it?  This problem has been
studied extensively in the literature, using mostly boundedly-rational
or heuristic approaches~\cite{DeGroot:74, EllFud:95, BalaGoyal:96,
  DeMarzo:03, golub2010naive, jadbabaie2013information}. However, the
basic question of how {\em strategic} agents behave in this setting
has been largely ignored\footnote{Notable exceptions
  are~\cite{mossel2012asymptotic} and~\cite{arieli2013inferring}; we
  discuss these below.}, perhaps because the model is mathematically
difficult to approach, or because strategic behavior seems
unfeasible\footnote{See, e.g., Bala and Goyal~\cite{BalaGoyal:96}:
  {\em ``to keep the model mathematically tractable... this
    possibility [strategic agents] is precluded in our
    model... simplifying the belief revision process
    considerably.''}}. This article aims to fill this gap. We define a
notion of egalitarianism for social networks, and show that when agents
are strategic, learning always occurs on egalitarian social networks,
and may not occur on those that are not egalitarian. Interestingly,
these results broadly resemble those of some of the heuristic models
(see, e.g., Golub and Jackson~\cite[Theorem 1]{golub2010naive}).

We call a social network graph {\em $(d,L)$-egalitarian} if it
satisfies the following two conditions: (1) At most $d$ edges leave
each node (that is, each agent observes at most $d$ others), and (2)
whenever there is an edge from node $i$ to $j$, there is a path {\em
  back} from $j$ to $i$, of length at most $L$ (that is, no agent is
too far removed from those who observe her). In this article we show
that on connected $(d,L)$-egalitarian graphs the agents learn the
correct action, and give examples of non-egalitarian graphs in which
learning fails.

Our model is a discounted, repeated game with incomplete
information. We consider a state of nature $S$ which is equal to
either $0$ or $1$, with equal probability. Each agent receives a
private signal that is independent and identically distributed
conditioned on $S$, and is correlated with $S$. In each discrete time
period $t$, each agent $i$ chooses an action $\action^i_t$ taking
values in $\{0,1\}$. The information available to her is her own
private signal, as well as the actions of her social network neighbors
in the previous time periods. Agent $i$'s stage utility at time period
$t$ is equal to $1$ if $\action^i_t=S$ and to $0$ otherwise, and is
discounted exponentially, by a common rate. We consider general Nash
equilibria, and show that they indeed exist
(Theorem~\ref{thm:equi-exists}); this does not follow from standard
results.

We say that agent $i$ {\em learns} $S$ when $\action^i_t$ is equal to
$S$ from some time on, and that {\em learning} takes place when all
agents learn $S$. Our main result
(Theorem~\ref{thm:learning-finite-intro}) is that on connected
$(d,L)$-egalitarian graphs, in any equilibrium, learning occurs with
high probability on large graphs, and with probability one on infinite
graphs. We do not impose unbounded likelihood ratios: learning occurs
in egalitarian networks even for weak - but informative - signals
(contrast this with the sequential learning case of Smith and
S{\o}rensen~\cite{smith2000pathological}). Note that this applies to
all Nash equilibria, and therefore in particular to any perfect
Bayesian equilibria.  We also provide examples of equilibria on large
non-egalitarian networks in which, with non-vanishing probability, the
agents do not learn.

Our results require a smoothness condition on private signals: each
private belief (the probability that $S=1$, conditioned on the private
signal) must have a non-atomic distribution. This ensures that agents
are never (i.e., with zero probability) indifferent.  Our results do
not, in general, hold without this condition; indifference can impede
the flow of information (see, e.g.,~\cite[Example
A.1]{mossel2012asymptotic}). While real life signals are arguably
always discrete or even finite, we propose that even with this
requirement it is still possible to model or approximate a large range
of signals.

The model we study makes heavy demands on the agents in terms of
rationality, common knowledge, and human computation: agents are
assumed to maximize a complicated expected utility function, to know
the structure of the entire social network, and to precisely make
complicated inferences regarding the state of nature. While our
approach is standard in this literature (see,
e.g.,~\cite{GaleKariv:03, rosenberg2009informational,
  AcemMuntDahlLobelOzd:08}), these features of our model prompt us to
present our results as benchmarks, rather than as predictive
statements about the world.


\vspace{0.25in}

The rest of this article proceeds as follows. In
section~\ref{sec:illustrative} we discuss an example of a
$(d,L)$-egalitarian graph, using it to provide intuition into the
ideas behind our main result
(Theorem~\ref{thm:learning-finite-intro}). In
Section~\ref{sec:non-learning} we provide two examples of
non-egalitarian graphs on which the agents fail to learn. In
Section~\ref{sec:main-defs} we introduce our model formally. In
Section~\ref{sec:agreement} we explore the question of agreement and
show that indeed the agents all converge to the same action.
Section~\ref{sec:topologies} includes our main technical contribution:
a topology on equilibria of this game, as seen from the point of view
of a particular agent. In Section~\ref{sec:learning} we prove
Theorem~\ref{thm:learning-finite-intro}, and
Section~\ref{sec:conclusion} provides a conclusion.

\subsection{Related literature}
Learning on social networks is a widely studied field; a complete
overview is beyond the scope of this paper, and so we shall note only
a few related studies.

Bala and Goyal~\cite{BalaGoyal:96} study a similar model, and show
results of learning or non-learning in different cases. Their model is
boundedly-rational, with agents not taking into account the choices of
their neighbors when forming their beliefs. Other notable bounded
rationality models of learning through repeated social interaction are
those of DeGroot~\cite{DeGroot:74}, Ellison and
Fudenberg~\cite{EllFud:95}, DeMarzo, Vayanos and
Zwiebel~\cite{DeMarzo:03}, Golub and Jackson~\cite{golub2010naive} and
recently Jadbabaie, Molavi and
Tahbaz-Salehi~\cite{jadbabaie2013information}. Interestingly, a
recurring theme is that learning is facilitated by graphs which are
egalitarian, although notions of egalitarianism differ across models
(see, e.g., Golub and Jackson~\cite[Property 2]{golub2010naive}).

In a previous paper~\cite{mossel2012asymptotic}, we consider the same
question, but for myopic agents. The analysis in that case is far
simpler and does not require the technical machinery that we construct
in this article. More importantly, the conditions for learning are
qualitatively different for myopic agents, as compared to those for
strategic agents: in the myopic setting, the upper bound on the number
of observed neighbors is not needed. In fact, myopic agents learn with
high probability on networks with no uniform upper bound. Thus there
are examples of graphs on which myopic agents learn but strategic
agents do not. We elaborate on this in our second example of
non-learning, in Section~\ref{sec:non-learning}.

In concurrent work by Arieli and
Mueller-Frank~\cite{arieli2013inferring}, learning results are derived
in a strategic setting with richer actions spaces; they study models
in which actions are rich enough to reveal beliefs, and show that in
that case learning occurs under general conditions, and in particular
for any graph topology.  To the best of our knowledge, no previous
work considers learning, in repeated interaction, on social networks,
in a fully rational, strategic setting.

The study of {\em agreement} (rather than learning) on social networks
is also related to our work, and in fact we make crucial use of the
work of Rosenberg, Solan and
Vieille~\cite{rosenberg2009informational}, who prove an agreement
result for a large class of games with informational externalities
played on social networks.  This is a field of study founded by
Aumann's ``Agreeing to disagree'' paper~\cite{aumann1976agreeing}, and
elaborated on by Sebenius and Geanakoplos~\cite{sebenius1983don},
McKelvey and Page~\cite{mckelvey1986common}, Parikh and
Krasucki~\cite{parikh1990communication}, Gale and
Kariv~\cite{GaleKariv:03}, M{\'e}nager~\cite{menager2006consensus} and
recently Mueller-Frank~\cite{mueller2010general}, to name a few. The
moral of this research is that, by-and-large, rational agents
eventually reach consensus, even in strategic settings.  We elaborate
on the work of Rosenberg, Solan and
Vieille~\cite{rosenberg2009informational} and show that when private
signals are non-atomic then, asymptotically, agents agree on best
responses (Theorem~\ref{thm:agreement}). This agreement result is an
important ingredient of our main learning result
(Theorem~\ref{thm:learning-finite-intro}).

Another strain of related literature is that of {\em herd behavior},
started by Banerjee~\cite{Banerjee:92} and Bikhchandani, Hirshleifer
and Welch~\cite{BichHirshWelch:92}, with significant generalizations
and further analysis by Smith and
S{\o}rensen~\cite{smith2000pathological}, Acemoglu, Dahleh, Lobel and
Ozdaglar~\cite{AcemMuntDahlLobelOzd:08} and recently Lobel and
Sadler~\cite{lobel2012social}. Here, the state of nature and private
signals are as in our model, and agents are rational. However, in
these models agents act sequentially rather than repeatedly. The same
informational framework is also shared by models of committee behavior
and committee mechanism design (cf.\ Laslier and
Weibull~\cite{laslier2008committee}, Glazer and
Rubinstein~\cite{glazer1998motives}).

\subsection{Acknowledgments}
We would like to thank Shahar Kariv for introducing us to this
field. For commenting on drafts of this paper we would like to thank
Nageeb Ali, Ben Golub, Eva Lyubich, Markus Mobius, Ariel Rubinstein,
Ran Shorrer, Glen Weyl, and especially Scott Kominers.

\section{An illustrative example}
\label{sec:illustrative}

\begin{figure}[h]
  \centering
  \includegraphics[scale=0.8]{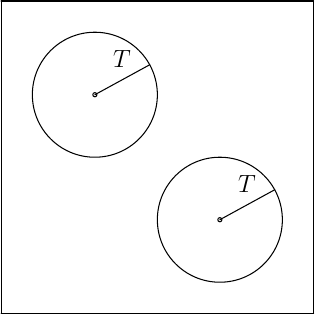}
  \caption{\label{fig:z2} Learning in symmetric equilibria
    on the two dimensional grid.}
\end{figure}

To provide some intuition for why agents learn on egalitarian graphs
(Theorem~\ref{thm:learning-finite-intro}) we consider the simple,
particular case that the graph is the undirected, infinite two
dimensional grid, in which each agent has four neighbors. This is a
$(4,1)$-egalitarian graph, and so
Theorem~\ref{thm:learning-finite-intro} says that the agents learn
$S$, or that, equivalently, in any equilibrium the actions of every
agent converge to $S$. We now explain why this is indeed the case,
under the further restriction to symmetric equilibria.

The first step in proving that all agents converge to $S$ is to show
that all agents converge to the same action, which we do in
Theorem~\ref{thm:agreement}. This result uses - and perhaps elucidates
- an important theorem of Rosenberg, Solan and
Vieille~\cite{rosenberg2009informational}, who consider the question
of when agents eventually agree, regardless of whether or not they
learn.  For a large class of games which includes the one we consider,
they show that agents can disagree only if they are indifferent. Our
additional requirement of non-atomic private signals allows us to rule
out the possibility of indifference, and show that all agents converge
to the same action\footnote{In fact, Theorem~\ref{thm:agreement} does
  not exclude the case that no agent converges at all; we will, for
  now, ignore this possibility.}.

Having established that all agents converge to the same action, we use
the fact that the graph is symmetric, as is the equilibrium. Hence all
agents converge at the same ex-ante rate, and therefore, at some large
enough time $T$, any particular agent will have converged, except with
some very small probability $\eps$. Of course, since the graph is
infinite, there will be at time $T$ many agents who have yet to
converge. However, if we consider any one agent (or two, as we do
immediately below), the probability of non-convergence is
negligible.

Now, consider two agents which are more than $2T$ edges apart on the
graph (see Figure~\ref{fig:z2}), and condition on the state of nature
$S$ equaling one. The two agents' actions at time $T$ are independent
random variables (conditioned on the state of nature), as they are too
far apart for any information to have been exchanged between them. On
the other hand, since all agents converge to the same action, these
independent random variables are equal (except with probability $\sim
2\eps$); the two agents somehow, with high probability, reach the same
conclusion independently.

Now, two independent random variables that are equal must be
constant. The agents' actions at time $T$ are equal with high
probability and independent conditioned on $S$, and so are with high
probability equal to some fixed action. Since the agents' signals are
informative, this action is more likely to equal the state of nature
than not (Claim~\ref{clm:p-half}). Since this holds for every
$\eps>0$, every agent's limit action must equal the state of nature.

\subsection{General egalitarian graphs}
The formalization and extension of this intuition to general
egalitarian graphs and general (i.e., non-symmetric) equilibria
requires a significant technical effort, and in fact the construction
of novel tools for the analysis of games on networks; to this we
devote most of the rest of this article. We now provide an overview of
the main ideas.

The main notion we use is one of {\em compactness}. The two
dimensional grid graph ``looks the same'' from the point of view of
every node: there is only one ``point of view'' in this graph. Such
graphs as known as {\em transitive graphs} in the mathematics
literature. Note that this is the only property of the grid that we
used in the proof sketch above, and therefore the same idea can be
applied to all symmetric equilibria on infinite, connected transitive
graphs. 

We formalize a notion of an ``approximate points of view''. We show
that in particular, in an egalitarian graph, the nodes of the graph
can be grouped into a finite number of sets, where from each set the
graph ``looks {\em approximately} the same''. Formally, we construct a
topology in which the set of points of view in a graph is precompact
if and only if the graph is $(d,L)$-egalitarian for some $d$ and $L$
(Theorem~\ref{thm:compact-graph}). In this sense, egalitarianism, in
which the set of points of view is precompact, is a relaxation of
transitivity, in which the set of points of view is a
singleton. Indeed, transitivity is an extreme notion of
egalitarianism, by any reasonable definition of an egalitarian graph.

This property of egalitarian graphs allows us to apply the intuition
of the above example (or, more precisely, a similar intuition) to any
infinite, $(d,L)$-egalitarian graph
(Theorem~\ref{thm:compact-learning}).  The fact that general
equilibria are not symmetric is similarly treated by establishing that
the space of equilibria is compact (Claim~\ref{clm:cB-compact}). The
theorem on finite graphs is proved by reduction to the case of
infinite graphs.

Our main technical innovation is the construction of a topology on
equilibria of this game, as seen from the point of view of a
particular agent
(Section~\ref{sec:rooted-graph-strategy-profiles}). In this topology,
an equilibrium has a finite number of ``approximate points of view''
if and only if the graph is egalitarian (Claim~\ref{thm:conv-conv}).
This topology is also useful for showing that equilibria exist in the
case of an infinite number of agents, which requires a non-standard
argument (Theorem~\ref{thm:equi-exists}). This technique should be
applicable to the analysis of a large range of repeated, discounted
games on networks.

\section{Non-learning}
\label{sec:non-learning}
\begin{figure}[h]
  \centering
  \includegraphics[scale=0.8]{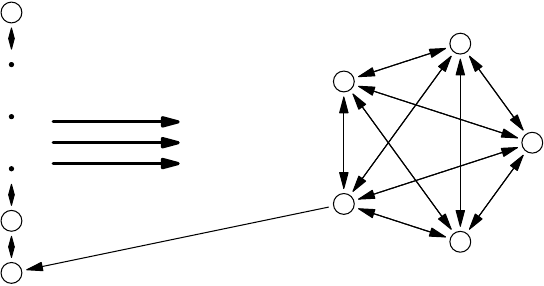}
  \caption{\label{fig:royal-family} The Royal Family. Each member of
    the public (on the left), observes each royal (on the right), as
    well as her next door neighbors. The royals observe each other,
    and one royal observes one member of the public. }
\end{figure}

We provide two example of non-egalitarian graphs in which the agents
do not learn. In the first example (Figure~\ref{fig:royal-family}),
inspired by Bala and Goyal's royal family graph~\cite{BalaGoyal:96},
the social network has two groups of agents: a ``royal family'' clique
of $R$ agents who all observe each other, and $n$ agents - the
``public'' - who are connected in an undirected chain, and
additionally can observe all the agents in the royal family. Finally,
a single member of the royal family observes one of the public, so
that the graph is connected\footnote{The graph is, in fact, {\em
    strongly connected}, meaning that there is a directed path
  connecting every ordered pair of agents.}. We think of $R$ as fixed
and consider the case of arbitrarily large $n$, or even infinite $n$.

While this graph satisfies condition (1) of egalitarianism, it
violates condition (2). Therefore,
Theorem~\ref{thm:learning-finite-intro} does not apply.  Indeed, in
the online appendix we construct an equilibrium for the game on this
network, in which the agents of the public ignore their own private
signals after observing the first action of the royal family, which
provides a much stronger indication of the correct action.  However,
the probability that the royal family is wrong is independent of $n$:
since the size of the royal family is fixed, with some fixed
probability every one of its members is mislead by her private signal
to choose the wrong action in the first period. Hence, regardless of
how large society is, there is a fixed probability that learning does
not occur.

\begin{figure}[h]
  \centering
  \includegraphics[scale=0.8]{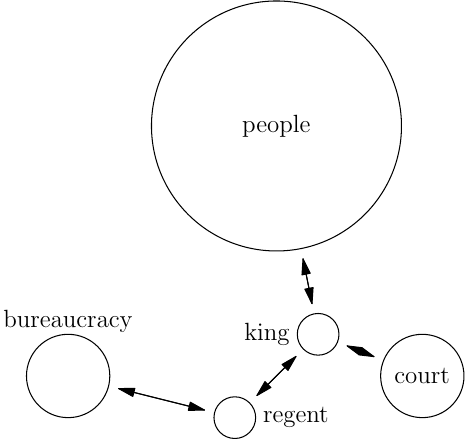}
  \caption{\label{fig:mad-royal-family-intro} The mad king. In this
    social network all edges are bi-directional. Each member of the
    people is connected only to the king, as is each member of the
    court. The members of the bureaucracy are connected only to the
    regent.}
\end{figure}

We construct a second example of non-learning which we call ``the mad
king''. Here, the graph is undirected, so that whenever $i$ observes
$j$ then $j$ observes $i$; the graph therefore satisfies condition (2)
of egalitarianism with $L=1$, but not condition (1).  The graph (see
Figure~\ref{fig:mad-royal-family-intro}) consists of five types: the
people, the king, the regent, the court and the bureaucracy. There is
one king, one regent, a fixed number of members of the court and a
fixed number of members of the bureaucracy, which is much larger than
the court. The number of people is arbitrarily large. They are
connected as follows:
\begin{itemize}
\item The king  is connected to the regent, the court and the people.
\item The regent is connected to the king and to the bureaucracy.
\item The members of the court are each connected only to the king.
\item The members of the bureaucracy are each connected only to the
  regent.
\item Importantly, the people are each connected only to the king, and
  not to each other.
\end{itemize}

For an appropriate choice of private signal distributions and discount
factor, we construct an equilibrium in which all agents act myopically
in the first two rounds, except for the people, who choose the
constant action $0$. This is enforced by a threat from the king, who,
if any of the people deviate, will always play $1$, denying them any
information he has learned; the prize for complying is the exposure to
a well informed action which first aggregates the information
available to the court, and later aggregates the information available
to the (larger) bureaucracy.  The result is that the information in
the people's private signals is lost, and so we have non-learning with
probability bounded away from $0$, for graphs of arbitrarily large
size.

The equilibrium path can be succinctly described as follows; we
provide a complete description in the online appendix:
\begin{itemize}
\item The members of the bureaucracy act myopically in round 0, as do
  the members of the court.
\item The regent, who learns by observing the bureaucracy, acts
  myopically at time 0 and at time 1. Therefore, and since the
  bureaucracy is large, his action at time 1 will be correct with some
  fixed high probability.
\item The king acts myopically in round 0. At round 1, after having
  learned from the court's actions, the king again acts myopically,
  unless any of the people chose action 1 at round 0, in which case he
  chooses action 1 at this time and henceforth.
\item The people choose action 0 in round 0. They have no incentive to
  deviate, since they stand to learn much from the king's actions,
  which, at the next round, will aggregate the information in the
  court's actions.
\item By round 2, the king has learned from the regent's well informed
  action of round 1. He therefore, at round 2, emulates the regent's
  action of round 1, unless any of the people chose action 1 at rounds
  0 or 1, in which case he again chooses action 1 at this time and
  henceforth.
\item In round 1 the people again choose action 0. They again have no
  reason to deviate, this time because they wish to learn the regent's
  action, through the king; this information - which originates from
  the bureaucracy - is much more precise than that which the the king
  collected from the court in the previous round and reveals to them
  in this round.
\item At round 2 (and henceforth) the people emulate the king's
  previous action, and therefore the king will not learn from them.
\end{itemize}
It follows that the private signals of the people are lost, and so,
regardless of the number of people, there is a fixed probability of
non-learning.

We were not able to prove - or to disprove - that this equilibrium is
a perfect Bayesian equilibrium. However, it intuitively seems likely
that if the people were to deviate from the equilibrium, then the king
would not have an incentive to carry out his threat. If this intuition
holds then this is not a perfect Bayesian equilibrium.

An interesting phenomenon is that on this graph, the agents do learn
$S$ with high probability when they discount the future sufficiently,
or in the limiting case that they are myopic (i.e., fully discount the
future).  This is thus an example - and perhaps a counter-intuitive
one - of how strategic agents may learn less effectively than
myopic ones.

\section{Model}
\label{sec:main-defs}

\subsection{Informational structure}
The structure of the private information available to the agents is
the standard one used in the herding literature (see, e.g., Smith and
S{\o}rensen~\cite{smith2000pathological}).

We denote by $V$ the set of agents, which we take to equal
$\{1,2,\ldots,n\}$ in the finite case and $\N=\{1,2,\ldots\}$ in the
(countably) infinite case.  Let $\{0,1\}$ be the set of possible
values of the {\em state of nature} $S$, and let
$\P{S=1}=\P{S=0}=1/2$. Let $\Omega$ be a measurable space, called the
space of {\em private signals}.  Let $\psignal_i \in \Omega$ be agent
$i$'s private signal, and denote $\bar{\psignal} =
(\psignal_1,\psignal_2,\ldots)$.  Fix $\mu_0$ and $\mu_1$, two
mutually absolutely continuous probability measures on
$\Omega$. Conditioned on $S=0$, let $\psignal_i$ be i.i.d.\ $\mu_0$,
and conditioned on $S=1$ let $\psignal_i$ be i.i.d.\ $\mu_1$.

The assumption that $\P{S=1}=1/2$ can be relaxed; in particular, for
every choice of private signals there exist $p_1 < 1/2 < p_2$ such
that our results apply when $\P{S=1}$ is taken to be in $(p_1,p_2)$.
However, when agents are myopic (or more generally discount the future
enough), when priors are skewed, and when signals are weak, then
regardless of the graph, and in any equilibrium, agents will disregard
their private signals and only play the more a priori probable
action. Indeed, if for example the prior is $\P{S=1}=0.8$, then for
weak enough private signals it will be the case that
$\CondP{S=1}{\psignal_i} > 0.7$ with probability one. Myopic agents
always choose the action that they deem more likely to equal the state
of the world, and therefore will choose $1$, as will agents who are
not myopic but sufficiently discount the future. It follows that in
this case observing others' actions will reveal no information, and no
learning will occur. Therefore, to ensure learning, one must impose
some conditions on the prior and the strength of the private signals;
for example, a sufficient condition would be that
$\CondP{S=1}{\psignal_i}$ has positive probability of both being above
half and of being below half. To avoid encumbering this paper with an
additional layer of technical complexity, we focus on the case in
which $\P{S=1}=\P{S=0}=1/2$.

Agent $i$'s {\em private belief} $\belief_i$ is the probability that
the state of the world is $1$, given $i$'s private signal:
\begin{align*}
  \belief_i = \CondP{S=1}{\psignal_i}.
\end{align*}
Since $\belief_i$ is a sufficient statistic for $S$ given
$\psignal_i$, we assume below without loss of generality that an
agent's actions depend on $\psignal_i$ only through $\belief_i$ (see,
e.g., Smith and S{\o}rensen~\cite{smith2000pathological}).

We consider only $\mu_0$ and $\mu_1$ such that the distribution of
$\belief_i$ is non-atomic. This is the condition that we refer to
above as {\em non-atomic private beliefs}. This is an additional
restriction that we impose, beyond what is standard in the herding
literature.

\subsection{The social network}
The agents' social network defines which of them observe the actions
of which others. We do not assume that this is a symmetric relation:
it may be that $i$ observes $j$ while $j$ doesn't observe
$i$. Formally, the social network $G=(V,E)$ is a directed graph: $V$
is the set of agents, and $E$ is a relation on $V$, or a subset of 
the set of ordered pairs $V \times V$.  The set of neighbors of $i \in
V$ is
\begin{align*}
  \neigh{i} = \{j :\: (i,j) \in E\},
\end{align*}
and we consider only graphs in which $i \in N(i)$; that is, we require
that an agent observes her own actions. The {\em out-degree} of $i$ is
given by $|\neigh{i}|$, and will always be finite. This means that an
agent observes the actions of a finite number of other agents. We do
allow infinite {\em in-degrees}; this corresponds to agents whose
actions are observed by infinitely many other agents.

Let $G=(V,E)$ be a directed graph.  A (directed) {\em path} of length
$k$ from $i \in V$ to $j \in V$ in $G$ is sequence of $k+1$ nodes
$i_1,\ldots,i_{k+1}$ such that $(i_n,i_{n+1}) \in E$ for
$n=1,\ldots,k$, and where $i_1=i$ and $i_{k+1}=j$.

A directed graph is {\em strongly connected} if there exists a
directed path between every ordered pair of nodes; we restrict our
attention to such graphs.  Strong connectedness is natural in the
contexts of agreement and learning; as an extreme example, consider a
graph in which some agent observes no-one. In this graph we cannot
hope for that agent to learn the state of nature.

A directed graph $G$ is {\em $L$-locally-connected} if, for each
$(i,j) \in E$, there exists a path of length at most $L$ in $G$ from
$j$ to $i$.  Equivalently, $G$ is $L$-locally-connected if whenever
there exists a path of length $k$ from $i$ to $j$, there exists a path
of length at most $L \cdot k$ from $j$ back to $i$.  Note that
$1$-locally-connected graphs are commonly known as undirected graphs.

As defined above, a graph is said to be $(d,L)$-egalitarian if all
out-degrees are bounded by $d$, and if it is $L$-locally-connected.

\subsection{The game}
To model the agents' strategic behavior we consider the following game
of incomplete information. This framework, with some variations, has
been previously used, for example, by Gale and
Kariv~\cite{GaleKariv:03} and Rosenberg, Solan and
Vieille~\cite{rosenberg2009informational}.

We consider the discrete time periods $t=0,1,2,\ldots$, where in each
period each agent $i \in V$ has to choose one of the actions in
$\{0,1\}$. The information available to $i$ at time $t$ is her own
private signal (of which the relevant information is her private
belief, taking values in $[0,1]$), and the actions of her neighbors in
previous time periods, taking values in $\{0,1\}^{|\neigh{i}|\cdot
  t}$. This action is hence calculated by some function from
$[0,1]\times\{0,1\}^{|\neigh{i}|\cdot t}$ to $\{0,1\}$.

A {\em pure strategy at time $t$} of an agent $i \in V$ is therefore a
Borel-measurable function $q^i_t :
[0,1]\times\{0,1\}^{|\neigh{i}|\cdot t} \to \{0,1\}$. A {\em pure
  strategy} of an agent $i$ is the sequence of functions
$q^i=(q^i_0,q^i_1,\ldots)$, where $q^i_t$ is $i$'s pure strategy at
time $t$. We endow the space of pure strategies with the topology
derived from the weak topology on functions from $[0,1]$ to $\{0,1\}$.

A {\em mixed strategy} $\strat^i$ of agent $i$ is a
pure-strategy-valued random variable; this is the standard notion of a
mixed strategy, and we shall henceforth refer to mixed strategies
simply as strategies. A (mixed) {\em strategy profile} is a set of
strategies $\stratp=\{\strat^i:i \in V\}$, where the random variables
$\strat^i$ are independent of each other and of the private
signals. 

The {\em action} of agent $i$ at time $t$ is denoted by $\action^i_t
\in \{0,1\}$. Denote the {\em history} of actions of the neighbors of
$i$ before time $t$ by $\hist{i}{t} = \{\action^j_s:s < t, j \in
\neigh{i}\}$; this depends on the social network $G$.  The action that
agent $i$ plays at time $t$ under strategy profile $\stratp$ is
\begin{align*}
  \action^i_t = \action^i_t(G,\stratp) = \strat^i_t\left(\belief_i,
  \action^{\neigh{i}}_{[0,t)}\right).
\end{align*}

Note again that we (without loss of generality) limit the action to be
a function of the private belief $\belief_i$, as opposed to the
private signal $\psignal_i$.

Let $0 < \disc < 1$ denote the agents' common {\em discount factor}.
Given a social network $G$ and strategy profile $\stratp$, agent $i$'s
{\em stage utility at time $t$}, $U_{i,t}$, is $1$ if her action
matches $S$, and $0$ otherwise:
\begin{align*}
  U_{i,t} = U_{i,t}(G,\stratp) = \ind{\action^i_t(G,\stratp) = S}.
\end{align*}
Her {\em expected stage utility at time $t$}, $\util_{i,t}$, is
therefore given by
\begin{align*}
  \util_{i,t} = \util_{i,t}(G,\stratp) = \E{U_{i,t}(G,\stratp)} =
  \P{\action^i_t(G,\stratp) = S}.
\end{align*}
Agent $i$'s {\em expected utility} $\util_i$ is given by
\begin{align*}
  \util_i = \util_i(G, \stratp) = (1-\disc)\sum_{t =
    0}^\infty\disc^t\util_{i,t}(G,\stratp).
\end{align*}
  
Note that $\util_i \in [0,1]$, due to the normalization factor
$(1-\disc)$.  A {\em game} $\cG$ is a 4-tuple $(\mu_0,\mu_1,\disc,G)$
consisting of two measures, a discount factor and a social network
graph, satisfying the conditions of the definitions above.

\subsection{Equilibria}

Our equilibrium concept is the standard Nash equilibrium in games of
incomplete information: $\stratp$ is an equilibrium if no agent can
improve her expected utility $\util_i(\stratp)$ by deviating from
$\stratp$.

Formally, in a game $\cG = (\mu_0,\mu_1,\disc,G)$, strategy profile
$\stratp$ is an equilibrium if, for every agent $i \in V$ it
holds that
\begin{align*}
  \util_i(G,\stratp) \geq \util_i(G,\stratpx),
\end{align*}
for any $\stratpx$ such that $\stratx^j=\strat^j$ for all $j \neq i$
in $V$.

\section{Agreement}
\label{sec:agreement}
Let the {\em infinite action set} $\optset_i$ of agent $i$ be defined
by
\begin{align*}
  \optset_i = \optset_i(G,\stratp) = \{s \in
  \{0,1\}:\action^i_t(G,\stratp)=s \mbox{ for infinitely many values of
    $t$}\}.
\end{align*}
There could be more than one action that $i$ takes infinitely
often. In that case we write $\optset_i = \{0,1\}$. Otherwise, with a
slight abuse of notation, we write $\optset_i=0$ or $\optset_i=1$, as
appropriate.

In this section we show that the agents reach consensus in any graph,
in the following sense:
\begin{theorem}
  \label{thm:agreement}
  Let $\cG$ be a game with either finitely many players or countably
  infinitely many players, and let $\stratp$ be an equilibrium
  strategy profile of $\cG$. Then, with probability one,
  $\optset_i=\optset_j$ for all agents $i,j \in V$.
\end{theorem}
This theorem is a crucial ingredient in the proof of the main result
of this article. Indeed, learning occurs if $\optset_i=S$ for all $i$,
and so a prerequisite is that $\optset_i=\optset_j$ for all $i,j$.

Recall that a strategy of agent $i$ at time $t$ is a function of her
private belief $\belief_i$ and the actions of her neighbors in
previous time periods, $\hist{i}{t}$. Hence we can think of the
sigma-algebra generated by these random variables as the ``information
available to agent $i$ at time $t$''. Denote the information available
to agent $i$ at time $t$ by
\begin{align*}
  \info^i_t = \info^i_t(G,\stratp) =
  \sigma\left(\belief_i,\strat^i,\hist{i}{t}\right),
\end{align*}
and denote by
\begin{align*}
  \info^i_{\infty} = \info^i_{\infty}(G,\stratp) =
  \sigma\left(\cup_{t=0}^\infty\info^i_t\right)
\end{align*}
the information available to agent $i$ at the limit $t \to \infty$.
Note that $\info^i_t$ includes the sigma-algebra generated by $i$'s
private belief, the actions of $i$'s neighbors before time $t$, and
$i$'s pure strategy; $i$ knows which pure strategy she has chosen.

Since the expected stage utility of action $s$ at time $t$ is
$\P{s=S}$, a myopic agent would take an action $s$ in $\{0,1\}$ that
maximizes $\CondP{s=S}{\info^i_t}$. This motivates the following
definition. Denote the best response of agent $i$ at time $t$ by
\begin{align*}
  \bestres^i_t = \bestres^i_t(G,\stratp) = \argmax_{s \in
    \{0,1\}}\CondP{s=S}{\info^i_t(G,\stratp)}.
\end{align*}
Likewise denote the {\em set} of best responses of agent $i$ at the
limit $t \to \infty$ by
\begin{align*}
  \bestres^i_{\infty} = \bestres^i_{\infty}(G,\stratp) = \argmax_{s
    \in \{0,1\}}\CondP{s=S}{\info^i_{\infty}}.
\end{align*}

At any time $t$ there is indeed almost surely only one action that
maximizes $\CondP{s=S}{\info^i_t(G,\stratp)}$, since we require that
the distribution of private beliefs be non atomic. This does not
necessarily hold at the limit $t \to \infty$, and so we let
$\bestres^i_{\infty}$ take the values $0$, $1$ or $\{0,1\}$. Note that
a reasonable conjecture is that the probability that
$\bestres^i_{\infty} = \{0,1\}$ is zero, but we are not able to prove
this. This does not, however, prevent us from proving our results, but
it does complicate the proofs.

The following theorem is a restatement, in our notation, of
Proposition 2.1 in Rosenberg, Solan and
Vieille~\cite{rosenberg2009informational}.
\begin{theorem}[Rosenberg, Solan and Vieille]
  \label{thm:rsv1}
  For any agent $i$ it holds that $\optset_i \subseteq
  \bestres^i_{\infty}$ almost surely, in any equilibrium.
\end{theorem}
That is, any action that $i$ takes infinitely often is optimal, given
all the information agent $i$ eventually learns. Note that this
theorem is stated in~\cite{rosenberg2009informational} for a finite
number of agents. However, a careful reading of the proof reveals that
it holds equally for a countably infinite set of agents. The same
holds for their Theorem 2.3, in which they further prove the following
agreement result.
\begin{theorem}[Rosenberg, Solan and Vieille]
  \label{thm:rsv-agreement}
  Let $j$ be a neighbor of $i$. Then $\optset_j \subseteq
  \bestres^i_{\infty}$ almost surely, in any equilibrium.
\end{theorem}
Equivalently, if $i$ observes $j$, and $j$ takes an action $a$
infinitely often, then $a$ is an optimal action for $i$. If we could
show that $\bestres^i_{\infty} = \optset_i$ for all $i$, it would
follow from these two theorems, and from the fact that the graph is
strongly connected, that $\optset_i = \optset_j$ for all agents $i$
and $j$; the agents would agree on their optimal action sets. This is
precisely what we show in Theorem~\ref{thm:bestres-optset}. Our
agreement theorem (Theorem~\ref{thm:agreement}) is a direct consequence.

\section{Topologies on graphs and strategy profiles}
\label{sec:topologies}
\subsection{Rooted graphs and their topology}
A {\em rooted graph} is a pair $(G,i)$, where $G=(V,E)$ is a directed
graph, and $i \in V$ is a vertex in $G$.

Rooted graphs are a basic mathematical concept, and are important to
the understanding of this game. This section starts with some basic
definitions, continues with the definition of a metric topology on
rooted graphs, and culminates in a novel theorem on compactness in
this topology, which may be of independent interest. In this we follow
our previous work~\cite{mossel2012asymptotic}, which builds on the
work of others such as Benjamini and
Schramm~\cite{benjamini2011recurrence} and Aldous and
Steele~\cite{aldous2003objective}.

Intuitively, a rooted graph is a graph, as seen from the ``point of
view'' of a particular vertex - the root. Two rooted graphs will be
close in our topology if the two graphs are similar, as seen from the
roots.

Before defining our topology we will need a number of standard
definitions.  Let $G=(V,E)$ and $G'=(V',E')$ be graphs, and
let $(G,i)$ and $(G',i')$ be rooted graphs. A {\em rooted graph
  isomorphism} between $(G,i)$ and $(G',i')$ is a bijection $h:V \to
V'$ such that
\begin{enumerate}
\item $h(i) = i'$.
\item $(j,k) \in E \,\Leftrightarrow\, (h(j),h(k)) \in E'$.
\end{enumerate}
If there exists a rooted graph isomorphism between $(G,i)$ and
$(G',i)$ then we say that they are isomorphic, and write $(G,i) \cong
(G',i')$.  Informally, isomorphic graphs cannot be told apart when
vertex labels are removed; equivalently, one can be turned into the
other by an appropriate renaming of the vertices. The isomorphism
class of $(G,i)$ is the set of rooted graphs that are isomorphic to
it, and will be denoted by $[G,i]$.

Let $j,k$ be vertices in a graph $G$. Denote by $\dist(j,k)$
the length of the shortest (directed) path from $j$ to $k$.  In general,
$\dist(j,k) \neq \dist(k,j)$, since the graph is directed.  The
(directed) {\em ball} $B_r(G,i)$ of radius $r$ of the rooted graph
$(G,i)$ is the rooted graph, with root $i$, induced in $G$ by the set
of vertices $\{j \in V\,:\,\dist(i,j) \leq r\}$.

We now proceed to define our topology on the space of isomorphism
classes of strongly connected rooted graphs, which is an extension of
the Benjamini-Schramm~\cite{benjamini2011recurrence} topology on
undirected graphs. We define this topology by a metric\footnote{ This
  definition applies, in fact, to a larger class of directed graphs: a
  rooted graph $(G,i)$ is {\em weakly connected} if there is a
  directed path from $i$ to each other vertex in the graph. Note that
  indeed a strongly connected graph is necessarily weakly connected,
  but not vice versa. Note also that a rooted graph $(G,i)$ is weakly
  connected if and only if for every vertex $j$ there exists an $r$
  such that $j$ is in $B_r(G,i)$.  }.

Let $[G',i']$ and $[G,i]$ be isomorphism classes of strongly connected
rooted graphs. The distance $D([G,i],[G',i'])$ is defined by
\begin{align}
  \label{eq:graph-metric-def}
  D([G,i],[G',i']) = \inf \{2^{-r}\,:\,B_r(G,i) \cong B_r(G',i')\}.
\end{align}
That is, the larger the radius around the roots in which the graphs are
isomorphic, the closer they are. In fact, the quantitative dependence
of $D(\cdot,\cdot)$ on $r$ (exponential in our definition) will not be
of importance here, as we shall only be interested in the topology
induced by this metric.

It is straightforward to show that $D(\cdot,\cdot)$ is well defined; a
standard diagonalization argument (which we use repeatedly in this
article) is needed to show that it is indeed a metric rather than a
pseudometric (Claim~\ref{claim:metric}). The assumption of strong
connectivity is crucial here, since $D(\cdot,\cdot)$ is otherwise a
pseudometric.

Let $\scg$ be the set of isomorphism classes of {\em strongly
  connected} rooted graphs. This set is a topological space when
equipped with the topology induced by the metric $D(\cdot,\cdot)$.
Given a strongly connected graph $G$, let $\cR(G) \subset \scg$ be the
set of all rooted graph isomorphism classes of the form $[G,i]$, for
$i$ a vertex in $G$. This can be thought of as the set of ``points of
view'' in the graph $G$. The notion of $(d,L)$-egalitarianism now
arises naturally, in the sense that the number of ``approximate points
of view'' in $G$ is finite if and only if $G$ is egalitarian. This is
formalized in the following lemma.
\begin{lemma}
  \label{cor:compact}
  Let $G$ be a strongly connected graph. Then the closure of $\cR(G)$
  is compact in $\scg$ if and only if $G$ is $(d,L)$-egalitarian, for
  some $d$ and $L$.
\end{lemma}

We would like to suggest that Lemma~\ref{cor:compact}, which we prove
in Appendix~\ref{app:rooted-graphs-proofs}, may be of independent
mathematical interest, as it extends the well understood notion of
compactness in undirected graphs to directed, strongly connected
graphs.

\subsection{The space of rooted graph strategy profiles and its
  topology}
\label{sec:rooted-graph-strategy-profiles}
In this section we use the above topology on rooted graphs to
construct a topology on what we call {\em rooted graph strategy
  profiles}. This will be the main tool at our disposal in proving
both the existence of equilibria, and our main result,
Theorem~\ref{thm:learning-finite-intro}. Intuitively, a {\em rooted
  graph strategy profile} will be a graph, together with a strategy
profile, as seen from the point of view of the root. As in the case of
rooted graphs, two points in this space will be close if they look
alike from the points of view of the roots.

Let $G=(V,E)$ and $G'=(V',E')$ be strongly connected directed graphs,
and let $(G,i), (G',i') \in \scg$ be rooted graphs. Let $\stratp$ and
$\stratpx$ be strategy profiles for the agents in $V$ and $V'$,
respectively. We say that the triplet $(G,i,\stratp)$ is equivalent to
the triplet $(G',i',\stratpx)$ if there exists a rooted graph
isomorphism $h$ from $(G,i)$ to $(G',i')$ such that $\stratp^j =
\stratpx^{h(j)}$ for all $j \in V$. The {\em rooted graph strategy
  profiles} $\gs$ are the set of equivalence classes induced by this
equivalence relation. We denote an element of $\gs$ by
$[G,i,\stratp]$.

In Appendix~\ref{app:strategy-topology} we apply the classical work of
Milgrom and Weber~\cite{milgrom1985distributional} to define a metric
$d$ on a single agent's strategy space, with the property that when
the number of agents is finite then utilities are continuous in the
induced topology.

We use this metric, and the metric of rooted graphs to define a metric
on rooted graph strategy profiles. Intuitively, $[G,i,\stratp]$ and
$[G',i',\stratpx]$ will be close in this metric if, in a large radius
around $i$ and $i'$, it holds both that the graphs are isomorphic and
that the strategies are similar.

Let $d$ be a metric on a single agent's strategy space. Let $i$ and
$i'$ be agents in graphs $G$ and $G'$, respectively. We can use $d$ as
a metric between their strategies, as long as we uniquely identify
each neighbor of one with a neighbor of the other. Let $h$ be a
bijection between $\neigh{i'}$ and $\neigh{i}$. Then
$d_h(\strat^i,\strat^{i'})$ will denote the distance thus defined
between $\strat^i$ and $\strat^{i'}$.

We next define $D_r(\cdot,\cdot)$, a pseudometric on graph strategy
profiles which only takes into account the graph and the strategies at
balls of radius $r$ around the root. Two graph strategy profiles are
close in $D_r$ if (1) these balls are isomorphic, so that agents in
these balls can be identified, and if (2) under some such
identification, identified agents have similar strategies. This is a
pseudometric rather than a metric since there could be two graph
strategy profiles that are at distance $0$ under $D_r$, but are not
identical; differences will, however, occur only at distances that are
larger than $r$ from the roots.

Let $[G,i,\stratp]$ and $[G',i',\stratpx]$ be rooted graph
strategies. For $r \in \N$, let $H(r)$ be the (perhaps empty) set of
rooted graph isomorphisms between $B_r(G,i)$ and $B_r(G',i')$. Let
\begin{align*}
  D_r\Big([G,i,\stratp],[G',i',\stratpx]\Big) = \min_{h \in
    H(r+1)}\max_{j \in B_r(G,i)}d_h(\strat^j,\stratx^{h(j)}),
\end{align*}
when $H(r+1)$ is non-empty, and $1$ otherwise. The choice of $h \in
H(r+1)$ and then $j \in B_r(G,i)$ guarantees that $h$ is a bijection
from the set of neighbors of $j$ to the set of neighbors of $h(j)$.

Finally, define the metric $D([G,i,\stratp],[G',i',\stratpx])$ by
\begin{align}
  \label{eq:D-def}
  D\Big([G,i,\stratp],[G',i',\stratpx]\Big) = \inf_{r \in
    \N}\left\{\max\left\{2^{-r},D_r([G,i,\stratp],[G',i',\stratpx])\right\}\right\}.
\end{align}
Note that $D$ will be small whenever $D_r$ is small for large $r$. It
is straightforward (if tedious) to show that $D(\cdot,\cdot)$ is
indeed a well defined metric.

\subsection{Properties of the space of rooted graph strategy profiles}
Two rooted graph strategy profiles will be close in the topology
induced by $D$ if, in a large neighborhood of the roots, it holds both
that the graphs are isomorphic, and also that the strategies are
similar. This captures the root's ``point of view'' {\em of the entire
  strategy profile}.

While many possible topologies may have this property, this topology
has some technical features that make it a useful analytical tool.
First, expected utilities are continuous is this topology.  Formally,
let the {\em utility map} $\util : \gs \to \R$ be given by
\begin{align*}
  \util([G,i,\stratp]) = \util_i(G,\stratp).
\end{align*}
This is a straightforward recasting of the previous definition of
expected utility into the language of rooted graph strategy spaces. In
Lemma~\ref{thm:utility-cont} we show that $u : \gs \to \R$ is
continuous; this follows from the fact that payoff is discounted, and
so the strategies of far away agents have only a small effect on an
agent's utility. Another property of this topology that makes it
applicable is that the set of {\em equilibrium} rooted graph strategy
profiles is closed (Lemma~\ref{thm:equi-conv-equi}). These properties
are also instrumental in proving that equilibria exist
(Theorem~\ref{thm:equi-exists}).

Additionally, the probability of learning is lower semi-continuous in
this topology.  Let the {\em probability of learning} map $\learnprob
: \gs \to \R$ be given by
\begin{align*}
  \learnprob([G,i,\stratp]) = \lim_{t \to
    \infty}\P{\action^i_t(G,\stratp)=S}.
\end{align*}
In Section~\ref{app:learnprob} we prove that $\learnprob$ is well
defined and that it is lower semi-continuous
(Theorem~\ref{thm:p-semi-cont}).  We also show that
$\learnprob([G,i,\stratp])=1$ if and only if the agents learn; i.e.,
if and only if $\lim_t\action^j_t=S$ almost surely for all agents $j$
in $G$ (Claim~\ref{clm:p-learning}).

Finally, if $G$ is an egalitarian graph, then the set of rooted graph
strategy profiles on $G$ is precompact
(Claim~\ref{thm:conv-conv}). Intuitively, this means that when $G$ is
egalitarian then not only are there finitely many approximate points
of view of the graph (as discussed above), but also just finitely many
approximate points of view of the strategy profile.

\section{Learning}
\label{sec:learning}
\subsection{Learning on infinite egalitarian graphs}

Let $G$ be an infinite, connected, $(d,L)$-egalitarian graph, and let
$\stratp$ be an equilibrium strategy profile. In this section we show
that all agents learns $S$ almost surely.

Recall that all agents converge to the same (random) action or set of
actions. Denote by $\mapS_{\infty}$ the random variable that is equal
to $0$ if all agents converge to $0$ and is equal to $1$ if all agents
converge to $1$, or if they all do not converge. Our choice of
notation here follows from the fact that $\mapS_{\infty}$ is a maximum
a posteriori (MAP) estimator of any particular agent, given all that
it learns: namely, the probability that an agent learns $S$ is equal
to the probability that $\mapS_{\infty}$ equals $S$
(Claim~\ref{clm:learnprob-bestres}). Since the private signals are
informative, $\mapS_{\infty}=S$ with probability which is strictly
greater than one half (Claim~\ref{clm:p-half}), so $\mapS_\infty$ is a
non-trivial estimator of $S$.

Note that $\mapS_{\infty}$ is measurable in the sequence of every
agent's actions. Hence each agent eventually learns it, or something
``close to it'' at large finite times: formally, for every $\delta >
0$ there will be a time $t$ and random variable
$\mapS_{\infty}^{i,\delta}$ that can be calculated by $i$ at time $t$,
and such that $\P{\mapS_{\infty}^{i,\delta} = \mapS_{\infty}} > 1-\delta$.

Now, $\mapS_{\infty}$ is a deterministic function of the agents'
private signals and pure strategies. Hence (e.g., by the martingale
convergence theorem) $\mapS_{\infty}$ is an {\em almost deterministic}
function of the private signals and pure strategies of a large but
{\em finite} group of agents. Formally, for every $\eps > 0$ there is
a random variable $\mapS_{\infty}^\eps$ that depends only on the
private signals and pure strategies of some finite set of agents
$V^\eps$, and such that $\P{\mapS_{\infty}^\eps =
  \mapS_{\infty}}>1-\eps$.

Let $i$ be an agent who is far away (in graph distance) from $V^\eps$,
so that the nearest member of $V^\eps$ is at {\em distance} at least
$t$ from $i$. Then everything that $i$ observes up to {\em time} $t$
is independent of $\mapS_{\infty}^\eps$, and hence ``approximately
independent'' of $\mapS_{\infty}$ (Claim~\ref{clm:maps-local}); we
formalize a notion of ``approximate independence'' in
Section~\ref{app:delta-ind}.

Now, as we note above, $i$ eventually learns $\mapS_{\infty}$ (or more
precisely an estimator $\mapS_{\infty}^{i,\delta}$ that is equal to
$\mapS_{\infty}$ with high probability), gaining a new estimator of
$S$ which is (approximately) independent of any estimators that it has
learned up to time $t$.  What we have so far outlined can thus be
summarized informally as follows: the estimator $\mapS_\infty$ is
``decided upon'' by a finite group of agents. When those far away
eventually learn it they gain a new, approximately independent
estimator of $S$.

We apply this argument inductively, relying crucially on the fact that
the space of rooted graph strategies on $G$ is precompact: Assume by
induction that for every ``point of view'' $[H,j,\stratpx]$ in the
closure of this space there is an agent $i$ in $H$ with $k-1$
approximately independent estimators of $S$ by some time $t$. By
compactness and the infinitude of $G$, there are infinitely many
agents in $G$ whose points of view are approximately equal to that of
such an agent $i$. These will all also have $k-1$ approximately
independent estimators of $S$ by time $t$.  Some (in fact, almost all)
of these agents will be sufficiently far from $V^\eps$. These will
then gain a new estimator when they eventually learn $\mapS_\infty$.

Hence in egalitarian graph, for any $k$ and any degree of
approximation, there will always be an agent who, given enough time,
will accumulate $k$ approximately independent estimators of $S$
(Lemma~\ref{lem:independent-ests}). A standard concentration of
measure inequality then guarantees that the agent's probability of
learning will be approximately $1$
(Theorem~\ref{thm:compact-learning}). This proves that the agents
learn on infinite graphs.

\subsection{Learning on finite egalitarian graphs and the proof of Theorem~\ref{thm:learning-finite-intro}}

We reduce the case of finite graphs to that of infinite graphs, thus
proving our main theorem.
\begin{maintheorem}
  \label{thm:learning-finite-intro}
  Fix the distributions of the agents' private signals, with
  non-atomic private beliefs. Fix also a discount factor $\disc \in
  (0,1)$, and positive integers $L$ and $d$.  Then in any connected,
  $(d,L)$-egalitarian, countably infinite network
  \begin{align*}
    \P{\mbox{all agents learn } S } =1
  \end{align*}
  in any equilibrium. Furthermore, for every $\eps>0$ there exists an
  $n$ such that for any connected, $(d,L)$-egalitarian network with at
  least $n$ agents
  \begin{align*}
    \P{\mbox{all agents learn } S } \geq 1-\eps,
  \end{align*}
  in any equilibrium.
\end{maintheorem}

Given a set of graphs $\cK$, let $\cR(\cK)$ be the set of rooted
graphs $[G,i]$ such that $G \in \cK$. Let $\eqs(\cK)$ be the set of
equilibrium strategy profiles $[G,i,\stratp]$ such that $G \in \cK$.
\begin{proof}[Proof of Theorem~\ref{thm:learning-finite-intro}]
  Let $G$ be a $(d,L)$-egalitarian graph. The case that $G$ is
  infinite is treated in Theorem~\ref{thm:compact-learning}.

  We hence consider finite graphs.  Let $\cK_n$ be the set of
  $L$-locally-connected, degree $d$ graphs with $n$ vertices. Since
  $\cK_n$ is finite then $\cR(\cK_n)$ is finite and hence compact. It
  follows that $\eqs(\cK_n)$ is also compact
  (Claim~\ref{clm:cB-compact}). Since the map $\learnprob$ is lower
  semi-continuous it attains a minimum on $\eqs(\cK_n)$. Let
  $[G_n,i_n,\stratp_n]$ be a minimum point, and denote $q(n) =
  \learnprob([G_n,i_n,\stratp_n])$. We will prove the claim by showing
  that $\lim_n q(n)=1$. Let $\{q(n_k)\}_{k=1}^{\infty}$ be a
  subsequence such that $\lim_k q(n_k) = \liminf_n q(n)$.
  
  Since the set of $(d,L)$-egalitarian graphs is compact
  (Theorem~\ref{thm:cBLd-compact}), by again invoking
  Claim~\ref{clm:cB-compact}, we have that the sequence
  $\{[G_{n_k},i_{n_k},\stratp_{n_k}]\}_{k=1}^{\infty}$ has a
  converging subsequence that must converge to some {\em infinite}
  $L$-locally-connected, degree $d$ equilibrium graph strategy
  $[G,i,\stratp]$. By the above, we have that
  $\learnprob([G,i,\stratp]) = 1$, and so, by the lower
  semi-continuity of $p$, it follows that
  \begin{align*}
    \liminf_{n \to \infty}q(n) = \lim_k q(n_k) = \lim_{k \to
      \infty}\learnprob([G_{n_k},i_{n_k},\stratp_{n_k}]) \geq
    \learnprob([G,i,\stratp])=1.
  \end{align*}
\end{proof}

\section{Conclusion}
\label{sec:conclusion}

\subsection{Summary}
Learning on social networks by observing the actions of others is a
natural phenomenon that has been studied extensively in the
literature. However, the question of how strategic agents fare has
been largely ignored. We tackle this problem in a standard framework
of a discounted game of incomplete information and conditionally
independent private signals.

We show that on some networks agents learn in every equilibrium, and
that they do not necessarily learn on others. The geometric condition
of learning is one of egalitarianism, and is similar in spirit to
conditions of learning identified in some boundedly-rational models.

\subsection{Extensions and open problems}
Our techniques, by their topological nature, give only asymptotic
results: we show that the probability that agents learn on a
$(d,L)$-egalitarian graph with $n$ agents tends to one. It may be
interesting to study the rate at which this happens, but our
techniques do not seem to apply to this question.

Natural extensions of our model include those in which agents do not
act synchronously, and those in which the agents do not know the
structure of the graph, but have some prior regarding it. The latter
is particularly compelling, since the assumption that the agents know
exactly the structure of the graph is a strong one, especially in the
case of large graphs.

We believe that our results should extend to these cases, but chose
not to pursue their study, given the length and considerable
complexity of the argument presented here.

Although we show that there exist non-egalitarian graphs with
equilibria at which learning fails, we are far from characterizing
those graphs. For example: is there a simple geometric
characterization of the infinite graphs on which the agents learn with
probability one?

\bibliographystyle{abbrv}
\bibliography{all}
\pagebreak
\appendix

\section{Rooted graphs}
\label{app:rooted-graphs-proofs}
\begin{claim}
  \label{claim:metric}
  If $D([G,i],[G',i'])=0$ then $(G,i) \cong (G',i')$.
\end{claim}
\begin{proof}
  By the definition of $D(\cdot,\cdot)$, $D([G,i],[G',i'])=0$ implies
  that $B_r(G,i) \cong B_r(G',i')$ for all $r \in \N$. Hence for each
  $r$ there exists a (finite) graph isomorphism $h_r$ from the
  vertices of $B_r(G,i)$ to the vertices of $B_r(G',i')$. The goal is
  to construct the (potentially infinite) graph isomorphism between
  $(G,i)$ and $(G',i')$.

  For each $r \in \N$, the isomorphism $h_r$ can be restricted to an
  isomorphism between $B_1(G,i)$ and $B_1(G',i')$. Since $B_1(G,i)$ is
  finite, there are only finitely many possible isomorphisms between
  it and $B_1(G',i')$, and therefore at least one of them must appear
  infinitely often in $\{h_r\}_{r \geq 1}$. Hence let $\{h_{r,1}\}_{r
    \geq 1}$ be an infinite subsequence of $\{h_r\}_{r \geq 1}$ that
  consists of isomorphisms that are identical, when restricted to
  balls of radius one. By the same argument, there exists a
  sub-subsequence $\{h_{r,2}\}_{r \geq 2}$ that agrees on balls of
  radius two. Indeed, for any $n \in \N$ let $\{h_{r,n}\}_{r \geq n}$
  be a subsequence of $\{h_{r,n-1}\}_{r \geq n-1}$ that agrees on
  balls of radius $n$.

  In the diagonal sequence $\{h_{r,r}\}_{r \geq 1}$, $h_{r,r}$ agrees
  on balls of radius $r$ with all $h_{s,s}$ such that $s \geq r$. We
  can therefore define an isomorphism $h : (G,i) \to (G',i')$ by
  specifying that that $h(j) = h_{r,r}(j)$ for all $r \geq \dist(i,j)
  + 1$. $h$ is indeed an isomorphism, since if $(j,k) \in E$ then
  $(h(j),h(k)) = (h_r(j),h_r(k)) \in E'$, where $r =
  \max\{\dist(i,j),\dist(i,k)\}$.
\end{proof}

Let $\graphs(d,L) \subset \scg$ be the subspace of isomorphism classes
of $(d,L)$-egalitarian strongly connected rooted graphs.
\begin{theorem}
  \label{thm:cBLd-compact}
  $\graphs(d,L)$ is compact.
\end{theorem}
\begin{proof}
  Let $\{[G_n,i_n]\}_{n=1}^\infty$ be a sequence in
  $\graphs(d,L)$. Since the degrees are bounded, it follows that for
  fixed $r$, the number of possible balls $B_r(G_n,i_n)$ is finite,
  and therefore, by a standard diagonalization argument, there exists
  a subsequence that converges to some $[G,i]$. It remains to show
  that any such $[G,i]$ is $L$-locally-connected. This follows from the fact
  that for any edge $(k,j)$ in $(G,i)$, $B_L(G,j) \cong
  B_L(G_n,j_n)$ for some $n$; this ball must then include a path
  from $k$ back to $j$ of length at most $L$.
\end{proof}

Rather than prove Lemma~\ref{cor:compact} directly, we prove the
following more general theorem, which might be of independent
interest, as it extends the well understood notion of compactness in
undirected graphs to directed, strongly connected
graphs. Lemma~\ref{cor:compact} is an immediate consequence.
\begin{theorem}
  \label{thm:compact-graph}
  Let $\mathcal{S} \subseteq \scg$ have the property that if $[G,i]$
  is in $\mathcal{S}$, and if $j$ is another vertex in $G$, then
  $[G,j]$ is also in $\mathcal{S}$. Then $\mathcal{S}$ is precompact
  in $\scg$ if and only if $\mathcal{S} \subseteq \graphs(d,L)$ for
  some $d$ and $L$.
\end{theorem}
\begin{proof}
  By Theorem~\ref{thm:cBLd-compact} $\graphs(d,L)$ is compact. Hence
  $\mathcal{S} \subseteq \graphs(d,L)$ implies that $\mathcal{S}$ is
  precompact.

  To prove the other direction, consider first a sequence
  $\{[G_n,i_n]\}$ in $\mathcal{S}$ such that the degree of $i_n$ is at
  least $n$. Then clearly $\{[G_n,i_n]\}$ has no converging
  subsequence, since the degree of $i'$ in any limit $[G',i']$ would
  have to be larger than any $n$. It follows that $\mathcal{S}$ is not
  precompact.

  Finally, let there exist a sequence $\{[G_n,i_n]\}$ in $\mathcal{S}$
  with a sequence of edges $\{(i_n,j_n)\}$ in $G_n$ where the shortest
  path from $j_n$ back to $i_n$ is at least of length $n$. Assume that
  $[G',i']$ is a limit of a subsequence of this sequence. It follows
  that there exists a $j' \in \neigh{i'}$ such that the shortest path
  from $j'$ back to $i'$ is of length larger than any $n$, and so
  doesn't exist. Hence $G'$ is not strongly connected, $[G',i'] \not
  \in \scg$, and $\mathcal{S}$ is not compact in $\scg$.
\end{proof}

The following is a general claim that will be useful later.
\begin{claim}
  \label{thm:graph-lims}
  Let $\{[G_n,i_n]\}_{n=1}^\infty$ be a sequence of rooted graph
  isomorphism classes such that
  \begin{align*}
    \lim_{n\to \infty}[G_n,i_n] = [G,i].
  \end{align*}
  Then for every $r >0$ there exists an $N > 0$ such that for all $n
  >N$ it holds that $B_r(G_n,i_n) \cong B_r(G,i)$. Furthermore, there
  exists a subsequence $\{[G_{n_r},i_{n_r}]\}_{r=1}^\infty$ such that
  $B_r(G_{n_r},i_{n_r}) \cong B_r(G,i)$.
\end{claim}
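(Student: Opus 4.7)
The plan is to unpack the definition \eqref{eq:graph-metric-def} of the metric $D$ and deduce the claim from the observation that a rooted-graph isomorphism of a large ball restricts to a rooted-graph isomorphism of any smaller concentric ball. The main step is therefore to establish the following restriction lemma: if $r \leq r'$ and $h \colon B_{r'}(G, i) \to B_{r'}(G', i')$ is a rooted graph isomorphism, then the restriction of $h$ to the vertex set of $B_r(G, i)$ is a rooted graph isomorphism onto $B_r(G', i')$.

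The justification of the restriction lemma rests on a single observation: for any rooted graph $(G, i)$ and any vertex $j$ with $\dist(i, j) \leq r'$, the distance $\dist(i, j)$ computed in $G$ equals the distance computed inside the induced subgraph $B_{r'}(G, i)$. Indeed, every prefix of a shortest directed path from $i$ to $j$ is itself a shortest path, so a shortest $G$-path from $i$ to such a $j$ only visits vertices at distance at most $r'$ from $i$ and therefore lies entirely inside $B_{r'}(G, i)$. In particular, the vertex set of $B_r(G, i)$ is intrinsic to $B_{r'}(G, i)$ (it is simply the set of vertices at distance at most $r$ from the root of $B_{r'}(G, i)$), so any rooted isomorphism of $B_{r'}$'s sends this set bijectively onto its counterpart on the other side. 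Edge preservation is inherited from $h$, which completes the lemma.

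Granted the lemma, the first half of the claim drops out of the definition. Fix $r > 0$; since $D([G_n, i_n], [G, i]) \to 0$, there exists $N$ such that for every $n > N$ we have $D([G_n, i_n], [G, i]) < 2^{-r}$, and then the infimum in \eqref{eq:graph-metric-def} yields some integer $r' > r$ with $B_{r'}(G_n, i_n) \cong B_{r'}(G, i)$. The restriction lemma then gives $B_r(G_n, i_n) \cong B_r(G, i)$, as required. For the `furthermore' part, I would apply the first part with $r = 1, 2, 3, \ldots$ to obtain thresholds $N_r$, and inductively choose $n_r := \max\{n_{r-1} + 1,\, N_r + 1\}$; this produces a strictly increasing subsequence $\{[G_{n_r}, i_{n_r}]\}_{r=1}^\infty$ with $B_r(G_{n_r}, i_{n_r}) \cong B_r(G, i)$ for every $r$.

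The only mildly nontrivial point is the restriction lemma, and even there the difficulty is only cosmetic — one must remember that the graphs are directed and so verify that shortest paths out of the root stay within the relevant ball, which follows immediately from the prefix-of-shortest-path observation. Everything else is direct bookkeeping with the definition of $D$.
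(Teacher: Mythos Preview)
Your proof is correct and follows essentially the same approach as the paper, which simply asserts that the first part is immediate from the definition of the metric and takes $n_r = \min\{n : B_r(G_n,i_n) \cong B_r(G,i)\}$ for the subsequence. You have spelled out the restriction lemma (an isomorphism of a larger ball restricts to one of a smaller concentric ball) that the paper leaves implicit, and your inductive choice of $n_r$ additionally ensures the subsequence is strictly increasing, which the paper's minimum does not.
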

\begin{proof}
  The first part of the claim follows directly from the definition of
  limits and Eq.~\eqref{eq:graph-metric-def}. The second part holds for
  $n_r = \min\{n \,:\, B_r(G_n,i_n) \cong B_r(G,i)\}$, which is
  guaranteed to be finite by the first part.
\end{proof}

\section{Locality}
An important observation is that the actions and the utility of an
agent, up to {\em time} $t$, depends only on the strategies of the
agents that are at {\em distance} at most $t$ from it. We formalize
this notion in this section.

\begin{claim}
  Let $\cG_1=(\mu_0,\mu_1,\disc,G_1)$ and $\cG_2(\mu_0,\mu_1,\disc,G_2)$
  be games.  Let $h$ be a rooted graph isomorphism between
  $B_{r+1}(G_1,i_1)$ and $B_{r+1}(G_2,i_2)$ for some $r>0$, and let
  $\strat_1^{j_1} = \strat_2^{j_2}$ for all $j_1 \in
  B_r(G_1,i_1)$ and $j_2=h(j_1)$.

  Then the games, as probability spaces, can be coupled so that
  $\action^{i_1}_t=\action^{i_2}_t$ for all $t \leq r$.
\end{claim}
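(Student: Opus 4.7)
The plan is to build an explicit coupling of the two games and then establish agreement of the root actions by induction on $t$.

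First I would construct the coupling. Sample one common state $S$ used in both games. Sample the private signals $\psignal^{j_1}$ in $\cG_1$ independently from $\mu_S$, and in $\cG_2$ set $\psignal^{j_2} = \psignal^{h^{-1}(j_2)}$ for $j_2 \in h(B_{r+1}(G_1,i_1))$, drawing all remaining signals in $\cG_2$ independently from $\mu_S$. Analogously, since the hypothesis says $\strat_1^{j_1}$ and $\strat_2^{h(j_1)}$ have the same law for $j_1 \in B_r(G_1,i_1)$, couple these pure-strategy-valued random variables to be equal almost surely on this set and sample the rest independently. A routine check confirms that each game's original marginal law is preserved: signals remain conditionally i.i.d.\ $\mu_S$ given $S$ and each agent's mixed strategy retains its prescribed distribution, since on the unmatched part of $V_2$ we sampled from the correct laws.

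Second, I would prove by induction on $t$ that, under this coupling, for every $t \in \{0,1,\ldots,r\}$ and every $j_1 \in B_{r-t}(G_1,i_1)$, the actions satisfy $\action^{j_1}_t = \action^{h(j_1)}_t$. The base case $t=0$ is immediate, since $\action^{j_1}_0$ is determined by the coupled strategy $\strat^{j_1}_{0}$ and the coupled belief $\belief_{j_1}$ alone. For the inductive step, $\action^{j_1}_t$ is determined by $\strat^{j_1}_{t}$, $\belief_{j_1}$, and the neighbor history $\{\action^{k_1}_s : k_1 \in \neigh{j_1},\, s<t\}$. For $j_1 \in B_{r-t}(G_1,i_1)$ each $k_1 \in \neigh{j_1}$ satisfies $\dist(i_1,k_1) \le \dist(i_1,j_1)+1 \le r-(t-1)$, so $k_1 \in B_{r-(t-1)}(G_1,i_1)$ and the inductive hypothesis at time $t-1$ supplies matching action histories; combined with matching strategies and beliefs this extends the match to time $t$. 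Specializing to $j_1=i_1$ and any $t \le r$ yields the claim.

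The one subtle point worth flagging is the reason the isomorphism must be given at radius $r+1$ rather than $r$: the mixed strategy $\strat^{j_1}_1$ has its domain $[0,1]\times\{0,1\}^{|\neigh{j_1}|\cdot t}$ determined by $|\neigh{j_1}|$, so for $\strat_1^{j_1} = \strat_2^{h(j_1)}$ to even be a meaningful equality when $j_1$ lies at distance exactly $r$ from $i_1$, the isomorphism $h$ must already map all of $\neigh{j_1}$, which is contained in $B_{r+1}(G_1,i_1)$. Beyond this bookkeeping there is no real obstacle; the argument is a direct induction, and no use is made of any equilibrium property or of properties of $\disc$.
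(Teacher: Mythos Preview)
Your proposal is correct and matches the paper's own proof essentially line for line: the paper constructs the same coupling (common $S$, identified private signals and pure strategies via $h$) and proves the same strengthened inductive statement $\action^{j_1}_t=\action^{h(j_1)}_t$ for all $j_1$ with $t \le r - \dist(i_1,j_1)$, which is exactly your condition $j_1 \in B_{r-t}(G_1,i_1)$. Your added remark explaining why the isomorphism is needed at radius $r+1$ is accurate and mirrors the paper's own parenthetical comment.
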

Some care needs to be taken with a statement such as ``agent $j_1$
plays the same strategy as agent $j_2$''; it can only be meaningful in
the context of a bijection that identifies each neighbor of $j_1$ with
each neighbor of $j_2$. We here naturally take this bijection to be
$h$, and accordingly demand that it be an isomorphism between balls of
radius $r+1$ (rather than $r$), so that the neighbors of the agents on
the surface of the ball are also mapped.
\begin{proof}
  Couple the two processes by equating the states of nature and
  setting $\psignal_{j_1} = \psignal_{h(j_1)}$ for all $j_1 \in
  B_r(G_1,i_1)$, and furthermore coupling the choices of pure
  strategies of $j_1$ and $h(j_1)$.

  We shall prove by induction a stronger statement, namely that under
  the claim hypothesis, $\action^{j_1}_t=\action^{j_2}_t$ for any $j_1
  \in B_r(G_1,i_1)$, $j_2=h(j_1)$ and $t \leq r - \dist(i_1,j_1)$.
  
  We prove the statement by induction on $t$. For $t=0$,
  $\action^{j_1}_0$ depends only on agent $j_1$'s private signal and
  choice of pure strategy, which are both equal to those of
  $j_2$. Hence $\action^{j_1}_0=\action^{j_2}_0$ for all $j \in
  B_r(G_1,i_1)$.

  Assume now that the claim holds up to some $t-1 \leq r-1$.  Let
  $j_1$ be such that $t \leq r - \dist(i_1,j_1)$. We would like to
  show that $\action^{j_1}_t=\action^{j_2}_t$. Let $k_1$ be a neighbor
  of $j_1$. Then $t-1 \leq r - \dist(i_1,k_1)$, and so
  $\action^{k_1}_{t'}=\action^{k_2}_{t'}$ for all $t' \leq t-1$, by the
  inductive assumption. Since $\action^{j_1}_t$ depends only on
  $j_1$'s private signals, choice of pure strategy and the actions of
  her neighbors in previous time periods, and since these are all
  identical to those of $j_2$, then it indeed follows that
  $\action^{j_1}_t=\action^{j_2}_t$.
\end{proof}

Recalling the definition
\begin{align*}
  \util_{i,t} = \P{\action^i_t=S},
\end{align*}
the following corollary is a direct consequence of this claim.
\begin{corollary}
  \label{thm:local-utils}
  Let $\cG_1=(\mu_0,\mu_1,\disc,G_1)$ and $\cG_2(\mu_0,\mu_1,\disc,G_2)$
  be games.  Let $h$ be a rooted graph isomorphism between
  $B_{r+1}(G_1,i_1)$ and $B_{r+1}(G_2,i_2)$ for some $r>0$, and let
  $\strat_1^{j_1} = \strat_2^{j_2}$ for all $j_1 \in
  B_r(G_1,i_1)$ and $j_2=h(j_1)$.

  Then $\util_{i_1,t}=\util_{i_2,t}$ for all $t \leq r$.
\end{corollary}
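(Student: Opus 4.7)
The plan is to deduce the corollary as an essentially immediate consequence of the preceding claim. Recall that $\util_{i,t} = \P{\action^i_t = S}$, which depends only on the marginal joint distribution of $(\action^i_t, S)$ under the law induced by the game. So, to prove $\util_{i_1,t} = \util_{i_2,t}$ it suffices to exhibit any coupling of the two games under which the joint distributions of $(\action^{i_1}_t, S_1)$ and $(\action^{i_2}_t, S_2)$ agree.

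First, I would invoke the coupling constructed in the proof of the preceding claim: set $S_1 = S_2 =: S$, identify $\psignal_{j_1}$ with $\psignal_{h(j_1)}$ for every $j_1 \in B_r(G_1,i_1)$, and couple the random draws of pure strategies of $j_1$ and $h(j_1)$ (for $j_1 \in B_r(G_1, i_1)$) so that they produce identical pure strategies, which is possible by the hypothesis $\strat_1^{j_1} = \strat_2^{h(j_1)}$. Outside of $B_r$ the signals and mixed strategies of the two games can be sampled independently; this does not affect the marginal laws of either game.

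Next, I would apply the conclusion of the preceding claim: under this coupling, $\action^{i_1}_{t} = \action^{i_2}_{t}$ for every $t \leq r$. Combined with $S_1 = S_2$, the events $\{\action^{i_1}_t = S_1\}$ and $\{\action^{i_2}_t = S_2\}$ coincide as subsets of the coupled probability space. Since the coupling preserves the marginal law of each game, we conclude
\begin{align*}
  \util_{i_1,t} = \P{\action^{i_1}_t = S_1} = \P{\action^{i_2}_t = S_2} = \util_{i_2,t}
\end{align*}
for all $t \leq r$.

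There is no real obstacle here; the only point that requires any care is checking that the coupling used in the claim's proof is a valid coupling (i.e.\ has the correct marginals on each game), which follows because the only constraints imposed are equalities of independent random variables across the two games, and no constraint within a single game is altered. Once this is observed, the corollary is immediate.
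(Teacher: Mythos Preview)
Your proposal is correct and matches the paper's approach exactly: the paper simply states that the corollary is a direct consequence of the preceding claim, and your write-up spells out precisely the one-line reasoning behind that assertion (the coupling equates both $S$ and the actions, hence the probabilities $\P{\action^{i_1}_t=S}$ and $\P{\action^{i_2}_t=S}$ coincide).
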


\section{A topology on strategies and the existence of equilibria
  for finite graphs}
\label{app:strategy-topology}
In the following theorem we show that the agents' set of strategies
admits a compact topology which preserves the continuity of the
utilities.  We use this topology to define our topology on equilibria,
which is a crucial component of the proof of our main theorem. We also
use it to infer the existence of equilibria for this game when the
number of players is infinite.

For a fixed private belief $\belief_i$, a pure strategy is a function
from the actions of neighbors to actions, which we call a
response. Formally, let $G=(V,E)$ be a social network.  A {\em
  response at time $t$} of an agent $i \in V$ is a function
$\resp_{i,t} : \{0,1\}^{|\neigh{i}|\cdot t} \to \{0,1\}$. A {\em
  response} of an agent $i$ is the sequence of functions
$\resp_i=(\resp_{i,0},\resp_{i,1},\ldots)$. Let $\cR_i$ be the space
of responses of agent $i$.

A (mixed) strategy of agent $i$ can be thought of as a measure on the
product space $[0,1] \times \cR_i$ of private beliefs and responses,
with the marginal on the first coordinate equaling the distribution of
$\belief_i$. Milgrom and Weber~\cite{milgrom1985distributional} call
this representation a {\em distributional strategy}. In the proof of
their Theorem 1, they show that for a game with incomplete information
and a finite number of players, and given some conditions, the weak
topology on distributional strategies is compact and keeps the
utilities continuous. Then, using Glicksberg's
theorem~\cite{glicksberg1952further} they infer that the game has an
equilibrium. The next theorem shows that these conditions apply in our
case, when the number of agents is finite.

\begin{lemma}
  \label{thm:finite-strat-topo}
  Fix $G=(V,E)$, with $V$ finite. Then for each agent $i$ there exists
  a topology $\cT_i$ on her strategy space such that the strategy
  space is compact and the utilities $\util_j$ are continuous in the
  product of the strategy spaces. Furthermore, there exists an
  equilibrium strategy profile.
\end{lemma}
\begin{proof}
  We prove by showing that the conditions of Theorem 1
  in~\cite{milgrom1985distributional} are met.
  \begin{enumerate}
  \item The set of private beliefs ({\em types} in the language
    of~\cite{milgrom1985distributional}) is $[0,1]$, a complete
    separable metric space, as required. Furthermore, the distribution
    of private beliefs is absolutely continuous with respect to the
    product of their marginal distributions. This fulfills condition
    R2 of~\cite{milgrom1985distributional}.
  \item The utilities $\util_j$ are bounded, measurable functions of
    the private beliefs and the responses.
  \item Define a metric $D$ on $i$'s responses $\cR_i$ by
    \begin{align*}
      D(\resp_i,\resp_i') = \exp\left(-\min\{t\,:\, \resp_{i,t} \neq
        \resp_{i,t}'\}\right).
    \end{align*}
    This can be easily verified to indeed be a metric. By a standard
    diagonalization argument it follows that $\cR_i$ is compact in the
    topology induced by this metric, as required.

    Furthermore, for fixed private beliefs, the utilities $\util_j$
    are equicontinuous in the responses: if a response is changed by
    at most $\delta = e^{-T}$ (in terms of the metric $D$) then it
    remains unchanged in the first $T$ time periods, and so the
    utilities are changed by at most $\eps =
    (1-\disc)\sum_{t=T}^\infty\disc^t = \disc^T$. This fulfills
    condition R1 of~\cite{milgrom1985distributional}.
  \end{enumerate}
  Since these conditions are met, it follows by the proof of Theorem 1
  in~\cite{milgrom1985distributional} that the mixed strategies of
  agent $i$ are compact in the weak topology $\cT_i$, and that the
  utilities $\util_j$ are, under $\cT_i$, a continuous function of the
  strategies. Furthermore, and again by Milgrom and Weber's Theorem 1,
  this game also has an equilibrium.
\end{proof}

Note that under the above defined topology on $\cR_i$ the set of pure
strategies is separable, and so the topology $\cT_i$ on (mixed)
strategies is metrizable, e.g.\ with the L\'evy-Prokhorov
metric~\cite{billingsley1999convergence}.

The following variant of Lemma~\ref{thm:finite-strat-topo} will be
useful later.
\begin{lemma}
  \label{thm:finite-strat-topo-time-t}
  Fix $G=(V,E)$, with $V$ finite. Then for each agent $i$ there exists
  a topology $\cT_i$ on her strategy space such that the strategy
  space is compact and the utilities in each time $t$, $\util_{j,t}$,
  are continuous in the product of the strategy spaces.
\end{lemma}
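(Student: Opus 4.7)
My plan is to proceed exactly as in Lemma~\ref{thm:finite-strat-topo}, using the same metric $D$ on responses and the same weak topology $\cT_i$ on strategies. I will apply Theorem 1 of Milgrom--Weber~\cite{milgrom1985distributional} to the payoff functions $\util_{j,t}$ in place of the discounted utilities $\util_j$, so the only condition I need to recheck is the equicontinuity requirement (condition R1).

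The types are still the private beliefs, whose law is non-atomic (and absolutely continuous with respect to the product of marginals), and $\util_{j,t}=\P{\action^j_t=S}$ is still a bounded measurable function of the private beliefs and the response profiles, so conditions on types and measurability transfer verbatim from Lemma~\ref{thm:finite-strat-topo}. The key observation that makes equicontinuity easy is that $\util_{j,t}$ depends only on the responses at times $0,1,\ldots,t$: by the recursive definition of actions, $\action^j_t$ is determined by the private beliefs together with $\resp_{i,0},\ldots,\resp_{i,t}$ for $i$ in the (finite) graph. Hence if two response profiles are within distance $e^{-(t+1)}$ in the metric $D$, each agent's response agrees with the other on time periods $0,\ldots,t$, and so the induced distribution of $\action^j_t$ is unchanged, making $\util_{j,t}$ equal. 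Equicontinuity then holds trivially (the function is locally constant as a function of the responses).

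With this stronger local constancy in hand, condition R1 of~\cite{milgrom1985distributional} is satisfied, and Theorem 1 of the same paper yields that the mixed strategy space of each agent is compact in the weak topology $\cT_i$ and that $\util_{j,t}$ is continuous in the product of the strategy spaces.

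The main obstacle, insofar as there is one, is purely bookkeeping: ensuring that replacing the discounted utility by a single time-$t$ payoff does not disturb the absolute continuity hypothesis on types or the measurability of payoffs in~\cite{milgrom1985distributional}. Both are inherited from the setup already verified in Lemma~\ref{thm:finite-strat-topo}, so the proof is a direct adaptation. No new mathematical input is needed.
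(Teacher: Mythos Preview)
Your proposal is correct and follows essentially the same route as the paper: the paper simply remarks that one can rerun the proof of Lemma~\ref{thm:finite-strat-topo} with the ``discount sequence'' concentrated at time $t$ (weight one at time $t$, zero elsewhere), which amounts to your observation that $\util_{j,t}$ is locally constant in the response metric and hence equicontinuous. Your version spells out the equicontinuity check more explicitly, but the argument is the same.
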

\begin{proof}
  The proof is identical to the proof of
  Lemma~\ref{thm:finite-strat-topo} above, except that we let each
  agent's expected utility be given by $\util'_j = \util_{j,t}$;
  that is, we set the discount factor to be one at time $t$ and $0$
  otherwise.  Since in the proof above we required of the discount
  factors nothing more than to have a finite sum, the proof still
  applies, and the utilities (in this case $\util_{j,t}$), are
  continuous in the product of the strategy spaces.
\end{proof}

\section{A topology on rooted graph strategy profiles}
\label{app:topology-proofs}

Let the {\em utility map at time $t$} $\util_t : \gs \to \R$ be given
by
\begin{align*}
  \util_t([G,i,\stratp]) = \util_{i,t}(G,\stratp).
\end{align*}

\begin{lemma}
  \label{thm:utility-cont}
  The utility map $u: \gs \to \R$ is continuous.
\end{lemma}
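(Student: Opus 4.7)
The argument combines the locality of utilities (Corollary~\ref{thm:local-utils}), which says that $\util_{i,t}$ depends only on $B_{t+1}(G,i)$ and the strategies of agents in $B_{t}(G,i)$, with the finite-game continuity of time-$t$ utilities (Lemma~\ref{thm:finite-strat-topo-time-t}). The plan is to truncate the discounted sum at some large $T$, apply Lemma~\ref{thm:finite-strat-topo-time-t} to the finite subgame on $B_{T+2}(G,i)$ to get continuity in the strategies on $B_{T+1}(G,i)$, and then use the isomorphism $h$ implicit in the $D$-metric to transport strategies between $G$ and $G'$, chaining the two estimates via Corollary~\ref{thm:local-utils}.

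Fix $[G,i,\stratp]\in\gs$ and $\eps>0$. Choose $T$ with $\disc^{T+1}<\eps/4$; the tail $(1-\disc)\sum_{t>T}\disc^{t}\util_{i,t}$ is then below $\eps/4$ in either game. Let $G^{\star}$ be the finite induced subgraph of $G$ on the vertices of $B_{T+2}(G,i)$; since every agent in $B_{T+1}(G,i)$ retains all of its out-neighbors in $G^{\star}$, Corollary~\ref{thm:local-utils} implies that $\util_{i,t}(G,\stratp)$ equals the corresponding utility in the finite game on $G^{\star}$ for every $t\le T+1$, independent of how the strategies on $B_{T+2}(G,i)\setminus B_{T+1}(G,i)$ are chosen. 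Lemma~\ref{thm:finite-strat-topo-time-t} then supplies $\delta_{0}>0$ such that perturbing each strategy $\strat^{j}$ with $j\in B_{T+1}(G,i)$ by less than $\delta_{0}$ in the metric $d$ changes $\util_{i,t}(G,\stratp)$ by less than $\eps/4$ for every $t\le T$.

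Set $\delta=\min(\delta_{0},2^{-(T+1)})$ and suppose $D([G,i,\stratp],[G',i',\stratpx])<\delta$. By~(\ref{eq:D-def}) there exist $r\ge T+1$ and $h\in H(r+1)$ with $\max_{j\in B_{r}(G,i)} d_{h}(\strat^{j},\stratx^{h(j)})<\delta_{0}$; since $r+1\ge T+2$, the restriction of $h$ to $B_{T+2}(G,i)$ is a rooted isomorphism onto $B_{T+2}(G',i')$. Define a strategy profile $\stratpx'$ on $G$ by letting $(\stratpx')^{j}$ be the $h$-pullback of $\stratx^{h(j)}$ for each $j\in B_{T+1}(G,i)$ (well-defined because $h$ bijects $\neigh{j}$ with $\neigh{h(j)}$) and $(\stratpx')^{j}=\strat^{j}$ elsewhere. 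Then $d(\strat^{j},(\stratpx')^{j})=d_{h}(\strat^{j},\stratx^{h(j)})<\delta_{0}$ on $B_{T+1}(G,i)$, so $|\util_{i,t}(G,\stratp)-\util_{i,t}(G,\stratpx')|<\eps/4$ for $t\le T$. Applying Corollary~\ref{thm:local-utils} to $(G,\stratpx')$ and $(G',\stratpx)$ along $h$ gives $\util_{i,t}(G,\stratpx')=\util_{i',t}(G',\stratpx)$ for $t\le T+1$, and a triangle inequality combining these equalities with the two tail bounds yields $|\util([G,i,\stratp])-\util([G',i',\stratpx])|<\eps$.

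The main obstacle is the bookkeeping needed to transport strategies between graphs: an agent's strategy in $G$ is a function on the action histories of its out-neighbors in $G$, which are a priori unrelated to those of $h(j)$ in $G'$, so any cross-graph comparison or pullback of strategies only makes sense once $h$ has been fixed on a large enough ball. Once this identification is in hand, the rest is a standard truncate-and-approximate argument that weaves two applications of Corollary~\ref{thm:local-utils} around the finite-game continuity supplied by Lemma~\ref{thm:finite-strat-topo-time-t}.
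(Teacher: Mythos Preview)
Your proof is correct and uses the same core ingredients as the paper: locality (Corollary~\ref{thm:local-utils}) to reduce to a finite ball, finite-game continuity (Lemma~\ref{thm:finite-strat-topo-time-t}) on that ball, and the exponential discount to control the tail. The only organizational difference is that the paper first proves each time-$t$ utility map $\util_t$ is continuous (via a sequential argument, pulling back a convergent sequence $[G_n,i_n,\stratp_n]$ to the fixed finite graph $B_{t+1}(G,i)$) and then invokes bounded convergence to sum, whereas you truncate the discounted sum at a single $T$ and run a direct $\eps$--$\delta$ argument; these are equivalent presentations of the same idea.
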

\begin{proof}
  We will prove the claim by showing that $\util_t$ is continuous. The
  claim will follow because, by the bounded convergence theorem, if
  $f$ is a linear combination of the uniformly bounded maps
  $\{f_t\}_{t=0}^\infty$, with summable positive coefficients, then
  the continuity of all the maps $f_t$ implies the continuity of $f$.

  Let $[G_n,i_n,\stratp_n] \to_n [G,i,\stratp]$. We will show that
  $\util_t([G_n,i_n,\stratp_n]) \to_n \util_t(G,i,\stratp)$.

  Consider a sequence of games $\cG_n$ which are all played on the
  finite graph $B = B_{t+1}(G,i)$.  Since $[G_n,i_n,\stratp_n] \to_n
  [G,i,\stratp]$ then there exists some $N$ such that, for $n>N$, it
  holds that $D([G_n,i_n,\stratp_n], [G,i,\stratp]) <
  2^{-(t+1)}$. Hence, by the definition of $D(\cdot,\cdot)$, it holds
  that $B \cong B_{t+1}(G_n,i_n)$ for $n>N$. Denote by $h_n$ an
  isomorphism between the two balls that minimizes $\max_{j \in
    B_t(G,i)}d_{h_n}(\strat^j,\strat_n^{h_n(j)})$, as appears in the
  definition of $D(\cdot,\cdot)$.

  Let each agent $j$ in $B_t(G,i)$ play $\strat_n^{h_n(j)}$ in
  $\cG_n$, and let the rest of the agents in $B$ (i.e., those at
  distance $t+1$ from $i$) play arbitrary strategies. Denote by
  $\stratpx_n$ the strategy profile played by the agents at game
  $\cG_n$, and denote by $\stratpx$ the restriction of $\stratp$ to
  $B$. By Corollary~\ref{thm:local-utils},
  $\util_t([G_n,i_n,\stratp_n]) = \util_t([B,i,\stratpx_n])$ and
  $\util_t([G,i,\stratp]) = \util_t([B,i,\stratpx])$. Therefore it is
  left to show that $\util_t([B,i,\stratpx_n]) \to_n
  \util_t([B,i,\stratpx])$.

  Now, $D([G_n,i_n,\stratp_n], [G,i,\stratp]) \to 0$. Hence
  $d(\stratx^j,\stratx_n^j) \to 0$, and so the strategies of each
  agent $j$ in $B_t(G,i)$ converge to the strategy $\stratx^j=
  \strat^j$. Furthermore, the strategies in $B_t(G,i)$ converge
  uniformly, since there is only a finite number of them, and so the
  strategy profile converges in the product topology. It follows that
  $\util_{i_n,t}$, which by Lemma~\ref{thm:finite-strat-topo-time-t}
  is a continuous function of the strategies in $B_t(G,i)$, converges
  to $\util_{i,t}$.
\end{proof}

\begin{lemma}
  \label{thm:equi-conv-equi}
  The set of equilibrium rooted graph strategy profiles is closed.
\end{lemma}
\begin{proof}
  Let $\lim_n[G_n,i_n,\stratp_n] = [G,i,\stratp]$, and let each
  $\stratp_n$ be an equilibrium.

  Let $\stratpx$ be a strategy profile for the agents in $G$ such that
  $\stratpx^j = \stratp^j$ for all $j \neq i$. We will show that
  $\util_i(G,\stratp) \geq \util_i(G,\stratpx)$.

  Let $\stratpx_n$ be the strategy profile for agents on $G_n$ defined
  by $\stratpx_n^j = \stratp_n^{j_n}$ for $j_n \neq i_n$, and let
  $\stratpx_n^{i_n} = \stratpx^i$. Note that $[G_n,i_n,\stratpx_n]
  \to_n [G,i,\stratx]$.

  Since $\stratp_n$ is an equilibrium profile of $\cG_n$,
  \begin{align*}
    \util_{i_n}(G,\stratp_n) \geq \util_{i_n}(G,\stratpx_n).
  \end{align*}
  Taking the limit of both sides and substituting the definition of
  the utility map we get that
  \begin{align*}
    \lim_{n \to \infty}\util([G_n,i_n,\stratp_n]) \geq \lim_{n \to
      \infty}\util(G_n,i_n,\stratpx_n).
  \end{align*}
  Finally, since by Lemma~\ref{thm:utility-cont} above the utility map
  is continuous, we have that
  \begin{align*}
    \util([G,i,\stratp]) \geq \util([G,i,\stratpx_n]).
  \end{align*}
\end{proof}

\begin{claim}
  \label{thm:conv-conv}
  Let $R \subseteq \scg$ be a compact set of rooted graphs, and let
  $\mathcal{SP}(R)$ be the set of rooted graph strategy profiles
  $[G,i,\stratp]$ with $[G,i] \in R$. Then $\mathcal{SP}(R)$ is
  compact.
\end{claim}
\begin{proof}
  Let $\{[G_n,i_n,\stratp_n]\}_{n=1}^\infty$ be a sequence of rooted
  graph strategy profiles in $\mathcal{SP}(R)$. We will prove the
  claim by showing that it has a converging subsequence.
  
  Let $[G,i]$ be the limit of some subsequence of
  $\{[G_n,i_n]\}_{n=1}^\infty$; this exists because $R$ is compact.
  
  By Claim~\ref{thm:graph-lims} there exists a sub-subsequence
  $\{[G_{n_r},i_{n_r}]\}_{r=1}^\infty$ such that $B_r(G_{n_r},i_{n_r})
  \cong B_r(G,i)$. We will therefore assume without loss of generality
  that $n_r=n$, i.e., limit ourselves to this sub-subsequence.
  Accordingly, let $h_n:V \to V_n$ be a sequence rooted graph
  isomorphisms between $B_n(G,i)$ and $B_n(G_n,i_n)$. Note that since
  out-degrees are finite then $B_n(G,i)$ is finite for all $n$.

  Let $j$ be a vertex of $G$, and let $r_j$ be the graph distance
  between $i$ and $j$. For $n \geq r_j+1$, denote
  $j_n=h_n(j)$. Note that $h_n$ also maps the neighbors of $j_n$
  to the neighbors of $j$.

  We will now construct $\stratp$, the strategy profile of the agents
  in $G$ such that $[G_{n_k},i_{n_k},\stratp_{n_k}] \to_k
  [G,i,\stratp]$, for some subsequence $\{n_k\}$.  We start with agent
  $1$ of $G$. Since $\cT_1$ is compact, the sequence
  $\{\strat_n^{1_n}\}_{n=r_1+1}^\infty$ has a converging subsequence,
  i.e., one along which $d_{h_n}(\strat_n^{1_n},\strat^1) \to_n 0$ for
  some strategy $\strat^1$, which we will assign to agent $1$ in
  $G$. Likewise, this subsequence has a subsequence along which
  $d_{h_n}(\strat_n^{2_n},\strat^2) \to_n 0$, etc. Thus, by a standard
  diagonalization argument, we have that there exists a subsequence
  $\{[G_{n_k},i_{n_k},\stratp_{n_k}]\}_{k=1}^\infty$ with isomorphisms
  $h_{n_k}$ such that $d_{h_{n_k}}(\strat_{n_k}^{j_{n_k}},\strat_j)
  \to_k 0$ for all $j$. It is now straightforward to verify that
  $D\Big([G_{n_k},i_{n_k},\stratp_{n_k}],[G,i,\stratp]\Big) \to_k 0$:
  pick some $r>0$ and then $k$ large enough so that $h_{n_k}$ is an
  isomorphism between $B_{r+1}(G_n,i_n)$ and $B_{r+1}(G,i)$. Then by
  definition (Eq.~\eqref{eq:D-def})
  \begin{align*}
    D\Big([G_{n_k},i_{n_k},\stratp_{n_k}],[G,i,\stratp]\Big) \leq
    \max\left\{2^{-r}, \max_{j \in
        B_r(G,i)}d_{h_{n_k}}(\strat_{n_k}^{j_{n_k}},\strat^j)\right\}.
  \end{align*}
  If we now further increase $k$ then
  $d_{h_{n_k}}(\strat_{n_k}^{j_{n_k}},\strat^j) \to_k 0$, and since
  $B_r(G,i)$ is finite then we have that
  $D\Big([G_{n_k},i_{n_k},\stratp_{n_k}],[G,i,\stratp]\Big) \leq
  2^{-r}$, for $k$ large enough. Since this holds for all $r$ then
  \begin{align*}
    D\Big([G_{n_k},i_{n_k},\stratp_{n_k}],[G,i,\stratp]\Big) \to_k 0
  \end{align*}
  and
  \begin{align*}
    \lim_{k \to \infty}[G_{n_k},i_{n_k},\stratp_{n_k}] = [G,i,\stratp].
  \end{align*}
\end{proof}

An easy consequence of Claim~\ref{thm:conv-conv} and
Lemma~\ref{thm:equi-conv-equi} is the following claim. Let $\cR$ be a
set of rooted graph isomorphism classes. Denote by $\eqs(\cR)$ the set
of rooted graph strategy profiles $[G,i,\stratp]$ such that $[G,i] \in
\cR$ and $\stratp$ is an equilibrium strategy profile. For a set of
graphs $\cK$, let $\eqs(\cK)$ be a shortened notation for
$\eqs(\cR(\cK))$.
\begin{claim}
  \label{clm:cB-compact}
  Let $\cR \in \scg$ be a precompact set of strongly connected rooted
  graphs. Then $\overline{\eqs(\cR)}$ is a compact set of equilibrium
  rooted graph strategy profiles.
\end{claim}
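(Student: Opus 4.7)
The plan is to exploit the metrizability of the topology on $\gs$ and prove sequential compactness, since $\gs(\cR)$ is a metric space (its topology is inherited from the metric $D(\cdot,\cdot)$ on $\gs$). So I would take an arbitrary sequence $\{[G_n, i_n, \stratp_n]\}_{n=1}^\infty$ in $\gs(\cR)$ and show it has a subsequence converging to a point that lies in $\gs(\cR)$.

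First, since each $[G_n, i_n] \in \cR$ and $\cR$ is compact (hence sequentially compact by metrizability), I would extract a subsequence along which the underlying rooted graph isomorphism classes converge to some $[G,i] \in \cR$. Then I would invoke Claim~\ref{thm:conv-conv}, which tells us that along a further subsequence the full rooted graph strategies converge: $[G_{n_k}, i_{n_k}, \stratp_{n_k}] \to [G, i, \stratp]$ for some strategy profile $\stratp$ on $G$.

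It remains only to verify that the limit lies in $\gs(\cR)$. Membership in $\cR$ is automatic since $[G,i]$ is the limit of the $[G_{n_k}, i_{n_k}] \in \cR$ and $\cR$ is (sequentially) closed. That $\stratp$ is an equilibrium profile for $\cG = (\mu_0,\mu_1,\disc,G)$ is exactly the content of Claim~\ref{thm:equi-conv-equi}, applied to the converging sequence of equilibrium rooted graph strategies. Hence $[G,i,\stratp] \in \gs(\cR)$, completing the proof.

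There is no real obstacle here: the claim is essentially an assembly of the two main facts already established, namely that the equilibrium property is preserved under limits (Claim~\ref{thm:equi-conv-equi}) and that sequences of graph strategies on convergent graphs admit converging subsequences (Claim~\ref{thm:conv-conv}). The only minor point to be careful about is the double extraction of subsequences — first to get convergence of the bare rooted graphs in $\cR$, then to get convergence of the associated strategy profiles — which is harmless since a subsequence of a convergent sequence has the same limit.
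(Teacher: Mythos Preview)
Your proposal is correct and follows essentially the same approach as the paper: reduce to sequential compactness via metrizability, first extract a subsequence along which the rooted graphs converge in $\cR$, then apply Claim~\ref{thm:conv-conv} to get convergence of the full graph strategies, and finally invoke Claim~\ref{thm:equi-conv-equi} to conclude the limit is an equilibrium and hence lies in $\gs(\cR)$. Your remark about the double subsequence extraction is exactly the structure of the paper's argument.
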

\begin{proof}
  We will prove the claim by showing that any sequence in $\eqs(\cR)$
  has a converging subsequence with a limit that is an equilibrium.

  Let $\{[G_n,i_n,\stratp_n]\}_{n=1}^\infty$ be a sequence of points
  in $\eqs(\cR)$. Since $\cR$ is precompact in $\scg$, the sequence
  $\{[G_n,i_n]\}_{n=1}^\infty$ has a converging subsequence
  $\{[G_{n_k},i_{n_k}]\}_{k=1}^\infty$ that converges to some $[G,i]
  \in \scg$. Hence, by Claim~\ref{thm:conv-conv}, the sequence
  $\{[G_n,i_n,\stratp_n]\}_{k=1}^\infty$ has a converging subsequence
  that, for some $\stratp$, converges to $[G,i,\stratp]$. Finally, by
  Lemma~\ref{thm:equi-conv-equi} $\stratp$ is an equilibrium strategy
  profile for $G$, and so $[G,i,\stratp]$ is an equilibrium rooted
  graph strategy.
\end{proof}

Using these claims, we are now ready to easily prove that every game
has an equilibrium; the one additional ingredient will be
Lemma~\ref{thm:finite-strat-topo}, which shows that finite games have
equilibria.
\begin{theorem}
  \label{thm:equi-exists}
  Every game $\cG$ has an equilibrium.  
\end{theorem}
\begin{proof}
  Let $\cG = (\mu_0,\mu_1,\disc,G)$. Let $i$ be a vertex in $G$,
  denote $G_n = B_n(G,i)$, and denote its root by $i_n$. Let
  $\left\{\cG_n = (\mu_0,\mu_1,\disc,G_n)\right\}_{n=1}^\infty$ be a
  sequence of finite games with equilibria strategy profiles
  $\stratp_n$; finite games have equilibria by
  Lemma~\ref{thm:finite-strat-topo}. Then $[G_n,i_n] \to_n [G,i]$, and
  so by Claim~\ref{thm:conv-conv} we have that there exists a strategy
  profile $\stratp$ and a subsequence
  $\{[G_{n_k},i_{n_k}]\}_{k=1}^\infty$ such that
  \begin{align*}
    \lim_{k \to \infty}[G_{n_k},i_{n_k},\stratp_{n_k}] = [G,i,\stratp].
  \end{align*}
  Finally, by Lemma~\ref{thm:equi-conv-equi}, $\stratp$ is an
  equilibrium profile of $\cG$.
\end{proof}

\section{Agreement}
\label{app:agreement-proofs}
Denote by $Z^i_t$ the log-likelihood ratio of the events $S=1$ and
$S=0$, conditioned on $\info^i_{t}$, the information available to
agent $i$ at time $t$
\begin{align*}
  Z^i_t = \log \frac{\CondP{S=1}{\info^i_t}}{\CondP{S=0}{\info^i_t}},
\end{align*}
and let
\begin{align*}
  Z^i_{\infty} = \log
  \frac{\CondP{S=1}{\info^i_\infty}}{\CondP{S=0}{\info^i_\infty}}.
\end{align*}
  
Let $Y^i_t$ be defined as follows:
\begin{align*}
  Y^i_t = \log \frac
  {\CondP{\hist{i}{t}}{\action^i_{[0,t)},\strat^i,S=1}}
  {\CondP{\hist{i}{t}}{\action^i_{[0,t)},\strat^i,S=0}},
\end{align*}
where $\action^i_{[0,t)}$ is the sequence of actions of $i$ up to time
$t-1$, and $\action^{\neigh{i}}_{[0,t)}$ is the sequence of actions of
$i$'s neighbors up to time $t-1$.  Finally, let
\begin{align*}
  Y^i_{\infty} = \lim_{t \to \infty}Y^i_t.
\end{align*}
Note that it is not clear that the limit $\lim_t Y^i_t$ exists. We show
this in the following claim.

\begin{claim}
  \label{thm:Y-Z}
  Denote by $\belief_{-i}$ the private beliefs of all agents but
  $i$. Then
  \begin{enumerate}
  \item $\lim_t Z^i_t=Z^i_{\infty}$ almost surely.
  \item $Z^i_t = Y^i_t + Z^i_0$.
  \item $\lim_t Y^i_t=Y^i_{\infty}$ almost surely, and
    $Y^i_{\infty}$ is measurable in
    $\sigma(\action^i_{[0,\infty)},\belief_{-i},\stratp)$.
  \end{enumerate}

\end{claim}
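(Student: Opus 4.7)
The plan is to prove the three items in order, using martingale convergence for (1), a Bayes-rule factorization combined with a conditional-independence argument for (2), and then deducing (3) from (1), (2) together with a constructive measurability argument.

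For (1), I would invoke L\'evy's upward convergence theorem. Since $\hist{i}{t}\subseteq\hist{i}{t+1}$ for every $t$, the sigma-algebras $\info^i_t$ form an increasing filtration, so $M_t := \CondP{S=1}{\info^i_t}$ is a $[0,1]$-valued, uniformly integrable martingale and converges almost surely to $\CondP{S=1}{\info^i_\infty}$. Applying the same to $\CondP{S=0}{\info^i_t}$ and then taking the logarithm of the ratio (allowing values in $[-\infty,\infty]$) yields $Z^i_t\to Z^i_\infty$ almost surely.

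For (2), I would use Bayes' rule (with uniform prior on $S$) and the independence of $\strat^i$ from $S$ to write
\[ Z^i_t \;=\; \log\frac{P(\belief_i,\strat^i,\hist{i}{t}\mid S=1)}{P(\belief_i,\strat^i,\hist{i}{t}\mid S=0)} \;=\; Z^i_0 \;+\; \log\frac{P(\hist{i}{t}\mid \belief_i,\strat^i,S=1)}{P(\hist{i}{t}\mid \belief_i,\strat^i,S=0)}. \]
The key structural observation is a conditional independence: splitting $\hist{i}{t}=(\action^i_{[0,t)},H^{-i}_t)$, where $H^{-i}_t:=\action^{\neigh{i}\setminus\{i\}}_{[0,t)}$, the neighbors of $i$ observe only $i$'s actions and not $\belief_i$, so $H^{-i}_t\perp\belief_i\mid(\action^i_{[0,t)},\strat^i,S)$. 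Using this, together with the fact that the self-consistency indicator $\mathbf{1}[\action^i_{[0,t)}\text{ is the recursive output of }\strat^i\text{ on }\belief_i\text{ and }H^{-i}_t]$ does not depend on $S$ and therefore cancels across $S=0,1$, one may legitimately shift the conditioning from $\belief_i$ to $\action^i_{[0,t)}$:
\[ \log\frac{P(\hist{i}{t}\mid\belief_i,\strat^i,S=1)}{P(\hist{i}{t}\mid\belief_i,\strat^i,S=0)} \;=\; \log\frac{P(\hist{i}{t}\mid\action^i_{[0,t)},\strat^i,S=1)}{P(\hist{i}{t}\mid\action^i_{[0,t)},\strat^i,S=0)} \;=\; Y^i_t. \]

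For (3), the almost-sure convergence of $Y^i_t$ follows immediately: by (1) and (2), $Y^i_t=Z^i_t-Z^i_0\to Z^i_\infty-Z^i_0$ a.s., and this limit is $Y^i_\infty$. For the measurability claim, $Y^i_t$ is determined by $(\hist{i}{t},\strat^i)$ through a fixed functional coming from the (known) joint law of the game, so it suffices to show that $\hist{i}{t}\in\sigma(\action^i_{[0,\infty)},\belief_{-i},\stratp)$. This I would prove by induction on $t'$: given $\action^i_{[0,\infty)}$, $\belief_{-i}$, and $\stratp$, one reconstructs $\action^j_{t'}$ for every $j\neq i$ and every $t'$ by applying $\strat^j$ to $\belief_j$ and to the already-reconstructed neighbor history, plugging in the prescribed value of $\action^i$ whenever $i$ lies in a neighbor's history. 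Hence each $Y^i_t$, and therefore its almost-sure limit $Y^i_\infty$, lies in $\sigma(\action^i_{[0,\infty)},\belief_{-i},\stratp)$.

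The main obstacle is the conditional-independence step in (2): being precise about the joint factorization so that the self-consistency indicator cancels cleanly across $S=0,1$, and so that the resulting log-ratio is legitimately $Y^i_t$ and not $Y^i_t$ plus an additional $\action^i$-marginal term. The other ingredients---the martingale convergence in (1), and the recursive reconstruction of actions for the measurability in (3)---are standard.
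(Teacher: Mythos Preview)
Your proposal is correct and follows essentially the same route as the paper: martingale convergence for (1), Bayes' rule plus the conditional-independence observation that $\hist{i}{t}$ depends on $\belief_i$ only through $\action^i_{[0,t)}$ for (2), and then (1)+(2) plus the recursive reconstruction of all non-$i$ actions from $(\action^i_{[0,\infty)},\belief_{-i},\stratp)$ for (3). The paper's proof is terser in (2)---it simply asserts the conditional independence without your explicit decomposition into $H^{-i}_t$ and the self-consistency indicator---so your added care there is welcome rather than a deviation.
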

\begin{proof}
  \begin{enumerate}
  \item
    Recall that
    \begin{align*}
      Z^i_t &=  \log
      \frac
      {\CondP{S=1}{\info^i_t}}
      {\CondP{S=0}{\info^i_t}}.
    \end{align*}
    Since $\{\info^i_t\}_{t=0}^\infty$ is a filtration then
    ${\CondP{S=1}{\info^i_t}}$ is a martingale, which converges a.s.\ since
    it is bounded. Hence $Z^i_t$ also converges, and in particular
    \begin{align*}
      \lim_{t \to \infty} Z^i_t =  \log
      \frac
      {\CondP{S=1}{\info^i_{\infty}}}
      {\CondP{S=0}{\info^i_{\infty}}} = Z^i_{\infty}.
    \end{align*}
  \item 
    By the definition of $\info^i_t$
    \begin{align*}
      Z^i_t &=  \log
      \frac
      {\CondP{S=1}{\belief_i,\strat^i,\hist{i}{t}}}
      {\CondP{S=0}{\belief_i,\strat^i,\hist{i}{t}}}\\
      &= \log
      \frac
      {\CondP{\hist{i}{t}}{\belief_i,\strat^i,S=1}}
      {\CondP{\hist{i}{t}}{\belief_i,\strat^i,S=0}}
      \frac
      {\CondP{S=1}{\belief_i,\strat^i}}
      {\CondP{S=0}{\belief_i,\strat^i}},
    \end{align*}
    where the second equality follows from Bayes' law. Now, conditioned
    on $S$ and $i$'s pure strategy $\strat^i$, the probability for a
    sequence of actions $\hist{i}{t}$ of $i$'s neighbors depends on
    $\belief_i$ only in as much as $\belief_i$ affects $i$'s actions up
    to time $t-1$, $\action^i_{[0,t)}$. Hence
    $\CondP{\hist{i}{t}}{\belief_i,\strat^i,S} =
    \CondP{\hist{i}{t}}{\action^i_{[0,t)},\strat^i,S}$. Note also that
    \begin{align*}
      \frac
      {\CondP{S=1}{\belief_i,\strat^i}}
      {\CondP{S=0}{\belief_i,\strat^i}} = Z^i_0.
    \end{align*}
    Therefore
    \begin{align*}
      Z^i_t &=  Y^i_t+ Z^i_0.
    \end{align*}
  \item Since $Z^i_t$ converges almost surely and $Z^i_t = Y^i_t+
    Z^i_0$ then $Y^i_t$ also converges almost surely.  Since each
    $Y^i_t$ is a function of $\hist{i}{t}$ and $\strat^i$, it follows that
    their limit, $Y^i_{\infty}$, is measurable in
    $\sigma(\hist{i}{\infty},\strat^i)$. However, given $\stratp$,
    $\hist{i}{\infty}$ is a function of $\belief_{-i}$ and
    $\action^i_{[0,\infty)}$: for a choice of pure strategies the
    actions of all agents but $i$ can be determined given their
    private signals and the actions of $i$. Hence $Y^i_{\infty}$ is
    also measurable in $\sigma(\action^i_{[0,\infty)},\belief_{-i},\stratp)$.
  \end{enumerate}
\end{proof}

\begin{claim}
  \label{thm:Z-non-atomic}
  The distribution of $Z^i_0$ is non-atomic, as is the distribution of
  $Z^i_0$ conditioned on $S$.
\end{claim}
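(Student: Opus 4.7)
The plan is to reduce $Z^i_0$ to a deterministic monotone function of the private belief $\belief_i$, and then invoke the non-atomicity hypothesis from Definition~\ref{def:measures}.

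First, I would observe that $\strat^i$ is chosen independently of $(S,\bar\psignal)$ by definition, so conditioning on $\strat^i$ does nothing to the posterior on $S$. Thus
\[
Z^i_0 = \log\frac{\CondP{S=1}{\belief_i,\strat^i}}{\CondP{S=0}{\belief_i,\strat^i}}
     = \log\frac{\CondP{S=1}{\belief_i}}{\CondP{S=0}{\belief_i}}
     = \log\frac{\belief_i}{1-\belief_i},
\]
where in the last equality I use the definition $\belief_i = \CondP{S=1}{\psignal_i}$ together with the fact that $\belief_i$ is $\sigma(\belief_i)$-measurable and the prior on $S$ is uniform. So $Z^i_0$ is a strictly monotone (continuous, invertible) function of $\belief_i$, and hence has no atoms iff $\belief_i$ has no atoms.

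Next I would verify that $\belief_i$ is itself non-atomic. By Bayes, $\belief_i = \tfrac{Z}{1+Z}$ where $Z = \tfrac{d\mu_1}{d\mu_0}(\psignal_i)$ is the likelihood ratio; this is again a monotone bijection between $Z \in [0,\infty]$ and $\belief_i \in [0,1]$. Since Definition~\ref{def:measures} asserts that the distribution of $Z$ is non-atomic, the distribution of $\belief_i$ is non-atomic, and therefore so is the distribution of $Z^i_0$.

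Finally, for the conditional statement, for any $z \in \R$ the decomposition
\[
\P{Z^i_0 = z} = \tfrac{1}{2}\CondP{Z^i_0 = z}{S=0} + \tfrac{1}{2}\CondP{Z^i_0 = z}{S=1}
\]
shows that if the left-hand side vanishes for every $z$, so does each term on the right. Hence the laws of $Z^i_0$ under $\mu_0$ and $\mu_1$ are also non-atomic. No step here is really a serious obstacle; the only point requiring a little care is justifying that $\strat^i$ can be dropped from the conditioning, which follows from the independence assumptions in Definition~\ref{def:strategy}.
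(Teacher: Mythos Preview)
Your proof is correct and follows essentially the same route as the paper: reduce $Z^i_0$ to $\log\tfrac{\belief_i}{1-\belief_i}$ by dropping $\strat^i$ from the conditioning via independence, then invoke non-atomicity of $\belief_i$ (the paper cites the comment after Definition~\ref{def:belief}, while you spell out the monotone relation to the likelihood ratio $Z$ from Definition~\ref{def:measures}). Your argument for the conditional part---writing $\P{Z^i_0=z}$ as the average of the two conditional atoms---is exactly what the paper means by ``Since $S$ takes only two values then the same holds when conditioned on $S$.''
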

\begin{proof}
  By definition,
  \begin{align*}
    Z^i_0 = \log
    \frac
    {\CondP{S=1}{\belief_i,\strat^i}}
    {\CondP{S=0}{\belief_i,\strat^i}}.
  \end{align*}
  However, the choice of strategy $\strat^i$ is independent of both
  $\belief_i$ and $S$, and so
  \begin{align*}
    Z^i_0 = \log
    \frac
    {\CondP{S=1}{\belief_i}}
    {\CondP{S=0}{\belief_i}} = \log\frac{\belief_i}{1-\belief_i}.
  \end{align*}
  Since the distribution of $\belief_i$ is non-atomic (see the
  definition of $\belief_i$ and the comment after it) then so is the
  distribution of $Z^i_0$. Since $S$ takes only two values then the
  same holds when conditioned on $S$.
\end{proof}

\begin{theorem}
  \label{thm:bestres-optset}
  For any agent $i$ it holds that $\bestres^i_{\infty} =
  \optset_i$ almost surely, in any equilibrium.
\end{theorem}
\begin{proof}
  By its definition, $\optset_i$ takes values in
  $\{0,1,\{0,1\}\}$, and by Theorem~\ref{thm:rsv1} we have
  that $\optset_i \subseteq \bestres^i_{\infty}$. Therefore the claim
  holds when $\bestres^i_{\infty} = 0$ or $\bestres^i_{\infty} =
  1$, and it remains to show that $\optset_i = \{0,1\}$ when
  $\bestres^i_{\infty} = \{0,1\}$, or that $\P{\optset_i \neq
    \{0,1\},\bestres^i_{\infty}=\{0,1\}} = 0$.

  Let $a = (a_0,a_1,\ldots)$ be a sequence of actions, each in
  $\{0,1\}$, in which only one action appears infinitely often.  Since
  there are only countably many such sequences, then if $\P{\optset_i
    \neq \{0,1\},\bestres^i_{\infty}=\{0,1\}} > 0$, then there exists such a
  sequence $a$ for which $\P{\action^i_{[0,\infty)} = a,
    \bestres^i_{\infty}=\{0,1\}} > 0$. We shall prove the claim by showing that
  $\P{\action^i_{[0,\infty)} = a, \bestres^i_{\infty}=\{0,1\}} = 0$.

  Recall that by Claim~\ref{thm:Y-Z}, the event that
  $\bestres^i_{\infty}=\{0,1\}$ is equal to the event that $Z^i_0 =
  -Y^i_{\infty}$. Recall also that by the same claim, $Y^i_{\infty}$
  is measurable in
  $\sigma(\action^i_{[0,\infty)},\belief_{-i},\stratp)$. Hence
  \begin{align*}
    \CondP{\action^i_{[0,\infty)}=a, \bestres^i_{\infty} = \{0,1\}}{S,\belief_{-i},\stratp} &=
    \CondP{\action^i_{[0,\infty)}=a,Z^i_0 =
      -Y^i_{\infty}(a,\belief_{-i},\stratp)}{S,\belief_{-i},\stratp}\\
    &\leq \CondP{Z^i_0 =
      -Y^i_{\infty}(a,\belief_{-i},\stratp)}{S,\belief_{-i},\stratp}
  \end{align*}
  Now, by Claim~\ref{thm:Z-non-atomic}, $Z^i_0$ conditioned on $S$
  has a non-atomic distribution. Further conditioning on $\stratp$ and
  $\belief_{-i}$ leaves its distribution unchanged, since it is
  independent of the former, and independent of the latter conditioned
  on $S$. Hence the probability that it equals
  $-Y^i_{\infty}(a,S,\belief_{-i},\stratp)$ is $0$. Hence
  \begin{align*}
    \P{\action^i_{[0,\infty)}=a, Z^i_0} &=
    \E{\CondP{\action^i_{[0,\infty)}=a, Z^i_0 =
      -Y^i_{\infty}}{S,\belief_{-i},\stratp}}  = 0.
  \end{align*}
\end{proof}

\begin{proof}[Proof of Theorem~\ref{thm:agreement}]
  Let $i$ and $j$ be agents. Since $G$ is strongly connected, there
  exists a path from $i$ to $j$. By Theorem~\ref{thm:rsv-agreement} we
  have, by induction along this path, that $\optset_j \subseteq
  \bestres^i_{\infty}$ almost surely. But $\optset_i =
  \bestres^i_{\infty}$ by Theorem~\ref{thm:bestres-optset} above, and
  so we have that $\optset_j \subseteq \optset_i$. However, there also
  exists a path from $j$ back to $i$, and so $\optset_i \subseteq
  \optset_j$, and the two are equal. This holds for any pair of
  agents, and so it follows that there exists a random variable
  $\optset$ such that $\optset_i=\optset$ for all $i$, almost surely.
\end{proof}

\section{$\delta$-independence}
\label{app:delta-ind}

In this section we introduce a technical notion of {\em
  $\delta$-independent} random variables and {\em $(p,\delta)$-good
  estimators}\footnote{ These definitions are taken (almost) verbatim
  from~\cite{mossel2012asymptotic}.}.  These definitions will be
useful in the proof of our main theorem. Informally,
$\delta$-independent random variables are almost independent. The
random variables $(X_1,\ldots,X_n)$ are $(p,\delta)$-good estimators
of $S$ if each is equal to $S$ with probability at least $p$, and if
they are $\delta$-independent, conditioned on both $S=0$ and $S=1$.

Let $\dtv(\cdot,\cdot)$ denote the total variation distance between
two measures defined on the same measurable space, and let
$(X_1,X_2,\dots,X_k)$ be random variables.  We refer to them as {\em
  $\delta$-independent} if
\begin{align*}
  \dtv(\mu_{(X_1,\ldots,X_k)}, \mu_{X_1}\times\cdots\times\mu_{X_k}) \leq \delta,
\end{align*}
where $\mu_{(X_1,\ldots,X_k)}$ is their joint distribution, and
$\mu_{X_1}\times\cdots\times\mu_{X_k}$ is the product of their
marginal distributions.  I.e., the joint distribution
$\mu_{(X_1,\ldots,X_k)}$ has total variation distance of at most
$\delta$ from the product of the marginal distributions
$\mu_{X_1}\times\cdots\times\mu_{X_k}$.  Likewise, $(X_1,\ldots,X_l)$
are {\em $\delta$-dependent} if the total variation distance between
these distributions is more than $\delta$.

Let $S$ be a binary random variable such that $\P{S=1}=1/2$. We say
that the binary random variables $(X_1,\ldots,X_k)$ are
$(p,\delta)$-good estimators of $S$ if they are $\delta$-independent
conditioned both on $S=0$ and on $S=1$, and if $\P{X_\ell = S} \geq
p$, for $\ell =1,\ldots,k$.

The following standard concentration of measure lemma captures the
idea that the aggregation of sufficiently many $(p,\delta)$-good
estimators gives an arbitrarily good estimate, for any $p>\half$ and
for $\delta$ small enough. 
\begin{claim}
  \label{clm:k-almost-indep-ests}
  Let $S$ be a binary random variable such that $\P{S=1}=1/2$, and let
  $(X_1,\ldots,X_k)$ be $(\half+\eps,\delta)$-good estimators of $S$.
  Then there exists a function $a : \{0,1\}^k \to \{0,1\}$ such that
  \begin{align*}
   \P{a(X_1,\ldots,X_k) = S} > 1-e^{-2\eps^2k}-\delta. 
  \end{align*}  
\end{claim}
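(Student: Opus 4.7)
The strategy is to reduce to the classical case of genuinely independent estimators by coupling, and then apply Hoeffding's inequality to the majority vote.

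First I would take $a:\{0,1\}^k\to\{0,1\}$ to be the majority rule, breaking ties arbitrarily. To analyze $\P{a(X_1,\ldots,X_k)=S}$, I would introduce auxiliary random variables $(X_1',\ldots,X_k')$ that, conditioned on $S$, are independent, with $X_\ell'$ having the same conditional marginal distribution as $X_\ell$ given $S$. In particular $\P{X_\ell' = S}\geq \half+\eps$, and the $X_\ell'$ are conditionally independent given $S$ by construction.

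For the truly independent copies, conditioned on $S=s$ the indicators $\ind{X_\ell'=s}$ are independent Bernoullis with means $\geq \half+\eps$. Hoeffding's inequality then gives
\begin{align*}
\CondP{\textstyle\sum_{\ell=1}^k \ind{X_\ell'=s} \leq k/2}{S=s} \;\leq\; e^{-2\eps^2 k},
\end{align*}
so that $\P{a(X_1',\ldots,X_k')\neq S}\leq e^{-2\eps^2 k}$, after averaging over $S$ (and a harmless adjustment for ties, which only helps).

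Next I would invoke the $\delta$-independence hypothesis. By the definition of conditional $\delta$-independence, $\dtv\!\bigl(\mu_{(X_1,\ldots,X_k)\mid S=s},\,\mu_{X_1\mid S=s}\times\cdots\times\mu_{X_k\mid S=s}\bigr)\leq \delta$ for each $s\in\{0,1\}$. Standard maximal coupling produces, for each $s$, a joint law on $\bigl((X_1,\ldots,X_k),(X_1',\ldots,X_k')\bigr)$ under which $\CondP{(X_1,\ldots,X_k)\neq(X_1',\ldots,X_k')}{S=s}\leq\delta$; stitching these together over $s$ gives an unconditional coupling with $\P{(X_1,\ldots,X_k)\neq (X_1',\ldots,X_k')}\leq\delta$.

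Finally I would combine the two bounds by a union-type argument:
\begin{align*}
\P{a(X_1,\ldots,X_k)\neq S} \;\leq\; \P{a(X_1',\ldots,X_k')\neq S} + \P{(X_1,\ldots,X_k)\neq(X_1',\ldots,X_k')} \;\leq\; e^{-2\eps^2 k}+\delta,
\end{align*}
which is the desired inequality. The only subtle step is the coupling in the middle paragraph: one must be careful that the conditional $\delta$-independence truly yields an unconditional coupling with failure probability $\leq\delta$, but since the condition holds for each value of the binary $S$ separately, the integration over $S$ is routine. Everything else is a direct application of Hoeffding.
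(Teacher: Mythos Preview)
Your coupling-then-Hoeffding framework is exactly the paper's approach, but there is a genuine gap in the Hoeffding step. You assert that, conditioned on $S=s$, each indicator $\ind{X_\ell'=s}$ is a Bernoulli with mean $\geq\half+\eps$. The hypothesis, however, only gives the \emph{unconditional} bound $\P{X_\ell'=S}\geq\half+\eps$, i.e.\ $\half\CondP{X_\ell'=0}{S=0}+\half\CondP{X_\ell'=1}{S=1}\geq\half+\eps$. This does not force either conditional probability to exceed $\half$: for instance, if $\CondP{X_\ell'=0}{S=0}=1$ and $\CondP{X_\ell'=1}{S=1}=2\eps$ for every $\ell$, then conditioned on $S=1$ the majority of the $X_\ell'$ is $0$ with overwhelming probability, and your majority rule errs with probability close to $\half$.

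The paper repairs this by not using the majority threshold. It sets $\hat Y=\tfrac1k\sum_\ell Y_\ell$ and $\alpha_s=\CondE{\hat Y}{S=s}$, observes that
\[
\alpha_1-\alpha_0=\tfrac1k\sum_\ell\bigl(2\P{Y_\ell=S}-1\bigr)\geq 2\eps,
\]
and takes $a$ to be the indicator that $\hat Y>\alpha_1-\eps$. Hoeffding then controls each conditional deviation of $\hat Y$ from its \emph{own} conditional mean $\alpha_s$ by $e^{-2\eps^2k}$, and since $\alpha_0+\eps\leq\alpha_1-\eps$ the two deviation events cover both errors. Your argument goes through once you replace the midpoint threshold $k/2$ by this data-dependent one; the coupling and the final union bound are correct and match the paper verbatim.
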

\begin{proof}
  Let $(Y_1,\ldots,Y_k)$ be random variables such that the
  distribution of $(S,Y_i)$ is equal to the distribution of $(S,X_i)$
  for all $i$, and let $(Y_1,\ldots, Y_k)$ be independent, conditioned
  on $S$. Then $(X_1,\ldots,X_k)$ can be coupled to $(Y_1,\ldots,Y_k)$
  in such a way that they differ only with probability
  $\delta$. Therefore, if we show that $\P{a(Y_1, \ldots, Y_k)=S} >
  q+\delta$ for some $a$ then it will follow that $\P{a(X_1, \ldots,
    X_k)=S} > q$.
  
  Denote $\hat{Y} = \frac{1}{k}\sum_{i=1}^k Y_i$, and denote $\alpha_0
  = \CondE{\hat{Y}}{S=0}$ and $\alpha_1
  = \CondE{\hat{Y}}{S=1}$ . It follows that
  \begin{align*}
    \alpha_1 - \alpha_0
    &= \frac{1}{k}\sum_{i=1}^k\left(2\P{Y_i=S}-1\right) > 2\eps.
  \end{align*}
  By the Hoeffding bound
  \begin{align*}
    \CondP{\hat{Y} \leq \alpha_1 - \eps}{S=1} < e^{-2\eps^2k}
  \end{align*}
  and
  \begin{align*}
    \CondP{\hat{Y} \geq \alpha_0 + \eps}{S=0} < e^{-2\eps^2k}.
  \end{align*}
  Let $a(Y_1,\ldots,Y_k) = \ind{\hat{Y}>\alpha_1-\eps}$. Then by the
  above we have that $\P{a(Y_1,\ldots,Y_k) \neq S} <
  e^{-2\eps^2k}$, and so
  \begin{align*}
    \P{a(X_1,\ldots,X_k) = S} > 1- e^{-2\eps^2k} -\delta.
  \end{align*}

\end{proof}

\section{The probability of learning}
\label{app:learnprob}

In this section we start to explore the probability of learning, with
the ultimate goal of proving that it equals $1$ under the appropriate
conditions.

Recall that the {\em probability of learning} map $\learnprob : \gs
\to \R$ is given by
\begin{align*}
  \learnprob([G,i,\stratp]) = \lim_{t \to
    \infty}\P{\action^i_t(G,\stratp)=S}.
\end{align*}

Before showing that $\learnprob$ is well defined (i.e., the limit
exists), and proving that it is lower semi-continuous, we make the
following additional definition. Let
\begin{align*}
  \mapS_{\infty} = \mapS_{\infty}([G,i,\stratp])=
  \begin{cases}
    0&\bestres^i_{\infty}(G,\stratp) = 0\\
    1&\bestres^i_{\infty}(G,\stratp) = 1 \,\mbox{ or
    }\,\bestres^i_{\infty}(G,\stratp) = \{0,1\}
  \end{cases}
\end{align*}
be a maximum a posteriori (MAP) estimator of $S$ given
$\info^i_{\infty}$, for some agent $i$ in $G$.

Note that $\mapS_{\infty}([G,i,\stratp])$ is independent of $i$,
since, by Theorem~\ref{thm:agreement}, $\bestres^i_{\infty} = \optset$
for all agents $i,j$ in $G$.  Note also that $\mapS_{\infty}$ is
indeed a MAP estimator of $S$ given $\info^i_{\infty}$, since by
definition $\bestres^i_{\infty}$ is the set of most probable estimates
of $S$, given $\info^i_{\infty}$.

\begin{claim}
  \label{clm:learnprob-bestres}
  \begin{align*}
    \learnprob([G,i,\stratp]) =  \P{\mapS_{\infty}([G,i,\stratp])=S}.
  \end{align*}
\end{claim}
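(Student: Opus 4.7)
The plan is to split the probability space according to whether $\bestres^i_{\infty}(G,\stratp)$ is a singleton or the full set $\{0,1\}$, and to use Theorem~\ref{thm:bestres-optset} to translate this into information about $\optset_i$. Let $E = \{\bestres^i_{\infty}(G,\stratp) = \{0,1\}\}$. On $E^c$, Theorem~\ref{thm:bestres-optset} gives $\optset_i = \bestres^i_{\infty} = \{s\}$ almost surely for some $s \in \{0,1\}$, and by Definition~\ref{def:maps} we then have $\mapS_{\infty} = s$. On $E$ we have $\mapS_{\infty} = 1$ by the MAP tie-breaking convention, while $\bestres^i_{\infty} = \{0,1\}$ is, by the remark following the definition of $\bestres^i_{\infty}$, equivalent to $\CondP{S=1}{\info^i_{\infty}} = \half$.

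On the event $E^c$, the fact that $\optset_i = \{s\}$ means agent $i$ takes the action $1-s$ only finitely often, so $\action^i_t = s = \mapS_{\infty}$ for all sufficiently large $t$. Hence $\ind{\action^i_t = S}\ind{E^c} \to \ind{\mapS_{\infty} = S}\ind{E^c}$ almost surely, and the bounded convergence theorem gives
\begin{equation*}
  \lim_{t\to\infty}\P{\action^i_t = S,\, E^c} = \P{\mapS_{\infty} = S,\, E^c}.
\end{equation*}

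On the event $E$, the action $\action^i_t$ is $\info^i_t$-measurable and therefore $\info^i_{\infty}$-measurable, so conditioning on $\info^i_{\infty}$ yields
\begin{equation*}
  \CondP{\action^i_t = S}{\info^i_{\infty}} = \CondP{S = \action^i_t}{\info^i_{\infty}} = \tfrac{1}{2}
\end{equation*}
on $E$ for every $t$; hence $\P{\action^i_t = S,\, E} = \tfrac{1}{2}\P{E}$ for every $t$. Similarly, since $\mapS_{\infty} = 1$ on $E$,
\begin{equation*}
  \P{\mapS_{\infty} = S,\, E} = \P{S = 1,\, E} = \E{\ind{E}\CondP{S=1}{\info^i_{\infty}}} = \tfrac{1}{2}\P{E}.
\end{equation*}

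Adding the two contributions yields $\lim_t \P{\action^i_t = S} = \P{\mapS_{\infty} = S}$, which is the claim. The only delicate point is the tie event $E$: one must check that the arbitrary MAP convention $\mapS_{\infty} = 1$ from Definition~\ref{def:maps} is compatible with the probability that the (possibly fluctuating) action equals $S$, and this works out precisely because on $E$ both quantities equal $\tfrac{1}{2}\P{E}$ by conditional uniformity of $S$ given $\info^i_{\infty}$.
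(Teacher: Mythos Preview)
Your proof is correct and follows essentially the same approach as the paper's: both split according to whether $\bestres^i_{\infty}$ (equivalently, by Theorem~\ref{thm:bestres-optset}, $\optset_i$) is a singleton or $\{0,1\}$, use convergence of actions on the singleton event, and use the fact that $\CondP{S=1}{\info^i_{\infty}}=\tfrac12$ together with $\info^i_{\infty}$-measurability of $\action^i_t$ on the tie event. Your write-up is slightly more explicit (invoking bounded convergence, spelling out why the tie event contributes $\tfrac12\P{E}$ on both sides), but there is no substantive difference.
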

\begin{proof}
  We first condition on the event\footnote{Note that this might be a
    zero probability event (which we are not able to prove or
    disprove), in which case the following argument is not needed.}
  that $\optset = \{0,1\}$. Then
  \begin{align*}
    \lim_{t \to
      \infty}\CondP{\action^i_t=S}{\optset=\{0,1\}} =
    \lim_{t \to
      \infty}\CondP{\action^i_t=S}{\bestres^i_{\infty}=\{0,1\}}.
  \end{align*}
  Since the event that $\bestres^i_{\infty}=\{0,1\}$ is identical to the event
  that $\CondP{S=1}{\info^i_{\infty}} = \half$, and since
  $\action^i_t$ is $\info^i_{\infty}$-measurable for all $t$, then
  it follows that
  \begin{align*}
    \lim_{t \to
      \infty}\CondP{\action^i_t=S}{\optset=\{0,1\}} = \half.
  \end{align*}
  and also that
  \begin{align*}
    \lim_{t \to
      \infty}\CondP{\mapS_{\infty}=S}{\optset=\{0,1\}} = \half.
  \end{align*}
  When $\optset=0$ or $\optset=1$ then
  $\lim_t A_{i,t}=\mapS_{\infty}$, and so
  \begin{align*}
    \lim_{t \to \infty}\CondP{\action^i_t=S}{\optset \neq \{0,1\}} =
    \CondP{\mapS_{\infty}=S}{\optset\neq\{0,1\}}.
  \end{align*}
  Since we have equality when conditioning on both
  $\optset\neq\{0,1\}$ and $\optset=\{0,1\}$ then we also have
  unconditioned equality and
  \begin{align*}
    \learnprob([G,i,\stratp]) = \lim_{t \to \infty}\P{\action^i_t=S}
    = \P{\mapS_{\infty}=S}.
  \end{align*}
\end{proof}

It follows that $\learnprob$ is well defined.  Since
$\mapS_{\infty}([G,i,\stratp])$ is independent of $i$ then the
following is a direct consequence of
Claim~\ref{clm:learnprob-bestres}.
\begin{corollary}
  $\learnprob([G,i,\stratp]) = \learnprob([G,j,\stratp])$ for
  all $i,j$.
\end{corollary}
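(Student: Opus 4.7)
The plan is to use Claim~\ref{clm:learnprob-bestres} to reduce the statement to showing that the MAP-based estimator $\mapS_{\infty}$ does not actually depend on the choice of root. By Claim~\ref{clm:learnprob-bestres}, we have
\begin{align*}
\learnprob([G,i,\stratp]) = \P{\mapS_{\infty}([G,i,\stratp])=S},
\end{align*}
and the same identity holds with $j$ in place of $i$. Therefore it suffices to show that $\mapS_{\infty}([G,i,\stratp]) = \mapS_{\infty}([G,j,\stratp])$ almost surely, since then the two probabilities are the probabilities of the same event.

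For this, I would invoke the already-established agreement result. By Theorem~\ref{thm:bestres-optset}, $\bestres^i_{\infty}(G,\stratp)=\optset_i(G,\stratp)$ almost surely for every agent $i$, and by Theorem~\ref{thm:agreement} there is a random variable $\optset$ with $\optset_i = \optset$ almost surely for all $i$. Hence $\bestres^i_{\infty} = \bestres^j_{\infty}$ almost surely. Since $\mapS_{\infty}([G,i,\stratp])$ is defined purely in terms of $\bestres^i_{\infty}(G,\stratp)$ through the case split in Definition~\ref{def:maps}, the equality $\bestres^i_{\infty}=\bestres^j_{\infty}$ immediately yields $\mapS_{\infty}([G,i,\stratp]) = \mapS_{\infty}([G,j,\stratp])$ almost surely, as desired.

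Indeed, the paragraph following Definition~\ref{def:maps} already observes explicitly that $\mapS_{\infty}([G,i,\stratp])$ does not depend on the choice of $i$, precisely for the reason above. There is no real obstacle here: the corollary is essentially a bookkeeping consequence of agreement combined with the MAP formula for the learning probability, and the proof amounts to citing Claim~\ref{clm:learnprob-bestres} together with the root-independence of $\mapS_{\infty}$ noted after Definition~\ref{def:maps}.
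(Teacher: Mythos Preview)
Your proposal is correct and follows exactly the paper's approach: the corollary is stated as a direct consequence of Claim~\ref{clm:learnprob-bestres} together with the root-independence of $\mapS_{\infty}$ noted after Definition~\ref{def:maps} (which itself rests on Theorems~\ref{thm:bestres-optset} and~\ref{thm:agreement}). There is nothing to add.
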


Another consequence is that if $\learnprob([G,i,\stratp])=1$ then
$\mapS_{\infty}=S$ almost surely. Since we could have also defined
$\mapS_{\infty}$ to equal $0$ when $\optset = \{0,1\}$, it follows
that also $\optset = S$ almost surely. Hence
$\learnprob([G,i,\stratp])=1$ if and only if the agents learn $S$:
\begin{claim}
  \label{clm:p-learning}
  $\learnprob([G,i,\stratp])=1$ if and only if $\lim_t\action^j_t=S$
  almost surely for all agents $j$ in $G$.
\end{claim}

We next show that, not surprisingly, if the private signals are
informative then $\learnprob > 1/2$.  Given $\mu_0$ and $\mu_1$,
denote $\pstar = \half + \half \dtv(\mu_0,\mu_1)$.
\begin{claim}
  \label{clm:p-half}
  Given $\mu_0$ and $\mu_1$
  \begin{align*}
    \learnprob([G,i,\stratp]) \geq \pstar > \half
  \end{align*}
  for any $G$, $i$ and equilibrium strategy profile $\stratp$.
\end{claim}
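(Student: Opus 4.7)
The plan is to reduce to a much simpler estimator and use the monotonicity of Bayes risk under refinement of the conditioning sigma-algebra. By Claim~\ref{clm:learnprob-bestres}, $\learnprob([G,i,\stratp]) = \P{\mapS_{\infty}=S}$, where $\mapS_{\infty}$ is a MAP estimator of $S$ given the terminal information $\info^i_{\infty}$. Since $\belief_i$ is $\info^i_{\infty}$-measurable (it already generates part of $\info^i_0$), a standard fact about Bayes classifiers gives
\begin{align*}
\P{\mapS_{\infty} = S} \geq \P{\hat{S} = S},
\end{align*}
where $\hat{S} = \ind{\belief_i \geq 1/2}$ is the MAP estimator of $S$ based on the private belief of agent $i$ alone.

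The next step is to evaluate $\P{\hat{S}=S}$ in closed form. Using the uniform prior on $S$ and the fact that, conditioned on $S$, the private signal is distributed as $\mu_S$, a short calculation yields
\begin{align*}
\P{\hat{S}=S} = \half \int \max(d\mu_0, d\mu_1) = \half + \half \dtv(\mu_0,\mu_1) = \pstar,
\end{align*}
where the middle equality uses the elementary identity $\int \max(d\mu_0,d\mu_1) = 1 + \dtv(\mu_0,\mu_1)$. The non-atomicity of $Z$ from Definition~\ref{def:measures} ensures that $\P{\belief_i = 1/2} = 0$, so the choice of tie-breaking in $\hat{S}$ is immaterial.

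Combining the two displays yields $\learnprob([G,i,\stratp]) \geq \pstar$. The strict inequality $\pstar > \half$ is then immediate from $\dtv(\mu_0,\mu_1) > 0$, which in turn follows from $\mu_0 \neq \mu_1$ (as noted after Definition~\ref{def:measures}, $\mu_0 = \mu_1$ would force $Z \equiv 1$, contradicting non-atomicity). No serious obstacle is anticipated: the whole argument is essentially a one-line comparison to the single-signal MAP estimator, whose success probability is a textbook calculation. The conceptual content of the lemma is simply that, no matter how agents (fail to) use the social data revealed to them, they cannot do worse in the limit than they would by ignoring the network entirely and acting on their own signal.
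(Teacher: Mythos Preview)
Your proof is correct and follows essentially the same approach as the paper: both reduce to the single-signal MAP estimator via Claim~\ref{clm:learnprob-bestres} and the monotonicity of the Bayes classifier under refinement of information, then identify its success probability as $\pstar$. The only cosmetic difference is that the paper cites an external reference for $\P{\hat{S}_{i,0}=S}=\pstar$ whereas you supply the short direct computation, and you condition on $\belief_i$ rather than $\psignal_i$, which is immaterial since $\belief_i$ is a sufficient statistic.
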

\begin{proof}
  $\pstar > \half$, since $\mu_0 \neq \mu_1$.  Let $\hat{S}_{i,0}$
  be the maximum a posteriori (MAP) estimator of $S$ given $i$'s
  private signal, $\psignal_i$. Then (see Claim 3.30
  in~\cite{mossel2012asymptotic})
  \begin{align*}
    \P{\hat{S}_{i,0}=S} = \pstar.
  \end{align*}
  Now, $\mapS_{\infty}$ is a MAP estimator of $S$ given
  $\info^i_{\infty}$. Since $\info^i_{\infty}$ includes $\psignal_i$
  then
  \begin{align*}
    \P{\mapS_{\infty} = S} \geq \P{\hat{S}_{i,0}=S} = \pstar,
  \end{align*}
  and the claim follows by Claim~\ref{clm:learnprob-bestres}.
\end{proof}
We end this section with our main claim regarding $p$.
\begin{theorem}
  \label{thm:p-semi-cont}
  $\learnprob$ is lower semi-continuous, i.e., if
  $[G_n,i_n,\stratp_n] \to_n [G,i,\stratp]$ then
  \begin{align*}
   \liminf_n
   \learnprob([G_n,i_n,\stratp_n]) \geq \learnprob([G,i,\stratp]). 
  \end{align*}
\end{theorem}
\begin{proof}
  Recall that the expected utility of agent $i$ at time $t$ is given by the
  utility map at time $t$:
  \begin{align*}
    \util_t([G,i,\stratp]) = \P{\action^i_t(G,\stratp) = S}.
  \end{align*}
  Hence an alternative definition of $\learnprob$ is that
  \begin{align}
    \label{eq:learnprob-lim}
    \learnprob([G,i,\stratp]) = \lim_{t \to \infty}\util_t([G,i,\stratp]).
  \end{align}
  Now, $\action^i_t$ is $\info^i_\infty$-measurable. Hence, since
  $\mapS_{\infty}$ is a MAP estimator of $S$ given $\info^i_{\infty}$, it
  follows that
  \begin{align*}
    \P{\mapS_{\infty} = S} \geq \P{\action^i_t = S},
  \end{align*}
  or that
  \begin{align*}
    \learnprob([G,i,\stratp]) \geq \util_t([G,i,\stratp]).
  \end{align*}
  This, combined with~\eqref{eq:learnprob-lim}, yields
  \begin{align*}
    \learnprob([G,i,\stratp]) = \sup_t\util_t([G,i,\stratp]),
  \end{align*}
  and since $\util_t$ is continuous (see the proof of
  Lemma~\ref{thm:utility-cont}), it follows that $\learnprob$ is lower
  semi-continuous.
\end{proof}

\section{Finding good estimators}
\label{app:ind-ests-proof}

The following lemma is the technical core of the proof of the main
result of this article. Before stating it we would like to remind the
reader that if $\cK$ is a set of graphs then $\cR(\cK)$ is the set of
rooted $\cK$ graphs, and $\eqs(\cR(\cK))$ is the set of $\cR(\cK)$
equilibrium graph strategy profiles. Recall also that $\pstar = \half
+ \half\dtv(\mu_0,\mu_1)$.

\begin{lemma}
  \label{lem:independent-ests}
  Let $G$ be an infinite, strongly connected graph such that
  $\overline{\eqs(\cR(\{G\}))}$ is a compact set of equilibrium rooted
  graph strategy profiles. Then for all equilibrium strategy profiles
  on $G$, $k \in \N$, $\eps>0$ and $\delta>0$ there exists an agent
  $i$ in $G$, a time $t$ and $\info^i_t$-measurable binary random
  variables $(X_1,\ldots,X_k)$ that are $(\pstar-\eps,\delta)$-good
  estimators of $S$.
\end{lemma}
Before proving this lemma we will need some additional definitions and
claims.

We shall make use of the following notation: Let
$X_1,\ldots,X_k$ be random variables, and let $S$ be a binary random
variable. We say that $(X_1,\ldots,X_k)$ are $\delta$-independent
conditioned on $S$ if they are $\delta$-independent conditioned on
both $S=0$ and $S=1$. Denote
\begin{align*}
  \infdep_S(X_1,\ldots,X_k) =
  \min\{\delta\,:\,(X_1,\ldots,X_k) \mbox{ are
    $\delta$-independent conditioned on $S$}\}
\end{align*}
Note that this minimum is indeed attained, by the definition of
$\delta$-independence.

The proofs of the following two general claims are elementary and fairly
straightforward. They appear in~\cite{mossel2012asymptotic}.
\begin{claim}
  \label{clm:delta-independent}
  Let $A$, $B$ and $C$ be random variables such that $\P{A \neq B} \leq
  \delta$ and $(B,C)$ are $\delta'$-independent. Then $(A,C)$ are
  $2\delta+\delta'$-independent.
\end{claim}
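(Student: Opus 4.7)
The plan is to bound the total variation distance $\dtv(\mu_{(A,C)}, \mu_A \times \mu_C)$ by the triangle inequality, using the distribution of $(B,C)$ as an intermediate pivot, and exploiting the fact that $\P{A \neq B} \leq \delta$ bounds TV distances in several related ways.

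First, I would observe that the hypothesis $\P{A \neq B} \leq \delta$ is in fact a statement about a coupling: there is a joint distribution of $(A,B)$ under which they disagree with probability at most $\delta$. By the coupling characterization of TV distance, this immediately gives $\dtv(\mu_A,\mu_B) \leq \delta$. Moreover, by the same coupling (extended to $(A,B,C)$ without changing the marginal of $C$ given $B$), we have $\dtv(\mu_{(A,C)}, \mu_{(B,C)}) \leq \P{(A,C) \neq (B,C)} = \P{A \neq B} \leq \delta$.

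Second, since tensoring with a common factor does not increase TV distance, $\dtv(\mu_A \times \mu_C, \mu_B \times \mu_C) \leq \dtv(\mu_A, \mu_B) \leq \delta$.

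Finally, I would apply the triangle inequality with $\mu_{(B,C)}$ and $\mu_B \times \mu_C$ as pivots:
\begin{align*}
\dtv\bigl(\mu_{(A,C)}, \mu_A \times \mu_C\bigr)
&\leq \dtv\bigl(\mu_{(A,C)}, \mu_{(B,C)}\bigr)
   + \dtv\bigl(\mu_{(B,C)}, \mu_B \times \mu_C\bigr) \\
&\quad + \dtv\bigl(\mu_B \times \mu_C, \mu_A \times \mu_C\bigr) \\
&\leq \delta + \delta' + \delta = 2\delta + \delta',
\end{align*}
which yields the claim. There is no genuine obstacle here; the only mild care needed is to be explicit that $\P{A \neq B} \leq \delta$ simultaneously bounds both the TV distance between the marginals of $A$ and $B$ and the TV distance between the joint distributions of $(A,C)$ and $(B,C)$, so the triangle inequality can legitimately route through $(B,C)$.
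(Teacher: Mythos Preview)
Your argument is correct and is the standard elementary proof via the coupling characterization of total variation and the triangle inequality. The paper itself does not prove this claim in the text; it states that the proof is elementary and refers the reader to~\cite{mossel2012asymptotic}, so there is nothing to compare against beyond noting that your route is exactly the expected one. One minor remark: your parenthetical about ``extending the coupling to $(A,B,C)$'' is unnecessary, since by hypothesis $A,B,C$ are already jointly defined on a common probability space, so the bound $\dtv(\mu_{(A,C)},\mu_{(B,C)}) \leq \P{(A,C)\neq(B,C)} = \P{A\neq B}$ is immediate.
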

\begin{claim}
  \label{clm:delta-ind-additive}
  Let $A=(A_1,\ldots,A_k)$, and $X$ be random variables. Let
  $(A_1,\ldots,A_k)$ be $\delta_1$-independent and let $(A,X)$ be
  $\delta_2$-independent. Then $(A_1,\ldots,A_k,X)$ are
  $(\delta_1+\delta_2)$-independent.
\end{claim}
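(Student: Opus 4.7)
The plan is a straightforward triangle-inequality argument in total variation distance. Writing $\mu_A = \mu_{(A_1,\ldots,A_k)}$, I want to compare the joint law $\mu_{(A,X)} = \mu_{(A_1,\ldots,A_k,X)}$ to the full product $\mu_{A_1}\times\cdots\times\mu_{A_k}\times\mu_X$, and the natural bridging distribution is $\mu_A\times\mu_X$.

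First I would apply the triangle inequality for $\dtv$:
\begin{align*}
\dtv\bigl(\mu_{(A,X)},\ \mu_{A_1}\times\cdots\times\mu_{A_k}\times\mu_X\bigr)
&\leq \dtv\bigl(\mu_{(A,X)},\ \mu_A\times\mu_X\bigr) \\
&\quad + \dtv\bigl(\mu_A\times\mu_X,\ \mu_{A_1}\times\cdots\times\mu_{A_k}\times\mu_X\bigr).
\end{align*}
The first term on the right is at most $\delta_2$ by the hypothesis that $(A,X)$ are $\delta_2$-independent.

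For the second term, I would use the elementary fact that $\dtv(\mu\times\nu,\mu'\times\nu)=\dtv(\mu,\mu')$; that is, taking a product with a common marginal preserves the total variation distance. Applied with $\mu=\mu_A$, $\mu'=\mu_{A_1}\times\cdots\times\mu_{A_k}$, and $\nu=\mu_X$, the second term equals $\dtv(\mu_A,\mu_{A_1}\times\cdots\times\mu_{A_k})$, which is at most $\delta_1$ by the hypothesis that $(A_1,\ldots,A_k)$ are $\delta_1$-independent. Summing gives the bound $\delta_1+\delta_2$.

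There is no real obstacle here; the only step requiring a one-line verification is the identity $\dtv(\mu\times\nu,\mu'\times\nu)=\dtv(\mu,\mu')$, which follows either from the coupling characterization of $\dtv$ (couple the $\nu$-components identically and the $\mu,\mu'$-components optimally) or from a direct computation using the $L^1$ representation of total variation.
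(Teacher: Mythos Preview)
Your argument is correct: the triangle inequality through the intermediate measure $\mu_A\times\mu_X$, together with the identity $\dtv(\mu\times\nu,\mu'\times\nu)=\dtv(\mu,\mu')$, gives exactly the bound $\delta_1+\delta_2$. The paper does not actually prove this claim in-text---it states that the proof is elementary and defers to~\cite{mossel2012asymptotic}---so there is no in-paper proof to compare against, but your approach is the standard one and is presumably what that reference contains.
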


\begin{claim}
  \label{clm:maps-local}
  Let $[G,i_0,\stratp]$ be an equilibrium graph strategy. Let
  $\{i_n\}_{n=1}^\infty$ be a sequence of vertices such that the graph
  distance $\dist(i_0,i_n)$ diverges with $n$. Fix $t$, and for each $n$
  let $X^{i_n}$ be $\info^{i_n}_t$-measurable. Then
  \begin{align*}
    \lim_{n \to \infty} \infdep_S(X^{i_n},\mapS_{\infty}) = 0.
  \end{align*}
\end{claim}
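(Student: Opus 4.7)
The plan is to approximate $\mapS_\infty$ by an $\info^{i_0}_r$-measurable binary random variable $\mapS_r$ and then exploit the fact that this approximation depends only on data near $i_0$, while $X^{i_n}$ depends only on data near $i_n$. In the setting in which Claim~\ref{clm:maps-local} is applied (the compact setting of Lemma~\ref{lem:independent-ests}, in particular the $L$-connected bounded-degree case of Theorem~\ref{thm:learning}), as $\dist(i_0, i_n) \to \infty$ these two local data sets become disjoint, and conditional on $S$ the signals and strategies of disjoint agent sets are independent. Claim~\ref{clm:delta-independent} then transfers the exact conditional independence of $\mapS_r$ and $X^{i_n}$ into approximate conditional independence of $\mapS_\infty$ and $X^{i_n}$.

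For the approximation step, set $Y_r = \CondP{\mapS_\infty = 1}{\info^{i_0}_r}$. Since $\mapS_\infty$ is $\info^{i_0}_\infty$-measurable and $\{\info^{i_0}_r\}_r$ is a filtration, $\{Y_r\}_r$ is a bounded martingale converging almost surely to $\ind{\mapS_\infty = 1}$. Define $\mapS_r = \ind{Y_r \geq 1/2}$; this is an $\info^{i_0}_r$-measurable binary random variable, and almost surely $\mapS_r = \mapS_\infty$ for all $r$ large enough (if $\mapS_\infty = 1$ then $Y_r \to 1 > 1/2$, and analogously for $0$). Hence $\P{\mapS_r \neq \mapS_\infty} \to 0$, and since $\P{S=s} = 1/2$ also $\CondP{\mapS_r \neq \mapS_\infty}{S = s} \leq 2 \P{\mapS_r \neq \mapS_\infty} \to 0$ for each $s \in \{0,1\}$. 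Given a target $\delta > 0$, fix $r$ so that this conditional probability is below $\delta/4$ for both values of $s$.

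An inductive argument analogous to the one preceding Corollary~\ref{thm:local-utils} shows that $\info^{i_0}_r$ is contained in the $\sigma$-algebra generated by $\{(\psignal_j, \strat^j) : j \in B_r(G, i_0)\}$, so $\mapS_r$ is a deterministic function of that data; similarly $X^{i_n}$ is a deterministic function of $\{(\psignal_j, \strat^j) : j \in B_t(G, i_n)\}$. In an $L$-connected graph, any $v$ with $\dist(i_n, v) \leq t$ satisfies $\dist(v, i_n) \leq Lt$, so $v \in B_r(G, i_0) \cap B_t(G, i_n)$ forces $\dist(i_0, i_n) \leq r + Lt$; since $\dist(i_0, i_n) \to \infty$, the balls are disjoint for all $n$ sufficiently large. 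Conditional on $S$, the $\psignal_j$ are i.i.d.\ $\mu_S$ and the $\strat^j$ are mutually independent and independent of the signals, so the restrictions to disjoint agent sets are independent. Hence $(\mapS_r, X^{i_n})$ are exactly $0$-independent conditional on $S$ for such $n$.

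Applying Claim~\ref{clm:delta-independent} conditionally on each value of $S$ with $A = \mapS_\infty$, $B = \mapS_r$, $C = X^{i_n}$ yields that $(\mapS_\infty, X^{i_n})$ are $\delta/2$-independent conditional on $S$, so $\infdep_S(X^{i_n}, \mapS_\infty) \leq \delta/2$ for all sufficiently large $n$. Since $\delta > 0$ is arbitrary, the limit is $0$. The main obstacle is the geometric step of ensuring eventual disjointness of $B_r(G, i_0)$ and $B_t(G, i_n)$: this is where the egalitarian hypothesis (or the compactness assumption abstractly) enters, and without it the claim can genuinely fail (for instance in a strongly connected graph with a ``hub'' vertex of infinite in-degree that lies in every $B_t(G, i_n)$). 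The martingale approximation and the combination via Claim~\ref{clm:delta-independent} are otherwise standard once the geometric picture is in place.
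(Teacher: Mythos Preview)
Your approach mirrors the paper's almost exactly: approximate $\mapS_\infty$ by something measurable with respect to the ball of radius $r$ around $i_0$ (the paper uses a MAP estimator of $\mapS_\infty$ given $\mathcal{B}^{i_0}_r:=\sigma(\psignal_j,\strat^j:j\in B_r(G,i_0))$; you threshold the martingale $\CondP{\mapS_\infty=1}{\info^{i_0}_r}$ and then invoke $\info^{i_0}_r\subseteq\mathcal{B}^{i_0}_r$, which is equally valid), then use conditional independence on disjoint balls together with Claim~\ref{clm:delta-independent}.

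The substantive difference is that you have been more careful than the paper on the geometric step. The paper simply asserts ``since $\dist(i_0,i_n)\to\infty$, for any $r$ and $n$ large enough $B_t(G,i_n)$ and $B_r(G,i_0)$ are disjoint,'' with no justification. As you observe, in a directed graph $\dist(i_0,v)\le r$ and $\dist(i_n,v)\le t$ do not bound $\dist(i_0,i_n)$, and a hub vertex observed by all $i_n$ (exactly the situation in the Royal Family of Example~\ref{example:directed}) sits in every $B_t(G,i_n)$ regardless of $\dist(i_0,i_n)$; there one can take $X^{i_n}$ to be a royal-family action at time $0$, which stays correlated with $\mapS_\infty$ conditional on $S$, so the conclusion of the claim fails. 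Your $L$-connectedness argument ($\dist(i_n,v)\le t\Rightarrow\dist(v,i_n)\le Lt$, whence $\dist(i_0,i_n)\le r+Lt$) is the right repair for the applications in Theorems~\ref{thm:learning} and~\ref{thm:learning-finite}; the claim really should carry that hypothesis (or at least the compactness hypothesis of Lemma~\ref{lem:independent-ests}) rather than being stated for arbitrary equilibrium graph strategies.
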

\begin{proof}
  Let $\mathcal{B}^i_r = \sigma(\{\psignal_j,\strat^j\,:\, j \in
  B_r(G,i)\})$. We first show, by induction on $r$, that $\info^i_r
  \subseteq \mathcal{B}^i_r$: any $\info^i_r$-measurable random
  variable is also $\mathcal{B}^i_r$-measurable. It will follow that
  $X^{i_n}$ is $\mathcal{B}^{i_n}_t$-measurable.

  Clearly $\info^i_0 \subseteq \mathcal{B}^i_0$. Assume now that
  $\info^j_{r'} \subseteq \mathcal{B}^j_{r'}$ for all $j$ and $r' <
  r$. By definition, $\info^i_r =
  \sigma(\info^i_{r-1},\action^{\neigh{i}}_{r-1})$. For $j
  \in \neigh{i}$ we have that $\action^j_{r-1}$ is
  $\mathcal{B}^j_{r-1}$-measurable. Finally, $\mathcal{B}^j_{r-1}
  \subseteq \mathcal{B}^i_r$, and so $\info^i_r
  \subseteq \mathcal{B}^i_r$.

  Note that for $i,j$ and $r_1,r_2$ such that $B_{r_1}(G,i)$ and
  $B_{r_2}(G,j)$ are disjoint it holds that $\mathcal{B}^i_{r_1}$ and
  $\mathcal{B}^j_{r_2}$ are independent conditioned on $S$, since the
  choices of pure strategies are independent and private beliefs
  are independent conditioned on $S$.

  Let $R^i_r$ be a MAP estimator of $\mapS_{\infty}$ given
  $\mathcal{B}^i_r$.  Since $\dist(i_0,i_n) \to_n \infty$, it follows
  that for any $r$ and $n$ large enough $B_t(G,i_n)$ and $B_r(G,i_0)$
  are disjoint, and so $X^{i_n}$ and $R^{i_0}_r$ are independent,
  conditioned on $S$. For such $n$, by
  Claim~\ref{clm:delta-independent}, we have that
  $(X^{i_n},\mapS_{\infty})$ are $2\delta$-independent, for $\delta =
  \P{R^i_r \neq \mapS_{\infty}}$.

  Finally, since $\mapS_{\infty}$ is
  $\mathcal{B}^i_\infty$-measurable, it follows that
  \begin{align*}
    \lim_{r \to \infty}\P{R^i_r \neq \mapS_{\infty}} = 0,
  \end{align*}
  and so 
  \begin{align*}
    \lim_{n \to \infty} \infdep_S(X^{i_n},\mapS_{\infty}) = 0.
  \end{align*}
\end{proof}

We are now ready to prove Lemma~\ref{lem:independent-ests}.
\begin{proof}[Proof of Lemma~\ref{lem:independent-ests}]
  Denote by $\mathcal{C}$ the closure of $\eqs(\cR(\{G\}))$. Note that
  by Lemma~\ref{thm:equi-conv-equi} any graph strategy in
  $\mathcal{C}$ is an equilibrium.

  We shall prove by induction a stronger claim, namely that under the
  claim hypothesis, for all $[H,j,\stratp] \in \mathcal{C}$, $k \in
  \N$, $\eps>0$ and $\delta>0$ there exists an agent $i$ in $H$, a
  time $t$ and $\info^i_t$-measurable binary random variables
  $(X_1,\ldots,X_k)$ that are $(\pstar-\eps,\delta)$-good estimators
  of $S$.

  We prove the claim by induction on $k$. The claim holds trivially
  for $k=0$. Assume that the claim holds up to $k$.

  Let $[H,j,\stratp] \in \mathcal{C}$.  Let $\{j_n\}_{n=1}^\infty$ be
  a sequence of vertices in $H$ such that
  $\lim_n\dist(j,j_n)=\infty$. Since $\mathcal{C}$ is compact then
  there exists a converging sequence $[H,j_n,\stratp] \to_n
  [F,i',\stratpx] \in \mathcal{C}$. By the inductive assumption, there
  exists an agent $i$ in $F$, a time $t$ and random variables
  $(X_1^i,\ldots,X_k^i)$ which are $\info^i_t$-measurable and are
  $(\pstar-\eps',\delta')$-good estimators of $S$, for some $0 < \eps'
  < \eps$ and $0 < \delta' < \delta$.  Denote
  \begin{align*}
   \bar{X}^i_k = (X^i_1,\ldots,X^i_k).
  \end{align*}
  Let $r = \dist(i',i)$. Since for $n$ large enough $B_r(H,j_n) \cong
  B_r(F,i')$ then, if we let $i_n \in B_r(H,j_n)$ be the vertex that
  corresponds to $i \in B_r(F,i')$ then $[H,i_n,\stratpx]
  \to_n [F,i,\stratpx]$, and $\lim_n\dist(j,i_n) = \infty$.

  Since $\bar{X}^i_k$ is $\info^i_t$-measurable, there exists a
  function $x^i_k$ such that $\bar{X}^i_k =
  x^i_k(\belief_i,\action^{\neigh{i}}_{[0,t)})$. Denote
  $\bar{X}^{i_n}_k  = x^i_k(\belief_{i_n},\action^{\neigh{i_n}}_{[0,t)})$.

  Now, since the strategies of agents in the neighborhood of $i_n$ in
  $H$ converge in the weak topology to those of $i$ in $F$, then the
  random variables $\left\{(S,\bar{X}^{i_n}_k)\right\}_{n=1}^\infty$
  converge in the weak topology to $(S,\bar{X}^i_k)$. Moreover, the
  measures of these random variables are over the finite space
  $\{0,1\}^{k+1}$, and so we also have convergence in total
  variation. In particular, $(X^{i_n}_1,\ldots,X^{i_n}_k)$ approach
  $\delta'$-independence:
  \begin{align}
    \label{eq:est-ind-converges}
    \lim_{n \to \infty} \infdep_S(X^{i_n}_1,\ldots,X^{i_n}_k) =
    \infdep_S(X^i_1,\ldots,X^i_k) \leq \delta'.
  \end{align}
  Likewise,
  \begin{align}
    \label{eq:est-prob-converges}
    \lim_{n \to \infty} \P{X^{i_n}_\ell=S} = \P{X^i_\ell=S} > \pstar - \eps'.
  \end{align}
  for $\ell=1,\ldots,k$. Additionally, since $\dist(j,i_n)
  \to_n \infty$, it follows by Claim~\ref{clm:maps-local} that
  \begin{align}
    \label{eq:est-ind-optset-converges}
    \lim_{n \to \infty} \infdep_S(\bar{X}^{i_n}_k,\mapS_{\infty}) = 0,
  \end{align}
  that is, $\bar{X}^{i_n}_k$ and $\mapS_{\infty}$ are practically
  independent, for large $n$.

  Now, recall that $\mapS_{\infty}$ is
  $\info^i_{\infty}$-measurable. Therefore, if we let $\estest^{i_n}_{t'}$
  be a MAP estimator of $\mapS_{\infty}$ given $\info^i_{t'}$ then
  for any $n$ it holds that
  \begin{align}
    \label{eq:estimating-estimator}
    \lim_{t' \to \infty}\P{\estest^{i_n}_{t'}=\mapS_{\infty}} = 1.
  \end{align}
  By Claim~\ref{clm:delta-independent}, a consequence of
  Eqs.~\eqref{eq:est-ind-optset-converges} and~\eqref{eq:estimating-estimator}
  is that
  \begin{align*}
    \lim_{n \to \infty} \lim_{t' \to \infty} \infdep_S(\bar{X}^{i_n}_k,\estest^{i_n}_{t'}) =
    0.
  \end{align*}
  That is, $\bar{X}^{i_n}_k$ and $\estest^{i_n}_t$ are practically
  independent, for large enough $n$ and $t'$. It follows by
  Claim~\ref{clm:delta-ind-additive} that
  \begin{align}
    \label{eq:limit-delta-ind}
    \lim_{n \to \infty} \lim_{t' \to \infty}
    \infdep_S(X^{i_n}_1,\ldots,X^{i_n}_k,\estest^{i_n}_{t'}) \leq \delta'.
  \end{align}

  It follows from Eq.~\eqref{eq:estimating-estimator} that
  \begin{align}
    \label{eq:estimating-estimator-good}
    \lim_{t' \to \infty}\P{\estest^{i_n}_{t'}=S} =
    \P{\mapS_{\infty}=S} \geq  \pstar.
  \end{align}

  Gathering the above results, we have that for $n$ and $t'$ large enough,
  \begin{enumerate}
  \item $\P{X^{i_n}_\ell=S} \geq \pstar-\eps$, by
    Eq.~\eqref{eq:est-prob-converges}. Likewise
    $\P{\estest^{i_n}_{t'}=S} = \P{\mapS_{\infty}=S} \geq \pstar-\eps$
    by Eq.~\eqref{eq:estimating-estimator-good}.
  \item $(X^{i_n}_1,\ldots,X^{i_n}_k,\estest^{i_n}_{t'})$ are
    $\delta$-independent, by Eq.~\eqref{eq:limit-delta-ind}.
  \end{enumerate}
  We therefore have that
  $(X^{i_n}_1,\ldots,X^{i_n}_k,\estest^{i_n}_{t'})$ are
  $\info^{i_n}_{t'}$-measurable $(\pstar-\eps,\delta)$-good estimators
  of $S$.  
\end{proof}

\section{Proof of main theorem}
The next theorem is a more general version of our main theorem, for
infinite graphs.
\begin{theorem}
  \label{thm:compact-learning}
  Let $G$ be an infinite, strongly connected graph such that
  $\overline{\eqs(\cR(\{G\}))}$ is a compact set of equilibrium rooted
  graph strategy profiles, and let $\stratp$ be any equilibrium strategy
  profile of $\cG$. Then
  \begin{align*}
    \P{\lim_t\action^i_t = S} = 1
  \end{align*}
  for all agents $i$.
\end{theorem}
It follows that $\P{\lim_t\action^i_t = S} = 1$ whenever $G$ is an
infinite, strongly connected, $(d,L)$-egalitarian graph, by
Theorem~\ref{thm:compact-graph} and Claim~\ref{clm:cB-compact}.

\begin{proof}
  We first show that $\learnprob = \learnprob([G,i,\stratp]) =
  1$. Assume by way of contradiction that $\learnprob < 1$. By
  Claim~\ref{clm:p-half}, we have that $p > 1/2$.

  By Lemma~\ref{lem:independent-ests}, for every $k \in N$, and
  $\delta>0$ there exist $(X_1,\ldots,X_k)$ that are
  $\info^i_t$-measurable for some $i$ and $t$, are
  $\delta$-independent conditioned on $S$, and are each equal to $S$
  with probability bounded away from one half, since $\pstar > \half$.

  By Claim~\ref{clm:k-almost-indep-ests} it follows that for $k$ large
  enough and $\delta$ small enough, there exists an estimator
  $\hat{S}$ of $S$ that is a function of $(X_1,\ldots,X_k)$, and is
  equal to $S$ with probability strictly greater than $\learnprob$.

  This $\hat{S}$ is $\info^i_{\infty}$-measurable, and so a MAP
  estimator of $S$ given $\info^i_{\infty}$ must also equal $S$ with
  probability greater than $\learnprob$. However, $\mapS_{\infty}$ in
  a MAP estimator of $S$ given $\info^i_{\infty}$, and it equals $S$
  with probability $\learnprob$ (Claim~\ref{clm:learnprob-bestres}),
  and so we have a contradiction. Hence $\learnprob([G,i,\stratp]) =
  1$.

  Now, by Claim~\ref{clm:learnprob-bestres} we have that 
  \begin{align*}
    \P{\mapS_{\infty}([G,i,\stratp])=S} = \learnprob([G,i,\stratp]) =  1.
  \end{align*}
  By the definition of $\mapS_{\infty}$ we have that $\mapS_{\infty} =
  S$ if and only if $\bestres^i_{\infty} = S$ for some (equivalently
  all) $i$. Since, by Theorem~\ref{thm:agreement}, $\optset =
  \optset_i = \bestres^i_{\infty}$, it follows that $\P{\optset = S} =
  1$, and that therefore $\P{\lim_t\action^i_t = S} = 1$.
\end{proof}

\pagebreak
\section*{Online Appendix: Examples}
\label{app:examples}
In this appendix we give two examples showing that the assumptions of
bounded out-degree and $L$-connectedness are crucial.  Our approach in
constructing equilibria will be to prescribe the initial moves of the
agents and then extend this to an equilibrium strategy profile.

Define the set of times and histories agents have to respond to as
$\responseSpace=\{(i,t,a):i\in V, t \in \mathbb{N}_0, a \in
[0,1]\times \{0,1\}^{|\neigh{i}|\cdot t}\}$.  The set
$[0,1]\times\{0,1\}^{|\neigh{i}|\cdot t}$ is interpreted as the pair
of the private belief of $i$ and the history of actions observed by
agent $i$ up to time $t$.  If
$a\in[0,1]\times\{0,1\}^{|\neigh{i}|\cdot t}$ then for $0\leq t' \leq
t$ we let $a_{t'}\in[0,1]\times\{0,1\}^{|\neigh{i}|\cdot t'}$ denote
the history restricted to times up to $t'$.  We say that a subset
$\responseSubset\subseteq\responseSpace$ is \emph{history-closed} if for
every $(i,t,a)\in\responseSubset$ we have that for all $0\leq t'\leq
t$ that $(i,t',a_{t'})\in\responseSubset$.

For a strategy profile $\stratp$ denote the optimal expected utility
for $i$ under any response as $\util_i^\star(\stratp)=\sup_{\stratpx}
\util_i(\stratpx)$ where the supremum is over strategy profiles
$\stratpx$ such that $\stratx^j=\strat^j$ for all $j \neq i$ in $V$.

\begin{definition}
  On a history-closed subset $\responseSubset\in\responseSpace$ a \emph{forced
    response} $q_\responseSubset$ is a map
  $q_\responseSubset:\responseSubset\to \{0,1\}$ denoting a set of
  actions we force the agents to make.  A strategy profile $\stratp$
  is \emph{$q_\responseSubset$-forced} if for every
  $(i,t,a)\in\responseSubset$ if agent $i$ at time $t$ has seen
  history $a$ from her neighbors then she selects action
  $q_\responseSubset(i,t,a)$.  A strategy profile $\stratp$ is a
  \emph{$q_\responseSubset$-equilibrium} if it is
  $q_\responseSubset$-forced and for every agent $i \in V$ it holds
  that $\util_i(\stratp) \geq \util_i(\stratpx)$ for any
  $q_\responseSubset$-forced strategy profile $\stratpx$ such that
  $\stratx^j=\strat^j$ for all $j \neq i$ in $V$.

\end{definition}

The following lemma can be proved by a minor modification of Theorem~\ref{thm:equi-exists} and so we omit the proof.

\begin{lemma}\label{l:forcedEqui-exists}
Let $\responseSubset\in\responseSpace$ be history-closed and let  $q_\responseSubset$ be a forced response.  There exists a $q_\responseSubset$-equilibrium.
\end{lemma}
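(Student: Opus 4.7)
The plan is to follow the proof of Lemma~\ref{thm:equi-exists} almost verbatim, at every step restricting attention to the sub-space of $q_\responseSubset$-forced strategy profiles. Let $\cF^i \subseteq \cT_i$ denote the space of $q_\responseSubset$-forced (mixed) strategies of agent $i$, viewed as distributional strategies on $[0,1]\times\cR_i$ whose support lies in the set of pure strategies satisfying the forcing constraint, and let $\cF = \prod_i \cF^i$.

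First I would strengthen Lemma~\ref{thm:finite-strat-topo} by showing that $\cF^i$ is a nonempty, convex, and \emph{closed} subset of the compact strategy space $\cT_i$. Nonemptiness is immediate: pick any pure extension of $q_\responseSubset$ (for instance constantly $0$ outside $\responseSubset$). Convexity is immediate since the forcing fixes support constraints rather than values of expectations. For closedness, observe that in the response metric $D(\resp_i,\resp_i')=\exp(-\min\{t:\resp_{i,t}\neq \resp_{i,t}'\})$ each map $\resp_i \mapsto \resp_{i,t}(h)$ is locally constant, so for each $(i,t,a)\in\responseSubset$ the set of pure responses consistent with the single constraint is clopen; the intersection over the (countable or not) family of constraints in $\responseSubset$ is closed in $\cR_i$, and the set of distributional strategies with support inside this closed set is weakly closed by Portmanteau. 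Since utilities remain continuous on $\cF$, Glicksberg's theorem applied to $\cF$ produces a best-response fixed point, which is by construction a $q_\responseSubset$-equilibrium; this proves the lemma for any finite graph $G$.

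For infinite $G$ I would emulate the proof of Theorem~\ref{thm:equi-exists}: fix a root $i$, set $G_n = B_n(G,i)$, let $\responseSubset_n = \{(j,t,a)\in\responseSubset : j\in G_n,\; t<n\}$, and use the finite case above to pick a $q_{\responseSubset_n}$-equilibrium $\stratp_n$ on $\cG_n=(\mu_0,\mu_1,\disc,G_n)$. Since $[G_n,i_n]\to[G,i]$, Claim~\ref{thm:conv-conv} gives (along a subsequence and isomorphisms $h_n$) a limit $[G_n,i_n,\stratp_n]\to [G,i,\stratp]$, and Claim~\ref{thm:equi-conv-equi} ensures that $\stratp$ is an equilibrium of $\cG$. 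All that remains is to check that the limit $\stratp$ is $q_\responseSubset$-forced: for any fixed $(j,t,a)\in\responseSubset$, once $n$ is large enough the vertex $j_n = h_n^{-1}(j)$ lies in $G_n$ and the strategy $\strat_n^{j_n}$ is $q_{\responseSubset_n}$-forced at $(j_n,t,a)$. Because the set of $q_\responseSubset(j,t,a)$-consistent pure strategies is closed in $\cR_j$, and $\strat_n^{j_n}\to \strat^j$ weakly (along the diagonalized subsequence), the Portmanteau theorem yields $\strat^j\in\cF^j$, proving the forcing at $(j,t,a)$.

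The main technical obstacle is the verification that $\cF^i$ is closed in $\cT_i$ when $\responseSubset$ is uncountable (the history set $[0,1]\times\{0,1\}^{|\partial i|\cdot t}$ contains the continuous belief coordinate), since one cannot just intersect countably many clopen sets. This is handled by noting that the forcing defines a measurable closed subset $\cR_i^{\mathrm{forced}}\subseteq \cR_i$ in the product topology (the constraint at a given $(t,a)$ depends only on finitely many coordinates of $\resp_i$, namely on $\resp_{i,t}$ evaluated at the history component of $a$), and the set of distributional strategies on $[0,1]\times\cR_i$ that assign full mass to $\cR_i^{\mathrm{forced}}$ conditional on the belief falling in the relevant slice is weakly closed. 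With this observation in place, the remainder of the argument is a direct transcription of the proof of Lemma~\ref{thm:equi-exists}, so we omit the details, as the paper suggests.
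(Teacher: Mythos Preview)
Your proposal is correct and is precisely the ``minor modification of Lemma~\ref{thm:equi-exists}'' that the paper alludes to; in fact the paper omits the proof entirely, so you have supplied more detail than the authors do. The only point worth tightening is the description of the forced constraint set: since the history $a$ includes the continuous belief coordinate, the forced responses do not define a single closed subset $\cR_i^{\mathrm{forced}}\subseteq\cR_i$ but rather a (measurable, closed-valued) correspondence $\belief\mapsto\cR_i^{\mathrm{forced}}(\belief)$, and the relevant closed set is its graph in $[0,1]\times\cR_i$; your Portmanteau argument then applies to this graph.
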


Having constructed $q_\responseSubset$-equilibria we then will want to show that they are equilibria.    In order to do that we appeal to the following lemma.

\begin{lemma}\label{l:fullEquil}
  Let $\stratp$ be a $q_\responseSubset$-equilibrium.  Suppose that
  for every agent $i$, any strategy profile $\stratpx$ that attains
  $\util_i^\star(\stratp)$ has that for all $t$,
\begin{equation}\label{e:sameAction}
\P{\stratp_t^i(\belief_i, \action^{\neigh{i}}_{[0, t)}) \neq \stratpx_t^i(\belief_i, \action^{\neigh{i}}_{[0, t)}),(i,t,(\belief_i, \action^{\neigh{i}}_{[0, t)}))\in\responseSubset} = 0.
\end{equation}
Then $\stratp$ is an equilibrium.
\end{lemma}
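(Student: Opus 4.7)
The plan is a short contradiction argument. Suppose $\stratp$ is not an equilibrium; then there exists an agent $i$ with $\util_i^\star(\stratp) > \util_i(\stratp)$. A compactness-and-continuity argument for agent $i$'s mixed-strategy space (in the spirit of Lemma~\ref{thm:finite-strat-topo} and the existence proof of Lemma~\ref{thm:equi-exists}, applied with only $i$'s strategy varying and all other $\strat^j$ held fixed) guarantees that this supremum is attained by some profile $\stratpx$ with $\stratx^j = \strat^j$ for $j \neq i$. By the hypothesis of the lemma, this optimizer satisfies
\begin{align*}
\P{\strat^i_t(\belief_i,\hist{i}{t}) \neq \stratx^i_t(\belief_i,\hist{i}{t}),\; (i,t,(\belief_i,\hist{i}{t})) \in \responseSubset} = 0
\end{align*}
for every $t$, so, since $\stratp$ is $q_\responseSubset$-forced, $\stratx^i$ itself already coincides with $q_\responseSubset$ almost surely on $\responseSubset$ under the law induced by $\stratpx$.

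I would then construct a $q_\responseSubset$-forced modification $\widetilde{\stratpx}$ by replacing agent $i$'s strategy with
\begin{align*}
\widetilde{\stratx}^i_t(b,a) =
\begin{cases}
q_\responseSubset(i,t,(b,a)), & (i,t,(b,a)) \in \responseSubset, \\
\stratx^i_t(b,a), & \text{otherwise},
\end{cases}
\end{align*}
and leaving $\stratx^j$ unchanged for $j \neq i$. The history-closure of $\responseSubset$ makes this assignment consistent across times, and by construction $\widetilde{\stratpx}$ is $q_\responseSubset$-forced.

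The core step is to show $\util_i(\widetilde{\stratpx}) = \util_i(\stratpx)$. I would couple the two games by sharing the state $S$, the private signals $\psignal_\cdot$, and the pure-strategy realizations of each mixed $\strat^j$, and then prove by induction on $t$ that the action vector $(\action^j_t)_{j\in V}$ agrees almost surely under $\stratpx$ and $\widetilde{\stratpx}$. Agents $j \neq i$ play identical strategies and, by the induction hypothesis, see identical histories, so their time-$t$ actions agree. For agent $i$, the induction hypothesis also gives $\hist{i}{t}$ equal a.s.\ under both profiles; on $\responseSubset$ the displayed hypothesis then forces $\stratx^i_t = q_\responseSubset = \widetilde{\stratx}^i_t$ almost surely, while off $\responseSubset$ the two strategies are identical by construction. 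Hence $\util_{i,t}(\widetilde{\stratpx}) = \util_{i,t}(\stratpx)$ for every $t$, so $\util_i(\widetilde{\stratpx}) = \util_i(\stratpx) = \util_i^\star(\stratp) > \util_i(\stratp)$. Since $\widetilde{\stratpx}$ is $q_\responseSubset$-forced and differs from $\stratp$ only in coordinate $i$, this contradicts the $q_\responseSubset$-equilibrium property of $\stratp$.

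The main obstacle is the bookkeeping in the coupling: the hypothesis~(\ref{e:sameAction}) is a statement about the law of $\stratpx$, and we must transport it to the law of $\widetilde{\stratpx}$. The induction does exactly this, because the event in~(\ref{e:sameAction}) depends only on $\belief_i$, $\hist{i}{t}$, and the pure-strategy realization of $\strat^i$, all of which coincide almost surely under the two coupled profiles. A secondary subtle point is justifying attainment of the supremum $\util_i^\star(\stratp)$ when $V$ is infinite; this follows from the same compactness/continuity machinery developed earlier, since for fixed $\strat^{-i}$ each $\util_{i,t}$ depends only on finitely many agents' strategies and is weakly continuous in $\strat^i$, and the discounted sum converges uniformly.
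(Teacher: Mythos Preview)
Your proof is correct and follows the same contradiction scheme as the paper's: use compactness to attain $\util_i^\star(\stratp)$ at some $\stratpx$, observe via the hypothesis that the optimizer already agrees with the forced moves almost surely, and conclude a contradiction with the $q_\responseSubset$-equilibrium property. The paper's own argument is terser and simply asserts that $\stratpx$ ``is $q_\responseSubset$-forced,'' whereas you spell out the modification $\widetilde{\stratpx}$ and the coupling that shows $\util_i(\widetilde{\stratpx})=\util_i(\stratpx)$; this is exactly the bookkeeping the paper suppresses.
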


\begin{proof}
  If $\stratp$ is not an equilibrium then by compactness there exists
  a strategy profile for $\stratpx$ that attains $\util_i^\star$ and
  differs from $\stratp$ only for agent $i$.  By
  equation~\eqref{e:sameAction} this implies that agent $i$ following
  $\stratpx$ must take the same actions almost surely as if they were
  following $\stratp$ until the end of the forced moves.  Hence it is
  $q_\responseSubset$-forced and so $\stratpx$ is a
  $q_\responseSubset$-equilibrium.  It follows that $i$ cannot
  increase the expected utility of $\stratp$, which is therefore an
  equilibrium.
\end{proof}

In order to show that every agent follows the forced moves almost
surely we now give a lemma which gives a sufficient condition for an
agent to act myopically, according to her posterior distribution.  For
an equilibrium strategy profile $\stratp$ let
$\stratp^\dagger_{i,t,a}$ be the strategy profile where the agents
follow $\stratp$ except that if agent $i$ has $a=(\belief_i,
\action^{\neigh{i}}_{[0, t)})$
then from time $t$ onwards agent $i$ acts myopically, taking action
$\bestres^i_{t'}(G,\stratp^\dagger_{i,t,a})$ for time $t'\geq t$.  We denote
\[
Y_\ell=Y_\ell(i,t,a):=\mathbb{E}\Bigg[\left|\CondP{S=1}{\cF^i_{t+\ell}(G,\stratp^\dagger_{i,t,a})}
    - 1/2\right|\;\;\Bigg\vert\;\cF^i_t,a=(\belief_i, \action^{\neigh{i}}_{[0, t)})\Bigg].
\]
We will show that the following are sufficient conditions for agent
$i$ to act myopically.  For $\ell\in\{1,2,3\}$ we set
$\cD_{\ell}=\left\{2Y_0 > \frac{\disc^2 (\tfrac12 -Y_{\ell-1})
  }{1-\disc}\right\}$ and we set
\[
\cD_4=\left\{2Y_0 > \disc^2 (\tfrac12 -Y_2) + \frac{\disc^3 (\tfrac12 -Y_3) }{1-\disc}\right\}.
\]
Since $\stratp$ and $\stratp^\dagger_{i,t,a}$ are the same up to time $t-1$ we have that $\cF^i_t(G,\stratp)$ is equal to $\cF^i_t(G,\stratp^\dagger_{i,t,a})$.
As $Y_\ell$ is the expectation of a submartingale it is increasing. Hence, 
after rearranging we see that $\cD_1\subseteq \cD_2\subseteq \cD_3 \subseteq \cD_4$.

\begin{lemma}\label{l:myopicCond}
Suppose that for strategy profile $\stratp$ agent $i$ has an optimal response, such that for any $\stratpx$ such that $\stratx^j=\strat^j$ for all $j \neq i$ in $V$ then $\util_i(\stratp) \geq \util_i(\stratpx)$.
Then for any $t$,
\[
\P{\action^i_t(G,\stratp) \neq \bestres^i_t, \cD_1\cup \cD_2 \cup \cD_3 \cup \cD_4}=0,
\]
that is, agent $i$ acts myopically at time $t$ under $\stratp$ almost surely, on the event $\cD_1\cup \cD_2 \cup \cD_3 \cup \cD_4$.
\end{lemma}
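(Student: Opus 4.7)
The plan is to exploit the chain of inclusions $\cD_1 \subseteq \cD_2 \subseteq \cD_3 \subseteq \cD_4$ (noted just before the lemma) and to establish the conclusion on the single event $\cD_4$, by showing that conditionally on any realization $\info^i_t = a$ lying in $\cD_4$, the optimal time-$t$ action is $\bestres^i_t$ strictly; the best-response hypothesis then forces $A^i_t = \bestres^i_t$ almost surely on $\cD_4$.

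For each type $a$ introduce the hypothetical optimal values
\[
V_+(a) := \sup_{\strat'} \CondE{\util_i(\strat',\stratp)}{\info^i_t = a,\,A^i_t = \bestres^i_t}, \quad V_-(a) := \sup_{\strat'} \CondE{\util_i(\strat',\stratp)}{\info^i_t = a,\,A^i_t = 1-\bestres^i_t},
\]
where each supremum ranges over agent-$i$ strategies $\strat'$ (with $\strat^j$ held fixed for $j \neq i$) that respect the prescribed action at time $t$. For a lower bound on $V_+(a)$ use the full-myopic continuation $\stratp^\dagger_{i,t,a}$, which yields $V_+(a) \geq V^\dagger(a) := (1-\disc)\disc^t\sum_{\ell \geq 0}\disc^\ell(\tfrac12 + Y_\ell(a))$. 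For an upper bound on $V_-(a)$ the decisive observation is at time $t+1$: under either hypothetical choice of $A^i_t$, the prescribed action is a (different) $\info^i_t$-measurable value and therefore independent of $S$ given $\info^i_t$; moreover, for $j \in \partial i \setminus \{i\}$ the action $A^j_t$ is a function of $\info^j_t$, which does not include $A^i_t$. Consequently the conditional distribution of $\CondP{S=1}{\info^i_{t+1}}$ given $\info^i_t = a$ agrees in the two hypothetical branches, and the maximum utility attainable at time $t+1$ equals $\tfrac12 + Y_1(a)$ in both. Bounding each subsequent per-period utility by $1$ then gives
\[
V_+(a) - V_-(a) \;\geq\; (1-\disc)\disc^t\,\biggl(2Y_0(a) - \sum_{\ell \geq 2}\disc^\ell\!\bigl(\tfrac12 - Y_\ell(a)\bigr)\biggr).
\]

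Splitting the sum at $\ell = 2$ and invoking the submartingale monotonicity $Y_\ell \geq Y_3$ for $\ell \geq 3$ bounds the subtracted quantity by $\disc^2(\tfrac12 - Y_2(a)) + \disc^3(\tfrac12 - Y_3(a))/(1-\disc)$, which is exactly the right-hand side of the inequality defining $\cD_4$. Therefore $V_+(a) > V_-(a)$ whenever $a \in \cD_4$.

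To conclude, suppose for contradiction that $\P{A^i_t(G,\stratp) \neq \bestres^i_t,\,\cD_4} > 0$ and set $q(a) := \CondP{A^i_t \neq \bestres^i_t}{\info^i_t = a}$. On the positive-measure event $\{q > 0\} \cap \cD_4$,
\[
\CondE{\util_i(\stratp)}{\info^i_t = a} \;\leq\; q(a)V_-(a) + (1-q(a))V_+(a) \;<\; V_+(a).
\]
Constructing $\stratpx$ that agrees with $\stratp$ off this event, plays $\bestres^i_t$ at $t$ on it, and follows an (approximately) optimal continuation thereafter raises the conditional expected utility to within an arbitrarily small slack of $V_+(a)$ pointwise on the event; integrating the strict pointwise improvement against the law of $\info^i_t$ yields $\util_i(\stratpx) > \util_i(\stratp)$, contradicting the best-response hypothesis. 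The inclusion chain $\cD_1 \subseteq \cD_2 \subseteq \cD_3 \subseteq \cD_4$ then extends the conclusion to the stated union. I anticipate the main obstacle to be the time-$(t+1)$ identity above: although it reduces to the two conditional-independence observations, care is needed because the probability measure on $\info^i_{t+1}$ implicitly depends on the prescribed continuation strategy, and the hypothetical action $A^i_t$ overrides rather than is determined by the strategy.
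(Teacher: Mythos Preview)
Your proposal is correct and follows essentially the same route as the paper: both arguments lower-bound the myopic-continuation utility by $(1-\disc)\disc^t\sum_{\ell\ge0}\disc^\ell(\tfrac12+Y_\ell)$, upper-bound the non-myopic branch using the key observation that $\info^i_{t+1}$ (hence the attainable $\tfrac12+Y_1$ at time $t+1$) is unaffected by $i$'s time-$t$ action, and then compare on $\cD_4$ via the submartingale monotonicity of $Y_\ell$. Your $V_+/V_-$ framing and the explicit handling of the mixed case via $q(a)$ are slightly more formal than the paper's direct computation, but the substance is the same.
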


\begin{proof}
  If agent $i$ acts under $\stratp^\dagger_{i,t,a}$ then her expected
  utility from time $t$ onwards given $a$ is
\begin{align*}
\util_{i,t,a}(\stratp^\dagger_{i,t,a}) &:=
(1-\disc)\sum_{t'=t}^\infty\disc^{t'}
\CondE{\P{\action^i_{t'}(G,\stratp^\dagger_{i,t}) =
    S}}{\cF^i_t,a=(\belief_i, \action^{\neigh{i}}_{[0, t)})}\\
&\geq (1-\disc)\disc^t\bigg(\tfrac12+ Y_0  + \disc \left(\tfrac12+ Y_1\right) + \disc^2 \left(\tfrac12+ Y_2\right) + \frac{\disc^3}{1-\disc}\left(\tfrac12+ Y_3 \right) \bigg)
\end{align*}
under $\stratp^\dagger_{i,t,a}$.  Now assume that the action of agent
$i$ at time $t$ under $\stratp$ is not the myopic choice. Then her
expected utility is at most
\begin{align*}
\util_{i,t,a}(\stratp) &\leq (1-\disc)\disc^t\bigg(\frac12 -
\left|\CondP{S=1}{\cF^i_t,a=(\belief_i, \action^{\neigh{i}}_{[0, t)})} - \tfrac12\right| \\
&\qquad + \disc \CondE{\P{\action^i_{t+1}(G,\stratp) = S}}{\cF^i_t,a=(\belief_i, \action^{\neigh{i}}_{[0, t)})} +\frac{\disc^2}{1-\disc} \bigg).
\end{align*}

We note that at time $t+1$ the information available about $S$ is the same under both strategies since the only difference  is the choice of action by agent $i$ at time $t$, hence as $i$ takes the optimal action under $\stratp^\dagger$,
\begin{align*}
\tfrac12+ Y_1=\CondE{\P{\action^i_{t+1}(G,\stratp^\dagger_{i,t,a}) = S}}{\cF^i_t,a=(\belief_i, \action^{\neigh{i}}_{[0, t)})} \geq \CondE{\P{\action^i_{t+1}(G,\stratp) = S}}{\cF^i_t,a=(\belief_i, \action^{\neigh{i}}_{[0, t)})}.
\end{align*}
Since $\stratp$ is optimal for $i$ we have that
\begin{align}\label{e:utilComparison}
0\geq\util_{i,t,a}(\stratp^\dagger_{i,t,a}) - \util_{i,t}(\stratp)
\geq (1-\disc)\disc^t\left( 2 Y_0 - \disc^2 \left(\tfrac12-Y_2\right) - \frac{\disc^3}{1-\disc} \left(\tfrac12-Y_3\right)\right).
\end{align}
Condition~\eqref{e:utilComparison} does not hold under $\cD_4$ so $\P{\action^i_t(G,\stratp) \neq \bestres^i_t,\cD_1\cup \cD_2 \cup \cD_3 \cup \cD_4}=0$.
\end{proof}


\subsection{The royal family}

In the main theorem we require that the graph $G$ not only be strongly
connected, but also $L$-connected and have bounded out-degrees, which
are local conditions.  In the following example the graph is strongly
connected, has bounded out-degrees, but is not $L$-connected. We show
that for bounded private beliefs asymptotic learning does not occur in
all equilibria\footnote{We draw on Bala and
  Goyal's~\cite{BalaGoyal:96} {\em royal family} graph.}.

\begin{figure}[h]
  \centering
  \includegraphics[scale=0.8]{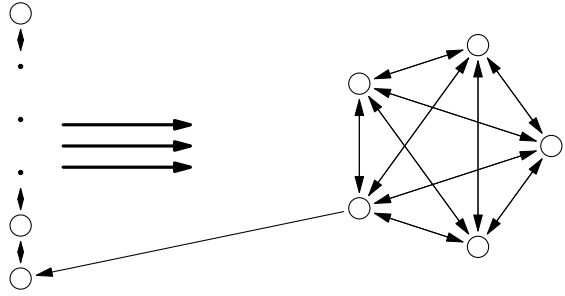}
  \caption{\label{fig:royal-family-app} The Royal Family. Each member of
    the public (on the left), observes each royal (on the right), as
    well as her next door neighbors. The royals observe each other,
    and one royal observes one member of the public. }
\end{figure}

Consider the following graph (Figure~\ref{fig:royal-family-app}). The
vertex set is comprised of two groups of agents: a ``royal family''
clique of $R$ agents who all observe each other, and $n \in \N \cup
\{\infty\}$ agents - the ``public'' - who are connected in an
undirected chain, and in addition can all observe all the agents in
the royal family. Finally, a single member of the royal family
observes one of the public, so that the graph is strongly connected.

We choose that $\mu_0$ and $\mu_1$ so that $\P{Z^i_0 \in (1,2)\cup
  (-2,-1)}=1$ and set the forced moves so that all agents act
myopically at time 1. By Lemma~\ref{l:forcedEqui-exists} we can extend
this to a forced equilibrium $\stratp$.  By Lemma~\ref{l:fullEquil} it
is sufficient to show that no agent can achieve their optimum without
choosing the myopic action in the first round.  By our choice of
$\mu_0$ and $\mu_1$ we have that
\[
\left|\CondP{S=1}{\cF^i_0}-\frac12\right|=\frac{e^{|Z^i_0|}}{1+e^{|Z^i_0|}}-\frac12 \geq \frac{e}{1+e}-\frac12 \geq \frac15.
\]
Hence in the notation of Lemma~\ref{l:myopicCond} we have that $Y_0\geq \frac15$ when $t=0$ for all $i$ and $a$ almost surely.  Moreover, after the first round all agents see the royal family and can combine their information.  Since the signals are bounded it follows that for some $c=c(\mu_0,\mu_1)>0$, independent or $R$ and $n$
\[
\CondE{\frac12 - \left|\CondP{S=1}{\cF^i_1}-\frac12\right|}{\cF^i_0} \leq e^{-c R}.
\]
Hence if $R$ is a large constant $\cD_2$ holds so by
Lemma~\ref{l:myopicCond} if an agent is to attain her maximal expected
utility given the actions of the other agents, she must act myopically
almost surely at time 0.  Thus $\stratp$ is an equilibrium.

Let $\cJ$ denote the event that all agents in the royal family have a signal favoring state 1.  On this event under $\stratp$ all agents in the royal family choose action 1 at time 0 and this is observed by all the agents so $\cJ\in\cF^i_1$ for all $i$.  Since agents observe at most one other agent this signal overwhelms their other information and so
\[
\CondP{S=1}{\cF^i_1,\cJ} \geq 1- e^{-c R},
\]
for all $i\in V$.  Thus if $R$ is a large constant $\cD_1$ holds for
all the agents at time 1 so by Lemma~\ref{l:myopicCond} they all act
myopically and choose action 1 at time 1.  Since $\cJ\in\cF^i_1$ they
also all knew this was what would happen so gain no extra information.
Iterating this argument we see that all agents choose 1 in all
subsequent rounds.  However, $\P{\cJ,S=0}\geq e^{-c' R}$ where $c'$ is
independent of $R$ and $n$.  Hence as we let $n$ tend to infinity the
probability of learning does not tend to 1, and when $n$ equals
infinity the probability of learning does not equal 1.

\subsection{The mad king}

More surprising is that there exist \emph{undirected} (i.e.,
$1$-connected) graphs with equilibria where asymptotic learning fails;
These graphs have unbounded out-degrees. Note that in the myopic case
learning is achieved on these graphs~\cite{mossel2012asymptotic}, and
so this is an example in which strategic behavior impedes learning.

In this example we consider a finite graph which includes 5 classes of
agents.  There is a king, labeled $u$, and a regent labeled $v$.  The
court consists of $R_C$ agents and the bureaucracy of $R_B$ agents.  The
remaining $n$ are the people. Note again that the graph is
undirected.
\begin{itemize}
\item The king  is connected to the regent, the court and the people.
\item The regent is connected to the king and to the bureaucracy.
\item The members of the court are each connected only to the king.
\item The members of the people are each connected only to the king.
\item The members of the bureaucracy are each connected only to the regent.
\end{itemize}
See Figure~\ref{fig:mad-royal-family-app}.
\begin{figure}[h]
  \centering
  \includegraphics[scale=1]{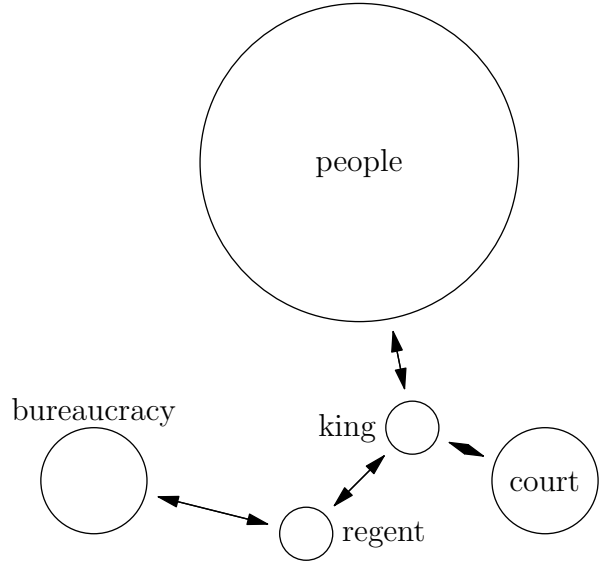}
  \caption{\label{fig:mad-royal-family-app} The mad king.}
\end{figure}

As in the previous example we will describe some initial forced equilibrium
and then appeal to existence results to extend it to an equilibrium.
We suppose that $\mu_0$ and $\mu_1$ are such that $\P{ Z^i_0 \in
  (1,1+\epsilon)\cup (-\sqrt{7},-\sqrt{7}+\epsilon)}=1$ where
$\epsilon$ is some very small positive constant, and will choose
$R_C,\disc$ and $R_B$ so that $e^{R_C}$ is much smaller than
$\frac1{1-\disc}$ which in turn will be much smaller than $ R_B$:
\begin{align*}
  e^{R_C} \ll \frac1{1-\disc} \ll R_B.
\end{align*}
The equilibrium we describe will involve the people being forced to
choose action 0 in rounds 0 and 1, as otherwise the king ``punishes''
them by withholding his information. As an incentive to comply he
offers them the opinion of his court and, later, of his
bureaucracy. While the opinion of the bureaucracy is correct with high
probability, it is still bounded, and so, even as the size of the
public tends to infinity, the probability of learning stays bounded
away from one.

We now describe a series of forced moves for the agents, fixing
$\delta>0$ to be some small constant.

\begin{itemize}
\item The regent acts myopically at time 0.  If for some state $s$
  $\CondP{S=s}{\cF^{v}_1} \geq 1 - e^{-\delta R_B}$ then the regent
  chooses states $s$ in round 1 and all future rounds, otherwise his
  moves are not forced.
\item The king acts myopically in rounds 0 and 1 unless one or more of
  the people chooses action 1 in round 0 or 1, in which case he chooses
  action 1 in all future rounds.  Otherwise if $s$ is the action of
  the regent at time 1 then from time 2 the king takes action $s$
  until the regent deviates and chooses another action.
\item The members of the bureaucracy act myopically in round 0 and 1.
  If $s$ is the action of the regent at time 1 then from time 2 the
  members of the bureaucracy take action $s$ until the regent deviates
  and chooses another action.
\item The members of the court act myopically in round 0 and 1.  At
  time 2 they copy the action of the king from time 1.  If $s$ is the
  action of the king at time 2 then from time 3 the members of the
  bureaucracy take action $s$ until the king deviates and chooses
  another action.
\item The people choose action 0 in rounds 1 and 2.  At time 2 they
  copy the action of the king from time 1.  If $s$ is the action of
  the king at time 2 then from time 3 the people take action $s$ until
  the king deviates and chooses another action.
\end{itemize}

By Lemma~\ref{l:forcedEqui-exists} this can be extended to a forced
equilibrium strategy $\stratp$.  We will show that this is also an
equilibrium strategy in the unrestricted game by establishing
equation~\eqref{e:sameAction}.  In what follows when we say acts optimally or
in an optimal strategy we mean for an agent with respect to the
actions of the other agents under $\stratp$.

First consider the regent.  By our choice of $\mu_0,\mu_1$ we have
that $Y_0>\frac15$.  Let $\cJ=\cJ_0\cup \cJ_1$ where $\cJ_s$ denotes
the event that $\CondP{S=s}{\cF^{v}_1} \geq 1 - e^{-\delta R_B}$.
Since the regent views all the myopic actions of the bureaucracy he
knows the correct value of $S$ except with probability exponentially
small in $R_B$ so for $s\in\{0,1\}$, if $\delta>0$ is small enough,
\[
\CondP{\cJ_s}{S=s} \geq 1 - e^{-\delta R_B}
\]
and hence for large enough $R_B$ we have that $Y_1 \geq \frac12 -
2e^{-\delta R_B}$ which implies that $\cD_2$ holds at time 1.  By
Lemma~\ref{l:myopicCond} in any optimal strategy the regent acts
myopically in round 0, and so follows the forced move.  On the event
$\cJ_s$ the regent follows $s$ in all future steps.  At time 1
condition $\cD_1$ holds so again the regent follows the forced move in
any optimal strategy.  We next claim that for large enough $R_B$
\begin{equation}\label{e:regentKnowledge}
\CondP{\CondP{S=s}{\cF^{v}_2}\geq 1 - e^{-\delta R_B/2}}{\cJ_s} =1
\end{equation}
Assuming \eqref{e:regentKnowledge} holds then condition $\cD_1$ again
holds so the regent must choose $s$ at time 2 in any optimal strategy.
By construction of the forced moves from time 2 onwards the king and
bureaucracy simply imitate the regent and so he receives no further
information from time 2 onwards.  Thus again using
Lemma~\ref{l:myopicCond} we see that under any optimal strategy the
regent must follow his forced moves.

To establish that the regent follows the forced moves in any optimal
strategy it remains to show that Condition~\eqref{e:regentKnowledge}
holds.  The information available to the regent at time 2 includes the
actions of the king and the bureaucracy at times 0 and 1.  Consider
the actions of the bureaucracy at times 0 and 1.  At time 0 they
follows their initial signal.  At time 1 they also learn the initial
action of the regent who acts myopically. By our assumption on $\mu_0$
and $\mu_1$ that $\P{Z^i_0 \in (1,1+\epsilon)\cup
  (-\sqrt{7},-\sqrt{7}+\epsilon)}=1$, an initial signal towards 0 is
much stronger than an initial signal towards 1, since whenever $Z$ is
negative it is at most $-\sqrt{7}+\epsilon$.  For $i$, a member of the
bureaucracy, we have that $Z_1^i \geq 2$ if both $i$ and the regent
choose action 1 at time 1.  However, if either $i$ or the regent
choose action 0 at time 1 then $Z^i_t \leq -\sqrt{7}+\epsilon + 1 +
\epsilon < -1$.  Since the actions of $i$ and the regent at time 0 are
known to the regent at time 1, he gains no extra information at time 2
from his observation of $i$ at time 1 since he can correctly predict
his action.

The information the regent has available at time 2 is thus his information from time 1 together with the information from observing the king.  The information available to the king is a function of his initial signal and that of the regent and the court.  Since this is only $R_C+1$ members and we choose $R_B$ to be much larger than $R_C$ it is insignificant compared to the information the regent observed from the court at time 0 and hence \eqref{e:regentKnowledge} holds.  Thus, there is no optimal strategy for the regent that deviates from the forced moves.

As we noted above the members of the bureaucracy have $|Z_0^i|,|Z_1^i|  \geq 1$ almost surely.  For $t \geq 1$ let $\cM_{s,t}$ denote the event that the regent chose action $s$ for times 1 up to $t$.  As argued above, $\cJ_s \subset \cM_{s,t}$ for all $t$ under $\stratp$.  This analysis holds even if a single member of the bureaucracy adopts a different strategy as we have taken $R_B$ to be large so this change is insignificant.  Given that  $\cM_{s,t}$ holds, the only additional information available to agent $i$, a member of the bureaucracy, is their original signal and the action at time 1 of the regent.  Thus
\[
\CondP{S=s}{\cF^i_t,\cM_{s,t}} \geq 1 - e^{-\delta R_B/2}.
\]
It follows then by Lemma~\ref{l:myopicCond} that acting myopically at
times 0 and 1 and then imitating the regent until he changes his
action is the sole optimal strategy for a member of the bureaucracy.

Next consider the forced responses of the king.  Since under $\stratp$ the people always choose action 0 at times 0 and 1, the rule forcing the king to choose action 1 after seeing a 1 from the people is never invoked.  We claim that, provided $R_B$ is taken to be sufficiently large, that the king acts myopically at times 0 and 1.  At time 0 the posterior probability of $S=1$ is bounded away from 1/2 so $Y_0$ is bounded away from 0 while $\frac12-Y_2 \leq 2e^{-\delta R_B/2}$ so by Lemma~\ref{l:myopicCond} the king must act myopically.  Similarly at time 1 since our choice of $\mu_0$ and $\mu_1$ to have their log-likelihood ratio concentrated around either 1 or $-\sqrt{7}$ a posterior  calculation gives that,
\[
|Z_1^u - \#\{i\in \neigh{u} :\action_0^i(\stratp)=1\} +
\sqrt{7}\#\{i\in \neigh{u} :\action_0^i=0\}| \leq \epsilon (2+R_C)
\]
and thus for some $\epsilon(R_C)>0$ sufficiently small we can find an
$\epsilon'(\epsilon,R_C)>0$ such that
$Y_0=|\frac{e^{Z_1^u}}{1+e^{Z_1^u}}-\frac12|> \epsilon'$.  Since we
again have that $\frac12-Y_1 \leq 2e^{-\delta R_B/2}$ taking
$R_B=R_B(\epsilon,R_C)$ to be sufficiently large $\cB_2$ holds and so
the king must act myopically.  It remains to see that the king should
imitate the regent from time 2 onwards unless the regent subsequently
changes his action in any optimal strategy.  This follows from a
similar analysis to the case of the members of the bureaucracy so we
omit it.

We next move to an agent $i$, a member of the court.  At time 0 the agent has $Y_0>\frac{e}{1+e}-\frac12>\frac15$.  Agent $i$ at time 1 views the action of the king who has in turn viewed the actions of the whole court at time 0 so $\frac12-Y_2 \leq e^{-c R_C}$. At time 2 the agent sees the action of the king who has imitated the action of the regent at time 1 so $\frac12-Y_3 \leq e^{-\delta R_B/2}$.  Hence provided that $R_C$ is sufficiently large and $R_B(R_C,\disc)$ is sufficiently large then $\cB_4$ holds and $i$ must act myopically at time 0.  The information of a member of the court at time 1 is a combination of their initial signal and the action of the king at time 1.  Similarly to a member of the bureaucracy, by the choice of $\mu_0$ and $\mu_1$ we have that $|Z_1^i| \geq 1$ and so $Y_0>\frac15$.  Also $\frac12-Y_2 \leq e^{-\delta R_B/2}$ since this includes the information from the action of the regent at time 1.  Thus $\cB_3$ holds and $i$ must act myopically at time 1.  At time 2 agents $i$ knows the action of the king from round 2 so $Y_0 \geq \frac12 - e^{-c R_C}$ and $\frac12-Y_1 \leq e^{-\delta R_B/2}$ so $\cB_2$ holds and $i$ must act myopically at time 2.  Finally from time 3 onwards agent $i$ knows the action of the regent at time 1.  As with the king and bureaucracy this will not be changed unless $i$ receives new information, that is the king changes his action sometime after time 2.  Thus any optimal strategy of $i$ follows the forced moves.

This finally leaves the people.  Let agent $i$ be one of the
people. We first check that it is always better for them to wait and
just say 0 in rounds 0 and 1 in order to get more information from the
king, their only source.  If agent $i$ chooses action 1 at time 0 then
the total information it receives is a function of the initial signals
of $i$ and the king.  Thus, since the signals are uniformly bounded,
even if the agent knew the signals exactly we would have that for some
$c'(\mu_0,\mu_1)$ that the expected utility from such a strategy is at
most $1-e^{-2c'}$.  If an agent acts with 0 at time 0 but 1 at time
1, she can potentially receive information from the initial signals of
the king, court and regent as well as her own; still, the optimal
expected utility even using all of this information is at most
$1-e^{-c'(R_C+3)}$.  Consider instead the expected utility following
the forced moves. On the event $\cJ$ agent $i$ will have expected
utility at least $\disc^3 (1-e^{-\delta R_B})$ which is greater than
$1-e^{-c'(R_C+3)}$ provided that $\disc$ is sufficiently close to 1
and $R_B$ is sufficiently large.  Thus agent $i$ must choose action 0
at times 0 and 1 in any optimal strategy.  The analysis of rounds 2
and onwards follows similarly to the court and thus any optimal
strategy of $i$ follows all the forced moves.

This exhaustively shows that there is no alternative optimal strategy for any of the agents which differs from the forced moves.  Thus $\stratp$ is an equilibrium.  However, on the event $\cJ_1$ all the agents actions converge to 1. However,  $\P{\cJ,S=0}\geq e^{-c'' R_B}>0$ where $c''$ is independent of $R_C,R_B,\disc$ and $n$.  Hence, as we let $n$ tend to infinity the probability of learning does not tend to 1.

\end{document}